\tikzset{nd/.style={rectangle,draw,fill=gray!20,rounded corners=1mm,inner ysep=1.5pt,inner xsep=4pt}}
\lstdefinelanguage{SQLT}{
  language     = SQL,
  morekeywords = {IS},
}
\newcommand{\triple}[3]{#1 \ \ #2 \ \ #3}
\newtheorem{theorem}{Theorem}
\newtheorem{proposition}[theorem]{Proposition}
\newtheorem{corollary}[theorem]{Corollary}
\theoremstyle{definition}
\newtheorem{example}[theorem]{Example}
\newcommand{\owlql}{OWL\,2\,QL\xspace}
\newcommand{\myIRI}{\ensuremath{\textsf{I}}}
\newcommand{\myBNK}{\ensuremath{\textsf{B}}}
\newcommand{\myLIT}{\ensuremath{\textsf{L}}}
\newcommand{\myVAR}{\ensuremath{\textsf{V}}}
\newcommand{\sDOM}{\textsf{C}}
\newcommand{\join}{\textsc{Join}}
\newcommand{\project}{\textsc{Proj}}
\newcommand{\leftjoin}{\textsc{Opt}}
\newcommand{\minus}{\textsc{Minus}}
\newcommand{\union}{\textsc{Union}}
\newcommand{\filter}{\textsc{Filter}}
\newcommand{\bind}{\textsc{Bind}}
\newcommand{\distinct}{\textsc{Dist}}
\newcommand{\var}{\textit{var}}
\newcommand{\dom}{\textit{dom}}
\newcommand{\tuple}{t} 
\newcommand{\sANS}[2]{\llbracket #1\rrbracket_{#2}}
\newcommand{\Null}{\textit{null}\xspace}
\newcommand{\isNull}{\textit{isNull}}
\newcommand{\smerge}{\oplus}
\newcommand{\coalesce}{\textit{coalesce}}
\newcommand{\compatible}{\textit{comp}}
\newcommand{\nullify}{\textit{nullify}}
\newcommand{\tra}{\boldsymbol{\tau}}
\newcommand{\extV}[1][V]{\textit{ext}_{#1}}
\def\ojoin{\setbox0=\hbox{$\Join$}%
  \rule[0ex]{.25em}{.4pt}\llap{\rule[0.8\ht0]{.25em}{.4pt}}}
\newcommand{\LJoin}{\mathbin{\ojoin\mkern-7.6mu\Join}}
\newcommand{\urione}{\textsc{iri}_1}
\newcommand{\uritwo}{\textsc{iri}_2}
\newcommand{\notnull}[1]{\neg\isNull(#1)}
\newcommand{\proj}[1]{\pi_{#1}}
\begin{document}

\title{Efficient Handling of SPARQL OPTIONAL for OBDA (Extended Version)\thanks{This is an extended version of the ISWC 2018 paper.}}

\author{G. Xiao$^1$ \and R. Kontchakov$^2$ \and
  B. Cogrel$^1$ \and D. Calvanese$^1$ \and E.
  Botoeva$^1$}
  
\date{\normalsize$^1$KRDB Research Centre, Free University of Bozen-Bolzano, Italy\\
  \textit{lastname}\url{@inf.unibz.it}\\
  $^2$Dept.\  of Computer Science and Inf. Sys., Birkbeck, University of London, UK\\
  \url{roman@dcs.bbk.ac.uk}
}

\maketitle

\begin{abstract}
  \texttt{OPTIONAL} is a key feature in SPARQL for dealing with
  missing information. While this operator is used extensively,
  it is also known for its complexity, which can make 
  efficient evaluation of queries with \texttt{OPTIONAL} challenging.
  We tackle this problem
  in the Ontology-Based Data Access (OBDA) setting, where the
  data is stored in a SQL relational database and exposed as a
  virtual RDF graph by means of an R2RML mapping.  We start with a
  succinct translation of a SPARQL fragment into SQL. It fully respects bag
  semantics and three-valued logic and
  relies on the extensive use of the \texttt{LEFT} \texttt{JOIN}
  operator and \texttt{COALESCE} function.  We then propose
  optimisation techniques for reducing the size and improving the
  structure of generated SQL queries.  Our optimisations capture
  interactions between \texttt{JOIN}, \texttt{LEFT} \texttt{JOIN},
  \texttt{COALESCE} and integrity constraints such as
  attribute nullability, uniqueness and foreign key
  constraints.  Finally, we empirically verify effectiveness of our
  techniques on the BSBM OBDA benchmark.
\end{abstract}


\section{Introduction}

Ontology-Based Data Access (OBDA) aims at easing the access to database content
by bridging the semantic gap between \emph{information needs} (what users
want to know) and their formulation as executable queries (typically in SQL).
This approach hides the complexity of the database structure from users by
providing them with a high-level representation of the data as an RDF graph.
The RDF graph can be regarded as a view over the database defined by a
DB-to-RDF mapping (e.g., following the R2RML specification) and enriched by
means of an ontology~\cite{CCKK*17}.  Users can then formulate their
information needs directly as high-level SPARQL queries over the RDF graph.
We focus on the standard OBDA setting, where the RDF graph is not
materialised (and is called a \emph{virtual RDF graph}), and the database is
relational and supports SQL~\cite{PLCD*08}.

To answer a SPARQL query, an OBDA system reformulates it into a SQL query,
to be evaluated by the DBMS.  In theory, such a SQL query can be obtained by
\begin{inparaenum}[\itshape (1)]
\item translating the SPARQL query into a relational algebra expression
  over the ternary relation \textit{triple} of the RDF graph,
  and then
\item replacing the occurrences of \textit{triple} by the matching definitions
  in the mapping; the latter step is called \emph{unfolding}.
\end{inparaenum}
We note that, in general, step~(1) also includes rewriting 
the user query with respect to the given (\owlql) ontology~\cite{CDLLR07,KRRXZ14}; we, however,
assume that the query is already rewritten and, for efficiency reasons, 
the mapping is saturated; for details,
see~\cite{KRRXZ14,SeAM14}.

SPARQL joins are naturally translated into (\texttt{INNER}) \texttt{JOIN}s in
SQL~\cite{Cyga05}.  However, in contrast to expert-written SQL queries,
there typically is a \emph{high
 margin for optimisation} in naively translated and unfolded queries.  Indeed, since SPARQL, unlike SQL, is
based on a single ternary relation,
queries usually contain many more joins than SQL queries for the same information need;
this suggests that many of
the \texttt{JOIN}s in unfolded queries are redundant and could be eliminated.
In fact, the semantic query optimisation techniques such as self-join
elimination~\cite{ChGM90} can reduce the number of \texttt{INNER
 JOIN}s~\cite{RoKZ13,PrCS14}.

We are interested in SPARQL queries containing the \texttt{OPTIONAL} operator
introduced to deal with \emph{missing} information, thus serving a similar
purpose~\cite{Cyga05} to the \texttt{LEFT} (\texttt{OUTER}) \texttt{JOIN} operator in relational
databases.
The graph pattern $P_1 \mathrel{\texttt{OPTIONAL}} P_2$ returns answers to
$P_1$ extended (if possible) by answers to $P_2$; when an answer to $P_1$ has
no match in $P_2$ (due to incompatible variable assignments), the variables
that occur only in $P_2$ remain \emph{unbound} (\texttt{LEFT} \texttt{JOIN}
extends a tuple without a match with \texttt{NULL}s).
The focus of this work is the efficient handling of queries with
\texttt{OPTIONAL} in the OBDA setting.  This problem is important in practice
because
\begin{inparaenum}[\itshape (a)]
\item \texttt{OPTIONAL} is very frequent in real SPARQL
  queries~\cite{PiVa11,AFMD11};
\item it is a source of computational complexity: query
  evaluation is \textsc{PSpace}-hard for the fragment with
  \texttt{OPTIONAL} alone~\cite{ScML10} (in contrast, e.g., to basic graph
  patterns with filters and projection, which are \textsc{NP}-complete);
\item unlike expert-written SQL queries, the SQL translations of SPARQL queries
  (e.g.,~\cite{ChLF09}) tend to have more \texttt{LEFT}  \texttt{JOIN}s with more
  complex structure, which DBMSs may fail to optimise well.
\end{inparaenum}
We now illustrate the difference in the structure with  an example.

\begin{example}
  \label{ex:email-pref-simple}
  Let \texttt{people} be a database relation composed of a primary key
  attribute \texttt{id}, a non-nullable attribute \texttt{fullName} and two
  nullable attributes, \texttt{workEmail} and \texttt{homeEmail}:\\[6pt]
  \centerline{\small\renewcommand{\arraystretch}{0.9}\renewcommand{\tabcolsep}{12pt}
    \begin{tabular}{crrr}
      \toprule
      \normalsize\underline{\texttt{id}}\hfill\mbox{} & \normalsize\texttt{fullName}\hfill\mbox{}
      & \normalsize\texttt{workEmail}\hfill\mbox{} 
      & \normalsize\texttt{homeEmail}\hfill\mbox{} \\
      \midrule
      1 & Peter Smith & peter@company.com & peter@perso.org \\
      2 & John Lang & {\small\texttt{NULL}} & joe@perso.org \\
      3 & Susan Mayer & susan@company.com &  \texttt{NULL} \\
      \bottomrule
    \end{tabular}}\\[6pt]
   Consider an information need to retrieve the names of people and their e-mail
   addresses if they are available, with the \emph{preference} given to work
   over personal e-mails. In standard SQL, the IT expert can express such
   a preference by means of the \texttt{COALESCE} function: e.g.,
   \texttt{COALESCE($v_1, v_2$)} returns $v_1$ if it is not \texttt{NULL} and
   $v_2$ otherwise.  The following SQL query retrieves the required names and
   e-mail addresses:
\begin{lstlisting}[language=SQLT]
SELECT fullName, COALESCE(workEmail, homeEmail) FROM people.
\end{lstlisting}
   The same information need could naturally be expressed in SPARQL:
\begin{lstlisting}[language=SPARQL]
SELECT ?n ?e { ?p :name ?n  OPTIONAL { ?p :workEmail ?e }
                            OPTIONAL { ?p :personalEmail ?e } }.
\end{lstlisting}
Intuitively, for each person \texttt{?p}, after evaluating the first
\texttt{OPTIONAL} operator, variable~\texttt{?e} is bound to the work e-mail if
possible, and left unbound otherwise.  In the former case, the second
\texttt{OPTIONAL} cannot extend the solution mapping further because all its
variables are already bound; in the latter case, the second \texttt{OPTIONAL}
tries to bind a personal e-mail to~\texttt{?e}.  See~\cite{Cyga05} for a discussion on a
similar query, which is weakly
well-designed~\cite{DBLP:conf/icdt/KaminskiK16}.

One can see that the two queries are in fact equivalent: the SQL query gives 
the same answers on
the \texttt{people} relation  as the SPARQL query on the RDF
graph that encodes the relation by using \texttt{id} to generate IRIs and
populating data properties \texttt{:name}, \texttt{:workEmail} and
\texttt{:personalEmail} by the non-\texttt{NULL} values of the respective
attributes.

However, the unfolding of the translation of the  SPARQL query  above
would produce two \texttt{LEFT} \texttt{OUTER} \texttt{JOIN}s, even with known simplifications 
(see, e.g., $Q_2$ in~\cite{ChLF09}):
\begin{lstlisting}[language=SQLT]
SELECT v3.fullName AS n, COALESCE(v3.workEmail,v4.homeEmail) AS e
FROM (SELECT v1.fullName, v1.id, v2.workEmail FROM people v1
LEFT JOIN people v2 ON v1.id=v2.id AND v2.workEmail IS NOT NULL) v3
LEFT JOIN people v4 ON v3.id=v4.id AND v4.homeEmail IS NOT NULL
          AND (v3.workEmail=v4.homeEmail OR v3.workEmail IS NULL),
\end{lstlisting}
which is unnecessarily complex (compared to the expert-written SQL query above). 
Observe that the last bracket is an example of a  \emph{compatibility filter} encoding compatibility of SPARQL solution mappings in SQL: it contains disjunction and \texttt{IS NULL}.
\qed
\end{example}

Example~\ref{ex:email-pref-simple} shows that SQL translations with
\texttt{LEFT} \texttt{JOIN}s can be simplified drastically.  In fact, the
problem of optimising \texttt{LEFT} \texttt{JOIN}s has been
investigated both in relational databases~\cite{GaRo97,RaoPZ04} and RDF
triplestores~\cite{ChLF09,atre15}. In the
database setting, \emph{reordering} of \texttt{OUTER} \texttt{JOIN}s has
been studied extensively because it is essential for efficient
query plans, but also challenging  as these operators are neither commutative
nor associative (unlike \texttt{INNER} \texttt{JOIN}s).
To perform a reordering, query planners typically rely on simple joining
conditions, in particular, on conditions that reject
\texttt{NULL}s and do not use \texttt{COALESCE}~\cite{GaRo97}.
However, the SPARQL-to-SQL translation produces
precisely the opposite of what database query planners expect:
\texttt{LEFT} \texttt{JOIN}s with complex compatibility filters.
On the other hand, Chebotko~\emph{et~al.}~\cite{ChLF09}
proposed some simplifications when
an RDBMS stores the
\textit{triple} relation and acts as an RDF triplestore.
Although these simplifications are
undoubtedly useful in the OBDA setting, the presence of mappings
brings additional challenges and, more importantly, significant opportunities.

\begin{example}
  \label{ex:opt-work-email}
  Consider Example~\ref{ex:email-pref-simple} again and suppose we now want to retrieve
  people's names, and when available also their work e-mail addresses.  We can
  naturally represent this information need in SPARQL:
\begin{lstlisting}[language=SPARQL]
SELECT ?n ?e { ?p :name ?n OPTIONAL { ?p :workEmail ?e } }.
\end{lstlisting}
  We can also express it very simply in SQL:
\begin{lstlisting}[language=SQLT]
SELECT fullName, workEmail FROM people.
\end{lstlisting}
  Instead, the straightforward translation and unfolding of the SPARQL query produces
\begin{lstlisting}[language=SQLT]
SELECT v1.fullName AS n, v2.workEmail AS e
FROM people v1 LEFT JOIN people v2 ON v1.id=v2.id AND
                                         v2.workEmail IS NOT NULL.
\end{lstlisting}
R2RML mappings filter out \texttt{NULL} values from the
database because \texttt{NULL}s cannot appear in RDF triples.
Hence, the join condition in the unfolded query contains an
\texttt{IS}~\texttt{NOT}~\texttt{NULL} for the
\texttt{workEmail} attribute of \texttt{v2}.  On the other hand, the
\texttt{LEFT} \texttt{JOIN} of the query assigns a \texttt{NULL} value to
\texttt{workEmail}  if no tuple from \texttt{v2} satisfies the join condition
for a given tuple from \texttt{v1}.
We call an assignment of \texttt{NULL} values by a \texttt{LEFT} \texttt{JOIN} the
\emph{padding effect}.
A closer inspection of the query reveals, however, that the padding effect 
only applies when
\texttt{workEmail} in \texttt{v2} is \texttt{NULL}.  Thus,
the role of the \texttt{LEFT} \texttt{JOIN} in this query boils down
to re-introducing \texttt{NULL}s  eliminated by the mapping.
In fact, this situation is quite typical in OBDA but does not concern RDF
triplestores, which do not store \texttt{NULL}s, or classical data integration
systems, which can expose \texttt{NULL}s through their mappings.
\qed
\end{example}

In this paper we address these issues, and our contribution is summarised as follows.
\begin{asparaenum}[\itshape 1.]
\item In Sec.~\ref{sec:translation}, we provide a succinct translation of a fragment of SPARQL~1.1 with
  \texttt{OPTIONAL} and \texttt{MINUS} into relational algebra that relies
  on the use of \texttt{LEFT} \texttt{JOIN} and \texttt{COALESCE}. Even though the
  ideas can be traced back to Cyganiak~\cite{Cyga05} and Chebotko \emph{et al.}~\cite{ChLF09} for the earlier
  SPARQL~1.0, 
  our translation fully respects \emph{bag semantics} and the \emph{three-valued
   logic} of SPARQL~1.1 and SQL~\cite{GL17} (and is formally proven correct).
\item We develop optimisation techniques for SQL queries with complex \texttt{LEFT} \texttt{JOIN}s resulting from 
the translation and unfolding: 
Compatibility Filter Reduction (CFR, Sec.~\ref{sec:comp-filter-red}), which generalises~\cite{ChLF09},
\texttt{LEFT} \texttt{JOIN} Naturalisation (LJN, Sec.~\ref{sec:lj:nat})   to avoid  padding, Natural \texttt{LEFT} \texttt{JOIN} Reduction (NLJR, Sec.~\ref{sec:lj:red}),
\texttt{JOIN} Transfer (JT, Sec.~\ref{sec:join-transfer}) and \texttt{LEFT} \texttt{JOIN} Decomposition (LJD, Sec.~\ref{sec:lj:simpl}) complementing~\cite{GaRo97}. By CFR and LJN, compatibility filters and \texttt{COALESCE} are eliminated for well-designed SPARQL (Sec.~\ref{sec:wd:sparql}).
\item We carried out an evaluation of our optimisation
  techniques over the well-known OBDA benchmark BSBM~\cite{BiSc09}, 
  where
   \texttt{OPTIONAL}s, \texttt{LEFT} \texttt{JOIN}s and \texttt{NULL}s are ubiquitous.
  Our experiments (Sec.~\ref{sec:experiments}) show that the techniques of Sec.~\ref{sec:optimization} lead to a significant
  improvement in performance of the SQL translations, even for commercial DBMSs.
\end{asparaenum}


\section{Preliminaries}

We first formally define the syntax and semantics of the SPARQL fragment
we deal with and then present the relational algebra operators used for the translation from SPARQL. 

RDF provides a basic data model. Its vocabulary contains three pairwise
disjoint and countably infinite sets of symbols: IRIs $\myIRI$,
blank nodes $\myBNK$  and RDF literals $\myLIT$.  \emph{RDF
 terms} are elements of $\sDOM = \myIRI \cup \myBNK \cup \myLIT$, \emph{RDF triples} are elements of
$\sDOM \times \myIRI \times \sDOM$,  and an \emph{RDF graph}
is a finite set of RDF triples.

\subsection{SPARQL}
\label{sec:sparql}

SPARQL adds a countably infinite set $\myVAR$ of \emph{variables}, disjoint
from $\sDOM$.
A \emph{triple pattern} is an element of
$(\sDOM\cup\myVAR)\times (\myIRI\cup\myVAR) \times
(\sDOM\cup\myVAR)$. A \emph{basic graph pattern} (\emph{BGP}) is a
finite set of triple patterns. We consider
\emph{graph patterns}, $P$, defined by the grammar\footnote{A slight extension of the grammar and the full translation are given in Appendix~\ref{app:proof}.}
\begin{multline*}
  P \ ::= \
    B \ \mid \  \filter(P,F)  \ \mid \ 
    \union(P_1,P_2) \ \mid \join(P_1,P_2) \ \mid\\
    \leftjoin(P_1,P_2,F) \ \mid \ \minus(P_1, P_2) \ \mid \  \project(P,
    L),
\end{multline*}  
where $B$ is a BGP, $L\subseteq\myVAR$ and $F$, called a \emph{filter}, is a formula constructed
using logical connectives $\land$ and~$\neg$ from atoms of the form
$\textit{bound}(v)$, \mbox{$(v=c)$}, $(v=v')$, for $v,v' \in \myVAR$ and
$c \in \sDOM$.  The set of variables in
$P$ is denoted by $\var(P)$.  

Variables in graph patterns are assigned values by \emph{solution mappings},
which are \emph{partial} functions $s \colon \myVAR \to \sDOM$ with (possibly empty)
domain $\dom(s)$.  The \emph{truth-value $F^s \in \{\top,\bot,\varepsilon\}$ of
 a filter $F$ under a solution mapping $s$} is defined inductively:
\begin{compactitem}
\item $(\textit{bound}(v))^s$ is $\top$ if $v \in \dom(s)$, and $\bot$
  otherwise;
\item $(v = c)^s = \varepsilon$ (`error') if $v\notin\dom(s)$; otherwise, $(v = c)^s$ is
  the classical truth-value of the predicate $s(v) = c$; similarly,
  $(v = v')^s = \varepsilon$ if $\{v,v'\}\not\subseteq\dom(s)$; otherwise,
  $(v = v')^s$ is the classical truth-value of the predicate $s(v) = s(v')$;
\item $(\lnot F)^s=\begin{cases}
    \bot, & \text{if } F^s = \top,\\[-2pt]
    \top, & \text{if } F^s = \bot,\\[-2pt]
    \varepsilon, & \text{if } F^s = \varepsilon,
  \end{cases}$
  \quad and \quad
  $(F_1 \land F_2)^s = \begin{cases}%
    \bot, & \text{if } F_1^s =\bot \text{ or } F_2^s =\bot,\\[-2pt]
    \top, & \text{if  } F_1^s =F_2^s =\top,\\[-2pt]
    \varepsilon, &  \text{otherwise.}
  \end{cases}$
\end{compactitem}

We adopt bag semantics for SPARQL: the answer to a graph pattern over an RDF
graph is a multiset (or bag) of solution mappings. Formally, a \emph{bag of solution
 mappings} is a (total) function $\Omega$ from the set of all solution
mappings to non-negative integers~$\mathbb{N}$: $\Omega(s)$ is called the
\emph{multiplicity} of $s$ (we often use $s\in\Omega$ as a shortcut for
$\Omega(s) > 0$). Following the grammar of graph patterns, we define respective
operations on solution mapping bags. Solution mappings $s_1$ and~$s_2$ are
called \emph{compatible}, written $s_1\sim s_2$, if $s_1(v) = s_2(v)$, for each
$v \in \dom(s_1) \cap \dom(s_2)$, in which case $s_1\smerge s_2$ denotes a
solution mapping with domain $\dom(s_1) \cup \dom(s_2)$ and such that
$s_1 \smerge s_2\colon v \mapsto s_1(v)$, for $v \in \dom(s_1)$, and
$s_1 \smerge s_2\colon v \mapsto s_2(v)$, for $v \in \dom(s_2)$. We also denote
by $s|_L$ the restriction of $s$ on~$L\subseteq \myVAR$. Then the SPARQL
operations are defined as follows:
\begin{compactitem} 
\item $\filter(\Omega,F) = \Omega'$, where $\Omega'(s) = \Omega(s)$ if $s\in\Omega$ and $F^s = \top$, and $0$ otherwise;
\item $\union(\Omega_1, \Omega_2) = \Omega$, where  $\Omega(s) = \Omega_1(s) + \Omega_2(s)$;
\item $\join(\Omega_1,\Omega_2) = \Omega$, where
$\Omega(s) = \hspace*{-0.5em}\sum\limits_{\begin{subarray}{c}s_1\in \Omega_1, s_2 \in \Omega_2 \text{ with}\\s_1\sim s_2 \text{ and } s_1\smerge s_2 = s\end{subarray}} \hspace*{-1.5em}\Omega_1(s_1) \times \Omega_2(s_2)$;
\item \mbox{$\leftjoin(\Omega_1, \Omega_2, F) = \union(\filter(\join(\Omega_1, \Omega_2), F), \Omega)$, where 
$\Omega(s) = \Omega_1(s)$} if $F^{s\smerge s_2} \ne \top$, for all $s_2\in \Omega_2$ compatible with $s$, and $0$ otherwise;
\item $\minus(\Omega_1,\Omega_2) = \Omega$, where $\Omega(s) = \Omega_1(s)$ 
if $\dom(s) \cap \dom(s_2) = \emptyset$, for all solution mappings~$s_2\in \Omega_2$ compatible with $s$, and $0$ otherwise;
\item $\project(\Omega, L) = \Omega'$, where $\Omega'(s') = \sum\limits_{s \in \Omega \text{ with }s|_L = s'} \hspace*{-1em}\Omega(s)$.
\end{compactitem}
Given an RDF graph $G$ and a graph pattern $P$, the \emph{answer $\sANS{P}{G}$ to $P$ over $G$} is a bag of solution mappings defined by induction using the operations above and starting from  basic graph patterns: $\sANS{B}{G}(s) = 1$ if  $\dom(s) = \var(B)$ and $G$ contains the triple $s(B)$ 
obtained by replacing each variable $v$ in $B$ by $s(v)$, and $0$ otherwise ($\sANS{B}{G}$ is a set).


\subsection{Relational Algebra (RA)}

We recap the three-valued and bag semantics of relational
algebra~\cite{GL17} and fix the notation. Denote by $\Delta$ the
underlying domain, which contains a distinguished element $\Null$.
Let $U$ be a finite (possibly empty) set of \emph{attributes}. A \emph{tuple
  over} $U$ is a (total) map~\mbox{$\tuple \colon U \to \Delta$};  there is a unique tuple over $\emptyset$.
A \emph{relation~$R$ over $U$} is a \emph{bag} of tuples over~$U$, that is, a
function from all tuples over $U$ to~$\mathbb{N}$. For relations $R_1$ and $R_2$ over~$U$, we 
write~$R_1 \subseteq R_2$ ($R_1 \equiv R_2$) if $R_1(\tuple) \leq R_2(\tuple)$ ($R_1(\tuple) = R_2(\tuple)$, resp.), for all~$\tuple$.

A \emph{term} $v$ over $U$ is 
an attribute $u\in U$, 
a constant $c\in\Delta$ or
an expression $\textit{if}(F,v,v')$, for terms $v$ and $v'$ over $U$ and
  a filter $F$ over $U$.
A \emph{filter $F$ over $U$} is a formula constructed from atoms $\isNull(V)$
and $(v=v')$, for a set $V$ of terms and terms $v, v'$ over~$U$, using
connectives $\land$ and $\neg$. 
Given a tuple $\tuple$ over $U$, it is extended to terms as follows: 
\begin{equation*}
\tuple(c) = c, \text{ for constants } c\in\Delta,\qquad \text{ and } \qquad 
\tuple(\textit{if}(F, v,v')) = \begin{cases}\tuple(v), &\text{if } F^{\tuple} = \top,\\
\tuple(v'), & \text{otherwise},\end{cases}
\end{equation*}
where the \emph{truth-value $F^{\tuple} \in \{\top, \bot, \varepsilon\}$ of $F$ on $\tuple$} is defined
inductively ($\varepsilon$ is \emph{unknown}):
\begin{compactitem}
\item $(\isNull(V))^{\tuple}$ is $\top$ if $\tuple(v)$ is $\Null$, for all $v\in V$, and $\bot$ otherwise; 
\item  $(v = v')^{\tuple} =\varepsilon$ if  $\tuple(v)$ or $\tuple(v')$ is $\Null$, and the truth-value of 
$\tuple(v) = \tuple(v')$ otherwise;
\item and the standard clauses for $\neg$ and $\land$ in the three-valued logic (see~Sec.~\ref{sec:sparql}).
\end{compactitem}
We use standard abbreviations $\coalesce(v,v')$ for $\textit{if}(\neg\isNull(v), v, v')$ and $F_1 \lor F_2$ for~$\neg(\neg F_1\land \neg F_2)$. 
Unlike Chebotko~\emph{et al.}~\cite{ChLF09}, we treat $\textit{if}$ as primitive, even though 
the renaming operation with an $\textit{if}$ could be defined via standard  operations of RA. 

For filters in positive contexts, we define a weaker equivalence: 
filters $F_1$ and $F_2$ over $U$ are \emph{p-equivalent}, written 
$F_1 \!\equiv^{\scriptscriptstyle+}\! F_2$, in case $F_1^t = \top$ iff $F_2^t = \top$, for all  $t$ over~$U$.

We use standard relational algebra operations: union $\cup$, difference
$\setminus$, projection $\pi$, selection $\sigma$, renaming $\rho$, extension $\nu$, natural
(inner) join $\Join$ and duplicate elimination~$\delta$.
We say that tuples $\tuple_1$ over $U_1$ and $\tuple_2$ over $U_2$ are
\emph{compatible}\footnote{Note that, unlike in SPARQL, if $u$ is $\Null$ in
  either of the tuples, then they are incompatible.} if
$\tuple_1(u) = \tuple_2(u) \ne \Null$, for all $u \in U_1 \cap U_2$, in which
case $\tuple_1 \smerge \tuple_2$ denotes a tuple over $U_1 \cup U_2$ such that
$\tuple_1 \smerge \tuple_2\colon u\mapsto \tuple_1(u)$, for $u \in U_1$, and
$\tuple_1 \smerge \tuple_2\colon u \mapsto \tuple_2(u)$, for $u \in U_2$. For
a tuple $\tuple_1$ over $U_1$ and $U\subseteq U_1$, we denote by $\tuple_1|_U$
the restriction of $\tuple_1$ to $U$.
Let $R_i$ be relations over $U_i$, for $i=1,2$. The semantics of the above
operations is as follows:
\begin{compactitem}
\item If $U_1 = U_2$, then $R_1 \cup R_2$ and $R_1 \setminus R_2$ are
  relations over~$U_1$ satisfying 
  $(R_1\cup R_2)(\tuple) = R_1(\tuple) + R_2(\tuple)$ and
  $(R_1\setminus R_2)(\tuple)\!=\!R_1(\tuple)$ if $\tuple\notin R_2$ and $0$
  otherwise;
\item If $U \subseteq U_1$, then $\pi_{U}R_1$ is a relation over $U$ with
  $\pi_U R_1(\tuple) = \hspace*{-1em}\sum\limits_{\tuple_1\in R_1 \text{ with }
    \tuple_1|_U = \tuple} \hspace*{-1.5em}R_1(\tuple_1)$;
\item If $F$ is a filter over $U_1$, then $\sigma_F R_1$ is a relation over
  $U_1$ such that $\sigma_F R_1(\tuple)$ is $R_1(\tuple)$ if $\tuple\in R_1$
  and $F^{\tuple} = \top$, and $0$ otherwise;
\item $R_1 \Join R_2$  is a relation $R$ over $U_1 \cup U_2$ such that $R(\tuple) = \hspace*{-2em}\sum\limits_{\begin{subarray}{c}\tuple_1\in R_1 \text{ and } \tuple_2\in R_2\\\text{are compatible and } \tuple_1\smerge \tuple_2 = t\end{subarray}} \hspace*{-3em}R_1(\tuple_1) \times R_2(\tuple_2)$;
\item If $v$ is
  a term over $U_1$ and $u\notin U_1$ an attribute, then
  the \emph{extension} $\nu_{u \mapsto v}R_1$ is a relation $R$
  over $U_1\cup \{u\}$ with
  $R(\tuple \oplus \{ u \mapsto \tuple(v) \}) =
  R_1(\tuple)$, for all~$\tuple$. The \emph{extended
    projection} $\pi_{\{u_1/v_1,\dots, u_k/v_k\}}$ is a shortcut for
  $\pi_{\{u_1,\dots,u_k\}} \nu_{u_1\mapsto v_1}\cdots \nu_{u_k\mapsto v_k}$.
\item If $v\in U_1$ and $u\notin U_1$ are
  distinct attributes, then the \emph{renaming}
  $\rho_{u/v}R_1$ is a relation over
  $U_1\setminus \{v\} \cup \{u\}$ whose tuples $t$
  are obtained by replacing $v$ in the domain of~$t$ by~$u$. For terms $v_1,\dots,v_k$ over $U_1$, attributes $u_1,\dots,u_k$ (not necessarily distinct from~$U_1$) and $V\subseteq U_1$, let $u_1',\dots,u_k'$ be fresh attributes and abbreviate the sequence $\rho_{u_1/u'_1}\cdots \rho_{u_k/u'_k} \pi_{U_1 \cup \{u_1',\dots, u_k'\}\setminus V} \nu_{u_1' \mapsto v_1} \cdots \nu_{u_k' \mapsto v_k}$ by $\rho^V_{\{u_1/v_1,\dots, u_k/v_k\}}$. 

\item $\delta R_1$ is a relation over $U_1$ with $\delta R_1(\tuple) = \min(R_1(\tuple), 1)$.
\end{compactitem}
To bridge the gap between partial functions (solution mappings) of SPARQL and total functions (tuples) of RA, we use a \emph{padding} operation:
$\mu_{\{u_1, \dots, u_k\}} R_1$ denotes\linebreak $\nu_{u_1\mapsto \Null} \cdots \nu_{u_k\mapsto \Null} R_1$, for $u_1, \dots, u_k\notin U_1$.
Finally, we define the outer union, the (inner) join and left (outer) join operations by taking
\begin{multline*}
R_1 \uplus R_2 \ \  = \ \ \mu_{U_2\setminus U_1} R_1 \ \cup \ \mu_{U_1\setminus U_2} R_2,\qquad\quad
R_1 \Join_{F} R_2 \ \  = \ \  \sigma_F (R_1 \Join R_2),\\[-2pt]
R_1 \LJoin_{F} R_2 \ \  = \ \  (R_1 \Join_F R_2)  \ \uplus \  
 (R_1 \setminus \pi_{U_1}(R_1\Join_F R_2));
\end{multline*}
note that $\Join_{F}$ and  $\LJoin_{F}$ are \emph{natural joins}: they are over $F$ as well as shared attributes.

An \emph{RA query} $Q$ is an expression constructed from relation symbols, each with a fixed set of attributes, and filters using the RA operations (and complying with all restrictions). 
A \emph{data instance} $D$ gives  a relation over its set of attributes, for any relation symbol. The \emph{answer  to~$Q$ over $D$} is a relation $\|Q\|_D$ defined inductively in the obvious way starting from the base case of relation symbols:  $\|Q\|_D$ is the relation given by $D$.


\section{Succinct Translation of SPARQL to SQL}
\label{sec:translation}

We first provide a translation of SPARQL graph patterns to RA queries that improves the worst-case exponential translation of \cite{KRRXZ14}  in handling \join, \leftjoin{} and \minus: it relies on the \coalesce{} function (see also \cite{ChLF09,DBLP:conf/kesw/ChaloupkaN16}) and
produces linear-size RA queries.

For any graph pattern $P$, the RA query $\tra(P)$ returns the same answers as $P$ when solution mappings are represented as relational tuples. For a set $V$ of variables and  solution mapping $s$ with $\dom(s)\subseteq V$, let $\extV(s)$ be the tuple over $V$ obtained from $s$ by padding it with $\Null$s:  formally, $\extV(s)  =  s\smerge \{ v\mapsto \Null \mid v\in V \setminus\dom(s)\}$.
The \emph{relational answer $\| P\|_G$ to $P$ over an RDF graph $G$} is a bag $\Omega$ of tuples over~$\var(P)$ such that $\Omega(\extV[\var(P)](s)) = \sANS{P}{G}(s)$, for all solution mappings $s$.
Conversely, to evaluate $\tra(P)$, we view an RDF graph $G$ as a data instance $\textit{triple}(G)$ storing~$G$ as a ternary relation $\textit{triple}$ with the attributes $\textit{sub}$, $\textit{pred}$ and $\textit{obj}$ (note that $\textit{triple}(G)$ is a set).

The translation of a triple pattern $\langle s,p,o\rangle$ is an RA query of the form $\pi_{\dots}\sigma_F\textit{triple}$, where the subscript of the extended projection $\pi$ and filter $F$ are determined by the variables, IRIs and literals in $s$, $p$ and $o$; see Appendix~\ref{app:proof}. 
SPARQL operators $\union$, $\filter$ and $\project$ are translated into their RA counterparts: $\uplus$, $\sigma$ and $\pi$, respectively, with SPARQL filters translated into RA by replacing each $\textit{bound}(v)$ with $\neg \isNull(v)$.

The translation of $\join$, $\leftjoin$ and $\minus$ is more elaborate and requires additional notation.
Let $P_1$ and $P_2$ be graph patterns with $U_i = \var(P_i)$, for $i = 1,2$, and denote by~$U$ their shared variables, $U_1\cap U_2$.  To rename the shared attributes apart, we introduce fresh attributes $u^1$ and $u^2$ for each $u\in U$, set $U^i = \{ u^i  \mid u \in U\}$ and use abbreviations~$U^i/U$ and $U/U^i$ for $\{u^i / u \mid\, u \in U\}$ and $\{u / u^i \mid u \in U\}$, respectively, \mbox{for $i = 1,2$}. Now we can express the SPARQL solution mapping compatibility: 
\begin{equation*}
\compatible_U \ \ = \ \bigwedge\nolimits_{u\in U}  \bigl[(u^1 = u^2) \lor
  \isNull(u^1) \lor
  \isNull(u^2)\bigr]
\end{equation*}
(intuitively, the $\Null$ value of an attribute in the context of RA queries represents the fact that the corresponding SPARQL variable is not bound). Next, the renamed apart attributes need to be coalesced to provide the value in the  representation of the resulting solution mapping; see $\smerge$ in Sec.~\ref{sec:sparql}. To this end, 
given an RA filter $F$ over a set of attributes $V$, terms $v_1,\dots,v_k$ over $V$ and attributes $u_1,\dots,u_k\notin V$, we denote by $F[u_1/v_1,\dots, u_k/v_k]$ the result of replacing each $u_i$ by $v_i$ in $F$. We also 
denote by $\coalesce_U$ the substitution of each $u\in U$ with $\coalesce(u^1,u^2)$; thus, $F[\coalesce_U]$ is the result of replacing each $u\in U$ in $F$ with $\coalesce(u^1,u^2)$.  We now set
\begin{align*}
\tra(\join(P_1, P_2))  &= \rho^{U^1\cup U^2}_{\coalesce_U} \bigl[\rho_{U^1/U} \tra(P_1) \Join_{\compatible_U} \rho_{U^2/U}\tra(P_2)\bigr], \\[2pt]
  \tra(\leftjoin(P_1, P_2, F))  &= \rho^{U^1 \cup U^2}_{\coalesce_U} \bigl[\rho_{U^1/U} \tra(P_1) \LJoin_{\compatible_U\wedge \tra(F)[\coalesce_U]}  \rho_{U^2/U} \tra(P_2)\bigr], \\[2pt]
  \tra(\minus(P_1, P_2))  &= \pi_{U_1}\rho_{U/U^1}  \sigma_{\isNull(w)}\\[-2pt] & \hspace*{4em}\bigl[\rho_{U^1/U} \tra(P_1) \LJoin_{\compatible_U\wedge \bigvee\limits_{u \in U}(u^1 = u^2)} \nu_{w\mapsto 1}\rho_{U^2/U} \tra(P_2)\bigr],    
\end{align*}
where $w\notin U_1 \cup U_2$ is an attribute and  $1\in \Delta\setminus \{\Null\}$ is any domain element. The translation of $\join$ and $\leftjoin$ is straightforward. For $\minus$, observe that $\nu_{w\mapsto 1}$ extends the relation for $P_2$ by a fresh attribute $w$ with a non-$\Null$ value. The join condition encodes compatibility of solution mappings whose domains, in addition, share a variable (both $u^1$ and $u^2$ are non-$\Null$). Tuples satisfying the condition are then filtered out by $\sigma_{\isNull(w)}$, leaving only representations of solution mappings for~$P_1$ that have no compatible solution mapping in~$P_2$ with a shared variable. Finally, the attributes are renamed back by $\rho_{U/U^1}$ and unnecessary attributes are projected out by~$\pi_{U_1}$.

\begin{theorem}\label{thm:sparql-sql}
For any RDF graph $G$ and any graph pattern $P$,  
$\| P\|_G = \|\tra(P)\|_{\textit{triple}(G)}.$
\end{theorem}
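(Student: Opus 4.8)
The plan is to proceed by \emph{structural induction on the graph pattern $P$}, maintaining the two‑part invariant that (i)~$\tra(P)$ is an RA query over the attribute set $\var(P)$, and (ii)~for every solution mapping $s$ one has $\|\tra(P)\|_{\textit{triple}(G)}(\extV[\var(P)](s)) = \sANS{P}{G}(s)$. Part~(i) is a routine syntactic check of the six translation clauses. For part~(ii) I would first record that $\extV[V]$ is a bijection from the solution mappings with domain contained in $V$ onto the tuples over $V$ (its inverse restricts a tuple to its non‑$\Null$ coordinates), so that (ii) pins down the bag $\|\tra(P)\|_{\textit{triple}(G)}$ completely and is indeed equivalent to the claimed identity $\|P\|_G = \|\tra(P)\|_{\textit{triple}(G)}$.

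Before the induction I would isolate an auxiliary lemma on filters: for any SPARQL filter $F$ with $\var(F)\subseteq V$ and any $s$ with $\dom(s)\subseteq V$, the three‑valued value $F^s$ equals $\tra(F)^{\extV[V](s)}$, identifying $\top,\bot,\varepsilon$ across the two settings. This follows by induction on $F$: the atom $\textit{bound}(v)$ matches $\neg\isNull(v)$ because $v\in\dom(s)$ iff $\extV[V](s)(v)\ne\Null$; the atoms $(v=c)$ and $(v=v')$ match because an unbound variable becomes a $\Null$ coordinate, which yields $\varepsilon$ on both sides, while on bound variables the values coincide; and the $\neg$ and $\land$ clauses are literally the same three‑valued connectives. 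I would also note the bookkeeping facts that $\rho$, $\nu$, $\mu$ preserve multiplicities while $\pi$ sums them over fibres, and that, since the shared variables $U$ of $P_1,P_2$ are renamed to the pairwise disjoint fresh sets $U^1,U^2$, the relations $\rho_{U^1/U}\tra(P_1)$ and $\rho_{U^2/U}\tra(P_2)$ have disjoint attribute sets, so $\Join_{\compatible_U}$ is a selection applied to a cross product. The base case (triple patterns, then BGPs) is checked directly, using that $\textit{triple}(G)$ and the answer to a triple/basic graph pattern are both \emph{sets}, that RDF triples contain no $\Null$, and that $\pi_{\dots}\sigma_F\textit{triple}$ keeps exactly the triples meeting the constant‑ and repeated‑variable constraints before renaming and dropping coordinates, which is precisely the SPARQL matching condition.

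For the inductive step, $\union$, $\filter$ and $\project$ are immediate once the lemma is available: $\uplus$ realises the multiset union, the $\Null$‑padding ensuring that a mapping whose domain is not contained in $\var(P_i)$ contributes $0$ to $\tra(P_i)$; $\sigma_{\tra(F)}$ keeps exactly the tuples with $\tra(F)^t=\top$, matching $\filter$ by the lemma; and $\pi_L$ realises $\project(\cdot,L)$, since summing over the $\pi_L$‑fibre of $\extV[L](s')$ corresponds, via the identity $\extV[\var(P)](s)|_L=\extV[L](s|_L)$ and the injectivity of $\extV[L]$, to summing $\sANS{P}{G}(s)$ over $\{\,s : s|_L=s'\,\}$. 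The heart of the argument is the $\join$/$\leftjoin$/$\minus$ trio. For $\join$, a tuple of $\rho_{U^1/U}\tra(P_1)\Join_{\compatible_U}\rho_{U^2/U}\tra(P_2)$ is the renaming of $\extV[\var(P_1)](s_1)\smerge\extV[\var(P_2)](s_2)$ with multiplicity $\sANS{P_1}{G}(s_1)\cdot\sANS{P_2}{G}(s_2)$, and since $\dom(s_i)\subseteq\var(P_i)$ forces $\dom(s_1)\cap\dom(s_2)\subseteq U$, the filter $\compatible_U$ holds exactly when $s_1\sim s_2$; the outer $\rho^{U^1\cup U^2}_{\coalesce_U}$ then rewrites, at each $u\in U$, the pair of renamed coordinates into $\coalesce(u^1,u^2)$, whose value is the $u$‑coordinate of $\extV[\var(P_1)\cup\var(P_2)](s_1\smerge s_2)$, turning the tuple into $\extV[\var(P)](s_1\smerge s_2)$; the projection inside $\rho^{U^1\cup U^2}_{\coalesce_U}$ sums the multiplicities of all pairs producing the same image, which is exactly $\join(\sANS{P_1}{G},\sANS{P_2}{G})$.

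For $\leftjoin$ I would expand $R_1\LJoin_G R_2=(R_1\Join_G R_2)\uplus\bigl(R_1\setminus\pi_{U_1}(R_1\Join_G R_2)\bigr)$ with $G=\compatible_U\wedge\tra(F)[\coalesce_U]$; evaluating $\tra(F)[\coalesce_U]$ on the tuple for $(s_1,s_2)$ makes it see the coordinates of $\extV[\var(P_1)\cup\var(P_2)](s_1\smerge s_2)$, so by the lemma $G$ holds there iff $s_1\sim s_2$ and $F^{s_1\smerge s_2}=\top$ (using the scoping condition $\var(F)\subseteq\var(P_1)\cup\var(P_2)$). Hence, after the outer $\rho^{U^1\cup U^2}_{\coalesce_U}$, the first summand reproduces $\filter(\join(\sANS{P_1}{G},\sANS{P_2}{G}),F)$ exactly as in the $\join$ case, and in the second summand the $\Null$‑padding of $U^2$ collapses $\coalesce(u^1,u^2)$ to the $u$‑coordinate of $\extV[\var(P_1)](s_1)$, so it becomes $\extV[\var(P)](s_1)$ with multiplicity $\sANS{P_1}{G}(s_1)$ precisely when no $s_2\in\sANS{P_2}{G}$ compatible with $s_1$ has $F^{s_1\smerge s_2}=\top$ — exactly the side condition in the definition of $\leftjoin$; the two summands add, matching the SPARQL $\union$, with the projection in the outer $\rho$ summing the multiplicities of coinciding images. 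The $\minus$ case is analogous: with join condition $\compatible_U\wedge\bigvee_{u\in U}(u^1=u^2)$ (the disjunct forcing a shared bound variable) the match part carries the non‑$\Null$ marker $w=1$ and is deleted by $\sigma_{\isNull(w)}$, while the padded part survives and contributes $\sANS{P_1}{G}(s_1)$ exactly when no $s_2\in\sANS{P_2}{G}$ satisfies $s_1\sim s_2$ and $\dom(s_1)\cap\dom(s_2)\ne\emptyset$; the trailing $\rho_{U/U^1}$ and $\pi_{U_1}$ merely rename $U^1$ back and drop the now‑$\Null$ columns, reproducing $\minus(\sANS{P_1}{G},\sANS{P_2}{G})$. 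I expect the main obstacle to be exactly this kind of reasoning for $\leftjoin$ and $\minus$: keeping three bookkeeping layers simultaneously correct — three‑valued filter evaluation, bag multiplicities, and the rename/$\coalesce$ encoding that reconciles SPARQL's convention ``unbound is compatible with everything'' with RA's convention that a $\Null$ coordinate breaks tuple compatibility — and in particular verifying that ``not in the projection of the inner join'' is the faithful RA counterpart of ``has no compatible solution mapping''.
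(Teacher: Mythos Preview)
Your plan is correct and follows essentially the same route as the paper's proof: structural induction on $P$, with the base case and the $\union$/$\filter$/$\project$ steps dispatched quickly and the real work concentrated in the $\join$/$\leftjoin$/$\minus$ clauses, analysed via the renaming to $U^1,U^2$, the meaning of $\compatible_U$, and the $\coalesce$ post‑processing. If anything, you are more explicit than the paper about two points it leaves implicit --- the three‑valued filter lemma $F^s=\tra(F)^{\extV[V](s)}$ and the bijection property of $\extV[V]$ --- which is helpful; the paper instead phrases each of the three hard cases as a pair of multiplicity inequalities, but the content is the same.
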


The complete proof of Theorem~\ref{thm:sparql-sql} can be found in Appendix~\ref{app:proof}.


\section{Optimisations of Translated SPARQL Queries}
\label{sec:optimization}

We present optimisations on a series of examples. We begin by revisiting Example~\ref{ex:email-pref-simple}, which can now be given in algebraic form (for brevity, we ignore projecting away \texttt{?p}, which does not affect any of the optimisations discussed):
\begin{multline*}
\leftjoin(\leftjoin(\texttt{?p :name ?n}, \ \ \texttt{?p :workEmail ?e}, \ \ \top), \texttt{?p :personalEmail ?e}, \ \ \top),
\end{multline*}
where $\top$ denotes the tautological filter (true). Suppose we have the mapping
\begin{align*}
      \triple{\urione(\texttt{id})}{\texttt{:name}}{\texttt{fullName}} & \ \ \leftarrow\ \ 
                                   \sigma_{\notnull{\texttt{id}}  \land \notnull{\texttt{fullName}}}\texttt{people}, \\[-2pt]
      \triple{\urione(\texttt{id})}{\texttt{:workEmail}}{\texttt{workEmail}} & \ \ \leftarrow \ \
                                   \sigma_{\notnull{\texttt{id}}  \land\notnull{\texttt{workEmail}}}\,\texttt{people},\\[-2pt]
      \triple{\urione(\texttt{id})}{\texttt{:personalEmail}}{\texttt{homeEmail}} & \ \ \leftarrow\ \ 
                                   \sigma_{\notnull{\texttt{id}}  \land\notnull{\texttt{homeEmail}}}\,\texttt{people},
\end{align*}
where $\urione$ is a function that constructs the IRI for a person from their ID (an \emph{IRI template}, in R2RML parlance). We assume that the IRI functions are injective and map only $\Null$ to $\Null$; thus, joins on $\urione(\texttt{id})$ can be reduced to joins on $\texttt{id}$, and $\isNull(\texttt{id})$ holds just in case $\isNull(\urione(\texttt{id}))$ holds.  Interestingly,
the IRI functions can encode GLAV mappings, where the
target query is a full-fledged CQ (in contrast to GAV mappings, where atoms do
not contain existential variables); for more details,
see~\cite{DLLPR18}.

The translation given in Sec.~\ref{sec:translation}
and unfolding produce the following RA query, where we abbreviate, for example, $\rho^{\{p^1, p^2\}}_{\{p^4/\coalesce(p^1,p^2)\}}$ by $\bar{\rho}_{\{p^4/\coalesce(p^1,p^2)\}}$ (in other words, the $\bar{\rho}$ operation always projects away the arguments of its $\coalesce$ functions):\\[2pt]
\centerline{\begin{tikzpicture}[xscale=2.9,yscale=0.47]\small
\node[nd,fill=white] (u3) at (4,2) {\footnotesize\texttt{people}};
\node[nd,fill=white] (u3s) at (4,2.9) {$\sigma_{\neg\isNull(\texttt{id})  \land\neg\isNull(\texttt{homeEmail})}$};
\node[nd,fill=white] (u3r) at (4,3.9) {$\pi_{\{p^3/\urione(\texttt{id}),\ e^3/\texttt{homeEmail}\}}$};
\node[nd,fill=white] (p2) at (2.5,0) {\footnotesize\texttt{people}};
\node[nd,fill=white] (p2s) at (2.5,0.9) {$\sigma_{\neg\isNull(\texttt{id})  \land \neg\isNull(\texttt{workEmail})}$};
\node[nd,fill=white] (p2r) at (2.5,1.9) {$\pi_{\{p^2/\urione(\texttt{id}),\ e^2/\texttt{workEmail}\}}$};
\node[nd,fill=white] (p1) at (1,0) {\footnotesize\texttt{people}};
\node[nd,fill=white] (p1s) at (1,0.9) {$\sigma_{\neg\isNull(\texttt{id})  \land\neg\isNull(\texttt{fullName})}$};
\node[nd,fill=white] (p1r) at (1,1.9) {$\pi_{\{p^1/\urione(\texttt{id}),\ n/\texttt{fullName}\}}$};
\node[nd] (lj1) at (1.75,3) {$\LJoin_{(p^1 = p^2)  \lor \isNull(p^1) \lor \isNull(p^2)}$}; 
\node[nd] (lj1p) at (1.75,4) {$\bar{\rho}_{\{p^4/\coalesce(p^1,p^2)\}}$};
\node[nd] (lj2) at (3,5) {$\LJoin_{ [(p^4 = p^3) \lor \isNull(p^4) \lor \isNull(p^3)]\land [(e^2 = e^3) \lor \isNull(e^2) \lor \isNull(e^3)]}$}; 
\node[nd] (lj2r) at (3,6) {$\bar{\rho}_{\{p/\coalesce(p^4,p^3), \ e/\coalesce(e^2,e^3)\}}$}; 
\draw (u3) -- (u3s);
\draw (u3s) -- (u3r);
\draw (p2) -- (p2s);
\draw (p2s) -- (p2r);
\draw (p1) -- (p1s);
\draw (p1s) -- (p1r);
\draw (p1r) -- (lj1);
\draw (p2r) -- (lj1);
\draw (lj1) -- (lj1p);
\draw (lj1p) -- (lj2);
\draw (u3r) -- (lj2);
\draw (lj2) -- (lj2r);
\end{tikzpicture}}\\[0pt]
In our diagrams, the white nodes are the contribution of the mapping and the translation of the basic graph patterns: for example, the basic graph pattern \texttt{?p :name ?n} produces $\smash{\pi_{\{p^1/\urione(\texttt{id}),\ n/\texttt{fullName}\}}\sigma_{\neg\isNull(\texttt{id})  \land\neg\isNull(\texttt{fullName})}\texttt{people}}$ (we use attributes without superscripts if there is only one occurrence; otherwise, the superscript identifies the relevant subquery).
The grey nodes correspond to the translation of the SPARQL operations: for instance, the innermost left join is on $\smash{\compatible_{\{p\}}}$ with $p$ renamed apart to $\smash{p^1}$ and~$\smash{p^2}$; the outermost left join is on $\smash{\compatible_{\{p, e\}}}$, where $p$ is renamed apart to $\smash{p^4}$ and $\smash{p^3}$ and $e$ to $\smash{e^2}$ and $\smash{e^3}$; the two $\bar{\rho}$ are the respective renaming operations with $\coalesce$.  

\subsection{Compatibility Filter Reduction (CFR)}\label{sec:comp-filter-red}

We begin by simplifying the filters in (left) joins and eliminating renaming operations with $\coalesce$ above them (if possible). 
First, we can pull up the filters of the mapping through the extended projection and union by means of standard database equivalences: for example, for relations~$R_1$ and $R_2$ and a filter $F$ over $U$, we  have $\sigma_F(R_1 \cup R_2) \equiv \sigma_F R_1 \cup \sigma_F R_2$, and
$\pi_{U'} \sigma_{F'} R_1 \equiv \sigma_{F'} \pi_{U'} R_1$, if $F'$ is a filter over~$U' \subseteq U$, and $\rho_{u/v} \sigma_F R_1 \equiv \sigma_{F[u/v]} \rho_{u/v} R_1$, if $v \in U$ and $u\notin U$.

Second, the filters can be moved (in a restricted way)  between the arguments of a left join to its join condition: for relations~$R_1$ and $R_2$ over $U_1$ and $U_2$, respectively, and filters $F_1$, $F_2$ and $F$ over $U_1$, $U_2$  and~$U_1 \cup U_2$, respectively, we have
\begin{align}
\label{eq:lj:filter-left:d}
\sigma_{F_1} R_1 \LJoin_{F} R_2 \ \ & \equiv \ \ \sigma_{F_1}(R_1 \LJoin_F R_2), \\
\label{eq:lj:filter-left}
\sigma_{F_1} R_1 \LJoin_{F} R_2 \ \ & \equiv \ \ \sigma_{F_1} R_1 \LJoin_{F\land F_1} R_2,\\
\label{eq:lj:filter-right}
R_1 \LJoin_{F} \sigma_{F_2} R_2 \ \ & \equiv \ \ R_1 \LJoin_{F\land F_2} R_2;
\end{align}
observe that unlike $\sigma_{F_2}$ in~\eqref{eq:lj:filter-right}, the
selection $\sigma_{F_1}$ cannot  be entirely  eliminated
in~\eqref{eq:lj:filter-left} but can rather be `duplicated' above the
left join
using~\eqref{eq:lj:filter-left:d}. (We note that~\eqref{eq:lj:filter-left:d}
and~\eqref{eq:lj:filter-right} are well-known and can be found, e.g.,
in~\cite{GaRo97}.) Simpler equivalences hold for inner join: $\sigma_{F_1} R_1 \Join_F R_2 \equiv \sigma_{F\land F_1} (R_1 \Join R_2)$. These equivalences can be, in particular, used to pull up the $\neg\isNull$ filters from mappings to eliminate the $\isNull$ disjuncts in the compatibility condition $\textit{comp}_U$ of the (left) joins in the  translation by means of the standard p-equivalences of the three-valued logic: 
\begin{align}
  \label{eq:rem-disjunct}
 (F_1 \lor F_2) \land \neg F_2 & \ \ \equiv^{\scriptscriptstyle+} \ \ F_1 \land \neg F_2, \\
  \label{eq:rem-isnull}
 (v = v') \land \neg \isNull(v) & \ \ \equiv^{\scriptscriptstyle+} \ \ (v = v');
\end{align}
we note in passing that this step refines Simplification~3 of Chebotko~\emph{et~al.}~\cite{ChLF09}, which relies on the absence of other left joins in the arguments of a (left) join. 

Third, the resulting simplified compatibility conditions can eliminate $\coalesce$ from the renaming operations: for a relation $R$ over $U$ and $u^1, u^2\in U$, we clearly have 
\begin{align}\label{eq:coalesce:elimination}
\rho^{\{u^1, u^2\}}_{\{u/\coalesce(u^1,u^2)\}} \sigma_{\neg\isNull(u^1)} R \ \ \ \equiv \ \ \ \sigma_{\neg\isNull(u)} \pi_{U\setminus \{u^2\}} R[u/u^1],
\end{align}
where $R[u/u^1]$ is the result of replacing each $u^1$ in $R$ by $u$. 
This step generalises Simplification~2 of Chebotko~\emph{et~al.}~\cite{ChLF09}, which does not eliminate $\coalesce$ above (left) joins that contain nested left joins.

By applying these three steps to our running example, we obtain (see Appendix~\ref{app:example1})\\[4pt]
\centerline{\begin{tikzpicture}[xscale=2.9,yscale=0.47]\small
\draw[rounded corners=3mm,dashed,fill=black!3] (0.45,0.6) rectangle +(2.65,3.85);
\node[nd,fill=white] (u3) at (4,3.1) {\texttt{people}};
\node[nd,fill=white] (u3r) at (4,4) {$\pi_{\{p^3/\urione(\texttt{id}),\ e^3/\texttt{homeEmail}\}}$};
%
\node[nd,fill=white] (p2) at (2.5,1.1) {\texttt{people}};
\node[nd,fill=white] (p2r) at (2.5,2) {$\pi_{\{p^2/\urione(\texttt{id}),\ e^2/\texttt{workEmail}\}}$};
%
\node[nd,fill=white] (p1) at (1,1.1) {\texttt{people}};
\node[nd,fill=white] (p1r) at (1,2) {$\pi_{\{p/\urione(\texttt{id}),\ n/\texttt{fullName}\}}$};
%
\node[nd] (lj1) at (1.75,2.95) {$\LJoin_{(p = p^2) \land \neg\isNull(e^2)}$}; 
\node[nd] (lj1p) at (1.75,3.95) {$\pi_{\{p, n, e^2\}}$};
\node[nd] (lj2) at (2.75,5) {$\LJoin_{ (p = p^3)\land [(e^2 = e^3) \lor \isNull(e^2)] \land \neg\isNull(e^3)}$}; 
\node[nd] (lj2p) at (2.75,6) {$\pi_{\{p, n, e^2, e^3\}}$}; 
\node[nd] (lj2r) at (2.75,7) {$\bar{\rho}_{\{e/\coalesce(e^2,e^3)\}}$}; 
\node[nd] (lj2s) at (2.75,8) {$\sigma_{\neg\isNull(p)\land  \neg\isNull(n)}$}; 
\draw (u3) -- (u3r);
%
\draw (p2) -- (p2r);
%
\draw (p1) -- (p1r);
%
\draw (p1r) -- (lj1);
\draw (p2r) -- (lj1);
\draw (lj1) -- (lj1p);
\draw (lj1p) -- (lj2);
\draw (u3r) -- (lj2);
\draw (lj2) -- (lj2p);
\draw (lj2p) -- (lj2r);
\draw (lj2r) -- (lj2s);
\end{tikzpicture}}

\subsection{Left Join Naturalisation (LJN)}\label{sec:lj:nat}

Our next group of optimisations can remove join conditions in left joins (if their arguments satisfy certain properties), thus reducing them to \emph{natural left joins}.

Some equalities in the join conditions of left joins can be removed by means of attribute duplication: for relations $R_1$ and $R_2$ over~$U_1$ and $U_2$, respectively,  a filter $F$ over $U_1\cup U_2$ and attributes $u^1 \in U_1\setminus U_2$ and $u^2 \in U_2 \setminus U_1$, we have
\begin{align}\label{eq:lj:natural}
R_1 \LJoin_{F \land (u^1 = u^2)} R_2 & \ \  \equiv  \ \  R_1 \LJoin_F \nu_{u^1 \mapsto u^2} R_2.
\end{align}
Now, the duplicated $u^2$ can be eliminated in case it is actually projected away:
\begin{align}\label{eq:vacuous:nu}
\pi_{U_1\cup U_2\setminus \{u^2\}} (R_1 \LJoin_F \nu_{u^1 \mapsto u^2} R_2)  \equiv  R_1 \LJoin_F R_2[u^1/u^2] \text{ if }  F \text{ does not contain } u^2.  
\end{align}
So, if $F$ is a conjunction of suitable attribute equalities, then by repeated application of~\eqref{eq:lj:natural} and~\eqref{eq:vacuous:nu}, we can turn a left join  into a natural left join. In our running example, this procedure simplifies the innermost left join to\\[0pt]
\centerline{\begin{tikzpicture}[xscale=2.9,yscale=0.47]\small
\draw[rounded corners=3mm,dashed,fill=black!3] (0.45,1.6) rectangle +(2.65,2.9);
%
\node[nd,fill=white] (p2) at (2.5,2.1) {\texttt{people}};
\node[nd,fill=white] (p2r) at (2.5,3) {$\pi_{\{p/\urione(\texttt{id}),\ e^2/\texttt{workEmail}\}}$};
%
\node[nd,fill=white] (p1) at (1,2.1) {\texttt{people}};
\node[nd,fill=white] (p1r) at (1,3) {$\pi_{\{p/\urione(\texttt{id}),\ n/\texttt{fullName}\}}$};
%
\node[nd] (lj1) at (1.75,4) {$\LJoin_{\neg\isNull(e^2)}$}; 
%
\draw (p2) -- (p2r);
%
\draw (p1) -- (p1r);
%
\draw (p1r) -- (lj1);
\draw (p2r) -- (lj1);
\end{tikzpicture}\hfill\mbox{}}

\smallskip

Another technique for converting a left join into a natural left join ($\LJoin$ is just an abbreviation for $\LJoin_\top$) is based on the conditional function $\textit{if}$: 
\begin{proposition}\label{prop:lj-nat}
For relations $R_1$ and $R_2$ over $U_1$ and $U_2$, respectively, and a filter $F$ over $U_1 \cup U_2$, we have
\begin{align}\label{eq:lj-nat}
R_1   \LJoin_F  R_2 \  \equiv \  \rho^{\{U_2\setminus U_1\}}_{\{u/\textit{if}(F, u, \Null) \,\mid\, u\in U_2\setminus U_1\}} (R_1 \LJoin R_2) && \text{ if } \ \ \pi_{U_1}(R_1 \Join R_2) \subseteq R_1.
\end{align}  
\end{proposition}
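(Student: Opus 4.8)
The plan is to prove the equivalence by a tuple-by-tuple comparison of the two bags, organised by the restriction $t_1 = t|_{U_1}$ of a candidate output tuple $t$ to $U_1$. For a tuple $t_1$ over $U_1$ write $\widehat{t_1}$ for the tuple over $U_1\cup U_2$ that extends $t_1$ by $\Null$ on every attribute of $U_2\setminus U_1$. First I would unfold the definitions of $\LJoin_F$, $\uplus$, $\Join$ and $\sigma$ to obtain explicit multiplicity formulas. For the natural left join $R_1\LJoin R_2 = R_1\LJoin_\top R_2$ this gives: $t$ has multiplicity $R_1(t|_{U_1})\cdot R_2(t|_{U_2})$ whenever $t|_{U_1}$ and $t|_{U_2}$ are compatible (so in particular $t$ has no $\Null$ on $U_1\cap U_2$); multiplicity $R_1(t_1)$ when $t=\widehat{t_1}$ and $t_1$ has \emph{no} compatible tuple in $R_2$; and $0$ otherwise. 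The same unfolding for $R_1\LJoin_F R_2$ differs only in that a joined tuple $t$ survives in the $\Join_F$ part exactly when $F^t=\top$, and a padded $\widehat{t_1}$ appears exactly when $t_1$ has no compatible tuple $t_2$ in $R_2$ with $F^{t_1\smerge t_2}=\top$.

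Next I would invoke the hypothesis $\pi_{U_1}(R_1\Join R_2)\subseteq R_1$. Expanding the left-hand side, $\pi_{U_1}(R_1\Join R_2)(t_1)=R_1(t_1)\cdot\sum_{t_2}R_2(t_2)$, where $t_2$ ranges over the tuples over $U_2$ compatible with $t_1$; hence the hypothesis says that whenever $R_1(t_1)>0$ this sum is at most $1$, i.e. $t_1$ has at most one compatible tuple $t_2^{*}$ in $R_2$ with $R_2(t_2^{*})>0$, and then $R_2(t_2^{*})=1$. Plugging this into the formulas above, for every $t_1$ with $R_1(t_1)>0$ the bag $R_1\LJoin R_2$ contains exactly one tuple with $U_1$-restriction $t_1$, of multiplicity $R_1(t_1)$; call it $g(t_1)$ — it is $t_1\smerge t_2^{*}$ if the matching tuple exists, and $\widehat{t_1}$ otherwise. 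Symmetrically, $R_1\LJoin_F R_2$ contains exactly one tuple $h(t_1)$ with $U_1$-restriction $t_1$, again of multiplicity $R_1(t_1)$: it is $t_1\smerge t_2^{*}$ when $t_2^{*}$ exists and $F^{t_1\smerge t_2^{*}}=\top$, and $\widehat{t_1}$ in every other case. Every tuple $t$ with $R_1(t|_{U_1})=0$ has multiplicity $0$ on both sides.

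It then remains to analyse the renaming-with-$\textit{if}$ operation $\phi=\rho^{U_2\setminus U_1}_{\{u/\textit{if}(F,u,\Null)\,\mid\,u\in U_2\setminus U_1\}}$ applied to $R_1\LJoin R_2$. From the definition of $\rho^{V}$, the operation $\phi$ leaves the $U_1$-part of every tuple $t$ intact and, for $u\in U_2\setminus U_1$, replaces $t(u)$ by $t(u)$ if $F^t=\top$ and by $\Null$ otherwise. Because the tuples $g(t_1)$ have pairwise distinct $U_1$-parts, $\phi$ never identifies two of them, so the projection hidden inside $\phi$ changes no multiplicity: $\phi(R_1\LJoin R_2)$ consists exactly of the tuples $\phi(g(t_1))$, each with multiplicity $R_1(t_1)$. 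A three-way case split — $t_2^{*}$ exists with $F^{t_1\smerge t_2^{*}}=\top$; $t_2^{*}$ exists with $F^{t_1\smerge t_2^{*}}\neq\top$; $t_2^{*}$ does not exist — combined with the observation $\phi(\widehat{t_1})=\widehat{t_1}$ (all its $U_2\setminus U_1$ entries are already $\Null$), shows $\phi(g(t_1))=h(t_1)$ in each case. Hence both sides of~\eqref{eq:lj-nat} equal the bag that contains each $h(t_1)$ with multiplicity $R_1(t_1)$ for $R_1(t_1)>0$ (and nothing else), which proves the equivalence.

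The \emph{main obstacle}, and the only place where care is genuinely needed, is the interaction between the projection implicit in $\phi$ and bag multiplicities: in general $\phi$ may collapse several joined tuples $t_1\smerge t_2$ on which $F$ is not $\top$ onto the one padded tuple $\widehat{t_1}$, altering its multiplicity and breaking the equivalence. The hypothesis $\pi_{U_1}(R_1\Join R_2)\subseteq R_1$ is exactly what rules this out, by forcing at most one matching tuple per $t_1$ and that it have multiplicity one; granting that, the remaining work is a routine unfolding of the three-valued bag semantics of $\uplus$, $\Join$, $\sigma$ and the extended renaming.
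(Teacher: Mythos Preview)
Your proof is correct and complete; the key consequence you extract from the hypothesis---that each $t_1\in R_1$ has at most one compatible $t_2^{*}\in R_2$, with $R_2(t_2^{*})=1$---is exactly what drives the paper's argument as well, and your three-case analysis of $\phi(g(t_1))=h(t_1)$ is sound.

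The route, however, is genuinely different from the paper's. You work \emph{extensionally}: you compute the explicit multiplicity of every output tuple on each side and match them up, fibred over $t_1=t|_{U_1}$. The paper instead works \emph{algebraically}: writing $S=R_1\Join R_2$, it first isolates two bag-level identities that hold under the hypothesis, namely $\pi_{U_1}(S\setminus\sigma_F S)\equiv\pi_{U_1}S\setminus\pi_{U_1}\sigma_F S$ and $(R_1\setminus\pi_{U_1}S)\cup(\pi_{U_1}S\setminus\pi_{U_1}\sigma_F S)\equiv R_1\setminus\pi_{U_1}\sigma_F S$, and then derives~\eqref{eq:lj-nat} by a short chain of rewrites using distributivity of $\rho$ and $\mu$ over~$\cup$. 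Your approach is more elementary and arguably easier to verify end-to-end; the paper's buys reusable intermediate lemmas (the two identities above reappear, in a variant form, in the proof of Proposition~\ref{prop:transfer}) and keeps the argument at the level of RA expressions rather than individual tuples.
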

\begin{proof}
Denote $R_1\Join R_2$ by $S$. Then $\pi_{U_1} S
\subseteq R_1$ implies that every tuple $t_1$ in $R_1$ can have at most
one tuple $t_2$ in $R_2$ compatible with it, and $S$ consists of all
such extensions (with their cardinality determined by $R_1$). Therefore, $\pi_{U_1}(S \setminus \sigma_F S)$ is precisely the
tuples in $R_1$ that cannot be extended in such a way that the extension  satisfies~$F$, whence 
\begin{align}
\label{eq:LJ-nat:1}
\pi_{U_1}(S \setminus \sigma_F S) & \ \ \equiv \ \ \pi_{U_1}S \setminus \pi_{U_1}\sigma_F S.
\end{align}
By a similar argument, $R_1 \setminus \pi_{U_1} S$ consists of the tuples in $R_1$ (with the same cardinality) that cannot be extended by a tuple in $R_2$, and $\pi_{U_1} S \setminus \pi_{U_1} \sigma_F S$ of those tuples that can be extended but only when $F$ is not satisfied. By taking the union of the two, we obtain
\begin{align}
\label{eq:LJ-nat:2}
(R_1 \setminus \pi_{U_1} S) \ \ \cup \ \ (\pi_{U_1} S \setminus \pi_{U_1} \sigma_F S) & \ \ \equiv \ \ R_1 \setminus  \pi_{U_1} \sigma_F S. 
\end{align}
The claim  is then proved by distributivity of $\rho$ and $\mu$ over $\cup$; see Appendix~\ref{app:sec4}.
\end{proof}

Proposition~\ref{prop:lj-nat} is, in particular, applicable if the attributes shared by $R_1$ and~$R_2$ uniquely determine tuples of $R_2$. In our running example, $\texttt{id}$ is a primary key in $\texttt{people}$,  and so we can eliminate $\neg\isNull(e^2)$ from the innermost left join, which becomes a natural left join, and then simplify the term $\textit{if}(\neg\isNull(e^2), e^2, \Null)$ in the renaming to~$e^2$  by using equivalences on complex terms: for a term~$v$ and a filter $F$ over~$U$, we have
\begin{align}
\label{eq:if:1}
\textit{if}(F \land \neg\isNull(v), v, \Null) & \ \ \equiv \ \ \textit{if}(F, v, \Null),\\
\label{eq:if:2}
\textit{if}(\top, v, \Null) & \ \  \equiv \ \ v.
\end{align}
Thus, we effectively remove the renaming operator introduced by the application of Proposition~\ref{prop:lj-nat}; for full details, see Appendix~\ref{app:example1}.

\subsection{Translation for Well-Designed SPARQL}\label{sec:wd:sparql}

We remind the reader that a SPARQL pattern $P$ that uses only $\join$, $\filter$ and binary $\leftjoin$ (that is, $\leftjoin$ with the tautological filter $\top$) is \emph{well-designed}~\cite{PeAG09} if every its subpattern $P'$ of the form $\leftjoin(P_1, P_2, \top)$ satisfies the following condition: every variable~$u$ that occurs in $P_2$ and outside $P'$ also occurs in $P_1$. 
\begin{proposition}\label{prop:wd}
If $P$ is well-designed, then its unfolded translation can be equivalently simplified by \textup{(a)} removing  all compatibility filters $\compatible_U$ from joins and left joins and \textup{(b)} eliminating all renamings $u/\coalesce(u^1,u^2)$ by replacing both $u^1$ and $u^2$ with $u$.
\end{proposition}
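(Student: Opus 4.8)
The plan is to proceed by structural induction on the well-designed pattern $P$, showing that after the CFR and LJN steps the unfolded translation of $P$ can be rewritten into a form in which (a) no compatibility filter $\compatible_U$ survives in any $\Join$ or $\LJoin$, and (b) every renaming $u/\coalesce(u^1,u^2)$ introduced by $\tra$ has been collapsed to a plain renaming $u^1 \mapsto u, u^2 \mapsto u$. The engine of the argument is Proposition~\ref{prop:lj-nat}: the well-designedness condition is exactly what guarantees, for every subpattern $\leftjoin(P_1,P_2,\top)$, that $\pi_{U_1}(\tra(P_1) \Join \tra(P_2)) \subseteq \tra(P_1)$ in the only way we need it --- namely, after the shared variables have been renamed apart, the only tuples contributing to the inner join are those where the shared attributes $u^2$ are either bound and equal to the corresponding $u^1$, or $\Null$; in the well-designed case every variable of $P_2$ that matters downstream already occurs in $P_1$, so the "extra" bindings contributed by $P_2$ never create genuine multiplicity on the $P_1$ side.

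\medskip

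\textbf{Setup and base case.} First I would fix the invariant maintained through the induction: for a well-designed $P$ with $V = \var(P)$, the optimised unfolded translation has the shape $\sigma_{G}\, \pi_{V}\, T$, where $T$ is built from the mapping-provided leaves using only $\Join$, $\LJoin$ (with join conditions that are \emph{conjunctions of $\neg\isNull$ atoms and attribute equalities between a left attribute and a renamed-apart right attribute}), $\nu$, $\rho$ (plain, no $\coalesce$), $\pi$ and the leaf-level $\sigma_{\neg\isNull(\cdot)}$ filters of the mapping, and $G$ is a conjunction of $\neg\isNull$ atoms over $V$. The base case --- $P$ a BGP --- is immediate since $\tra(B)$ contains no $\join$, $\leftjoin$ or $\compatible$ at all; the $\filter$ case is routine (push the translated filter up, it does not interact with $\compatible$); the $\project$ case is handled by the standard commutation equivalences already quoted in Sec.~\ref{sec:comp-filter-red}.

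\medskip

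\textbf{Inductive steps for $\join$ and $\leftjoin$.} For $\join(P_1,P_2)$, the translation is $\rho^{U^1\cup U^2}_{\coalesce_U}[\rho_{U^1/U}\tra(P_1) \Join_{\compatible_U} \rho_{U^2/U}\tra(P_2)]$. Using the induction hypothesis I can pull the leaf $\neg\isNull$ filters (for the shared attributes) up through both arguments by \eqref{eq:lj:filter-left:d}--\eqref{eq:lj:filter-right} and the $\sigma$/$\pi$/$\rho$ commutations, then apply the p-equivalences \eqref{eq:rem-disjunct}--\eqref{eq:rem-isnull} to collapse each conjunct $(u^1=u^2)\lor\isNull(u^1)\lor\isNull(u^2)$ of $\compatible_U$ to $(u^1=u^2)$ --- this is legitimate because well-designedness of $P$ forces $u$ to occur in \emph{both} $P_1$ and $P_2$ for the shared variables, and (crucially) the variables of $P$ occurring outside make $u^1,u^2$ effectively non-$\Null$ in the combined query; then \eqref{eq:coalesce:elimination} eliminates $\coalesce_U$ and the surviving $(u^1=u^2)$ equalities are absorbed into a plain renaming by \eqref{eq:lj:natural}--\eqref{eq:vacuous:nu}. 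For $\leftjoin(P_1,P_2,\top)$ the same moves reduce $\compatible_U$ to a conjunction of $(u^1=u^2)$, and then Proposition~\ref{prop:lj-nat} --- whose hypothesis $\pi_{U_1}(\cdot \Join \cdot)\subseteq(\cdot)$ is discharged precisely from the well-designedness restriction, since every $P_2$-variable seen downstream is in $P_1$ and so the right side adds no new key-combinations --- turns the left join into a natural left join; finally \eqref{eq:if:1}--\eqref{eq:if:2} simplify away the $\textit{if}$ renaming it introduces, and \eqref{eq:coalesce:elimination} (or directly \eqref{eq:lj:natural}--\eqref{eq:vacuous:nu}) removes the outer $\coalesce_U$, re-establishing the invariant.

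\medskip

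\textbf{The main obstacle.} The delicate point is verifying the hypothesis of Proposition~\ref{prop:lj-nat} at each $\leftjoin$ node, i.e.\ that $\pi_{U_1}(\tra(P_1)\Join\tra(P_2))\subseteq\tra(P_1)$ \emph{after} the preceding simplifications and \emph{in the context} of the outer query. This is not purely local: it relies on the standard fact that in a well-designed pattern the shared variables between the two sides of an $\leftjoin$ behave like a "key-join" because any binding of a $P_2$-variable that is visible outside $P'$ must already be fixed by $P_1$, so no spurious duplication or incompatibility survives --- and one must be careful that the earlier $\compatible_U$-reduction steps (which change the join condition and hence which tuples survive) do not invalidate this. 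I would isolate this as a lemma: for well-designed $P$ and any subpattern $\leftjoin(P_1,P_2,\top)$, the renamed-apart inner join $\rho_{U^1/U}\tra(P_1)\Join_{\compatible_U}\rho_{U^2/U}\tra(P_2)$, after absorbing mapping filters, satisfies $\pi_{U^1\text{-side}}(\cdot)\subseteq\rho_{U^1/U}\tra(P_1)$ --- and prove it by the same induction, tracking how $\var$ is threaded through nested patterns. Everything else is an accounting exercise in the equivalences already catalogued in Sections~\ref{sec:comp-filter-red} and~\ref{sec:lj:nat}; the full bookkeeping, including the handling of $\filter$ subpatterns inside $\leftjoin$ arguments and the precise form of the residual $\sigma_G$, I would defer to the appendix.
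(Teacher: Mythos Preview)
Your proposal misidentifies the engine of the argument. You make Proposition~\ref{prop:lj-nat} the central tool and devote the ``main obstacle'' paragraph to discharging its hypothesis $\pi_{U_1}(R_1 \Join R_2) \subseteq R_1$, claiming this follows from well-designedness because ``every $P_2$-variable seen downstream is in $P_1$ and so the right side adds no new key-combinations''. That claim is false: well-designedness is a purely syntactic constraint on where variables occur and says nothing about cardinalities. The pattern $\leftjoin(\{\texttt{?p :name ?n}\},\,\{\texttt{?p :hasReview ?r}\},\,\top)$ is well-designed, yet a person may have arbitrarily many reviews, so the inner join multiplies the left side and the hypothesis of Proposition~\ref{prop:lj-nat} fails outright. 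The lemma you propose to isolate therefore cannot be proved as stated, and the proof built on it does not go through.

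The paper's proof does not use Proposition~\ref{prop:lj-nat} here at all. It rests on a simpler invariant: for any variable $u$ shared at a $\join$ or $\leftjoin$ node inside a well-designed $P$, \emph{both} arguments' translations can be equivalently rewritten with a selection $\sigma_{\neg\isNull(u)}$ at the top. This holds because such a $u$ occurs both inside and outside each argument, so well-designedness forces $u$ into the left-hand side of every nested $\leftjoin$ within that argument; the $\neg\isNull$ filter supplied by the mapping at the BGP leaves then propagates upward through $\join$, $\filter$, and the \emph{left} argument of $\leftjoin$ via~\eqref{eq:lj:filter-left:d}. Once both $\neg\isNull(u^1)$ and $\neg\isNull(u^2)$ are available at the (left) join, \eqref{eq:rem-disjunct}--\eqref{eq:rem-isnull} collapse each conjunct of $\compatible_U$ to a plain equality $(u^1=u^2)$, \eqref{eq:coalesce:elimination} removes the $\coalesce$, and \eqref{eq:lj:natural}--\eqref{eq:vacuous:nu} absorb the equalities into a renaming --- no uniqueness assumption is needed anywhere. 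You in fact touch on this route (``pull the leaf $\neg\isNull$ filters up'') and even cite \eqref{eq:lj:natural}--\eqref{eq:vacuous:nu} as an alternative for the $\leftjoin$ step; the fix is to promote that side remark to the whole proof and drop Proposition~\ref{prop:lj-nat} entirely.
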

\begin{proof}
Since $P$ is well-designed, any variable $u$  occurring in the right-hand side argument of any $\leftjoin$ either does not occur elsewhere (and so, can be projected away) or also occurs in the left-hand side argument. The claim then follows from an observation that, if the translation of  $P_1$ or $P_2$ can be equivalently transformed to contain a selection with $\neg\isNull(u)$ at the top, then the translation of $\join(P_1, P_2)$, $\leftjoin(P_1, P^*, \top)$ and $\filter(P_1,F)$ can also be equivalently simplified so that it contains a selection with the $\neg\isNull(u^1)$ or, respectively, $\neg\isNull(u^2)$ condition at the top. 
\end{proof}

Rodr\'\i{}guez-Muro \& Rezk~\cite{DBLP:journals/ws/Rodriguez-MuroR15} made a similar observation. Alas, Example~\ref{ex:email-pref-simple} shows that Proposition~\ref{prop:wd} is not directly applicable to \emph{weakly} well-designed SPARQL~\cite{DBLP:conf/icdt/KaminskiK16}.

\subsection{Natural Left Join Reduction (NJR)}\label{sec:lj:red}

A natural left join can then be replaced by a natural \emph{inner} join if every tuple of its left-hand side argument has a match on the right, which can be formalised as follows.  
\begin{proposition}\label{prop:lj-red}
For relations $R_1$ and~$R_2$ over $U_1$ and $U_2$, respectively, we have  
\begin{multline}\label{eq:lj-red}
\sigma_{\neg\isNull(K)} R_1 \LJoin R_2 \ \ \equiv \ \ R_1 \Join R_2,\quad \text{ if } \delta \pi_K R_1 \subseteq \pi_K R_2,   \text{ for }  K = U_1 \cap U_2.
\end{multline}
\end{proposition}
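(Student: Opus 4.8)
The plan is to unfold the natural left join ($\LJoin = \LJoin_\top$) and to show that, under the hypothesis, its ``padding'' summand contributes nothing, so the expression collapses to the natural inner join. Write $R_1' = \sigma_{\neg\isNull(K)} R_1$ for the left operand. By the definitions of $\LJoin_\top$ and $\Join_\top$,
\[
R_1' \LJoin R_2 \ = \ (R_1' \Join R_2) \ \uplus \ \bigl(R_1' \setminus \pi_{U_1}(R_1' \Join R_2)\bigr),
\]
so it suffices to prove that the subtracted relation is empty: then $R_1' \LJoin R_2 \equiv (R_1' \Join R_2) \uplus \emptyset$, and since $\uplus$ merely pads the empty relation over $U_1$ with $\Null$s on $U_2 \setminus U_1$ (which stays empty) while leaving the other operand $R_1' \Join R_2$ over $U_1\cup U_2$ unchanged, this is just $R_1' \Join R_2$. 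Finally, $R_1' \Join R_2 \equiv R_1 \Join R_2$: by the $\Null$-sensitive compatibility convention of Sec.~2.2, any tuple of $R_1$ carrying a $\Null$ on some attribute of $K = U_1\cap U_2$ is incompatible with every tuple of $R_2$ and hence never participates in the inner join, so prepending $\sigma_{\neg\isNull(K)}$ does not change it.

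I would prove emptiness of $R_1' \setminus \pi_{U_1}(R_1' \Join R_2)$ tuple by tuple. In bag semantics $(R\setminus S)(t)$ equals $R(t)$ when $S(t)=0$ and $0$ otherwise, so it is enough to show $\pi_{U_1}(R_1' \Join R_2)(t_1) > 0$ for every $t_1$ with $R_1'(t_1) > 0$. Fix such a $t_1$: it lies in $R_1$ and is non-$\Null$ on all of $K$, so its restriction $t_1|_K$ occurs in $\pi_K R_1$ with positive multiplicity, whence $\delta\pi_K R_1(t_1|_K) = 1$. The hypothesis $\delta\pi_K R_1 \subseteq \pi_K R_2$ then gives $\pi_K R_2(t_1|_K) \ge 1$, i.e.\ there is a tuple $t_2 \in R_2$ with $t_2|_K = t_1|_K$. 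Since $t_1$ and $t_2$ agree, with a non-$\Null$ value, on $U_1\cap U_2 = K$, they are compatible, $t_1 \smerge t_2$ is well defined over $U_1\cup U_2$, and it occurs in $R_1'\Join R_2$ with multiplicity at least $R_1'(t_1)\cdot R_2(t_2) > 0$; restricting $t_1\smerge t_2$ to $U_1$ recovers $t_1$, so $\pi_{U_1}(R_1'\Join R_2)(t_1) > 0$, as required.

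Combining the three observations yields the equivalence. The step I expect to require the most care is the bag-semantics bookkeeping together with the $\Null$-sensitive notion of compatibility: one must invoke $\delta$ in the hypothesis exactly at the point where only a single compatible witness $t_2\in R_2$ is needed (with no multiplicity requirement on $R_2$), and one must observe that once the padding summand vanishes the multiplicities of $R_1'\LJoin R_2$ and $R_1'\Join R_2$ already agree term by term, so no further counting is necessary; the reduction of $R_1\Join R_2$ to $R_1'\Join R_2$ via $\Null$-incompatibility is the other place where the convention of Sec.~2.2 (rather than SPARQL-style compatibility) is essential.
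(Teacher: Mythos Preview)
Your proof is correct and is essentially a fully worked-out version of what the paper merely labels ``careful inspection of definitions'': you unfold $\LJoin_\top$, use the hypothesis to produce a compatible witness $t_2\in R_2$ for every $t_1\in\sigma_{\neg\isNull(K)}R_1$ so the padding summand vanishes, and then drop the $\sigma_{\neg\isNull(K)}$ via the $\Null$-sensitive compatibility of $\Join$. The paper also sketches an alternative route---renaming the shared attributes apart, viewing the natural left join as carrying an extra selection with filters $(u^1=u^2)\lor\isNull(u^2)$, eliminating the $\isNull(u^2)$ disjuncts from the hypothesis, and invoking the outer-join simplification of Galindo-Legaria and Rosenthal---but your direct argument is at least as clean and avoids the external reference.
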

\begin{proof}
By careful inspection of definitions. Alternatively, one can assume
that the left join has an additional selection on top with filters of
the form $(u^1 = u^2) \lor \isNull(u^2)$, for $u\in K$,  where $u^1$
and $u^2$ are duplicates of attributes from $R_1$ and $R_2$,
respectively. Given $\delta \pi_K R_1 \subseteq \pi_K
R_2$, one can eliminate the $\isNull(u^2)$ because any
tuple of $R_1$ has a match in $R_2$. The resulting
$\Null$-rejecting filter then effectively turns the left join to an
inner join by the outer join simplification of Galindo-Legaria \& Rosenthal~\cite{GaRo97}.
\end{proof}

Observe that the inclusion $\delta \pi_K R_1 \subseteq \pi_K R_2$ is
satisfied, for example,  if $R_1$ has a foreign key $K$ referencing
$R_2$. It can also be satisfied if both $R_1$ and $R_2$ 
are based on the same relation, that is, $R_i \equiv \sigma_{F_i}\pi_{\dots} R$, for $i = 1,2$, and $F_1$ logically
implies $F_2$, where $F_1$ and/or $F_2$ can be $\top$ for the vacuous selection. Note that, due to $\delta$, attributes~$K$ do not have to uniquely determine tuples in $R_1$ or $R_2$. In our running example, trivially, 
$\delta \pi_{\{p\}} (\pi_{\{p/\urione(\texttt{id}),\ n/\texttt{fullName}\}} \texttt{people}) \ \ \subseteq \ \ \pi_{\{p\}} (\pi_{\{p/\urione(\texttt{id}),\ e^2/\texttt{workEmail}\}} \texttt{people})$.
Therefore, the inner left join can be replaced by a natural inner join, which can then be eliminated altogether because \texttt{id} is the primary key in \texttt{people} (this is a well-known optimisation; see, e.g.,~\cite{Elmasri:2010uk,RoKZ13}). As a result, we obtain\\[2pt]
\centerline{\begin{tikzpicture}[xscale=2.9,yscale=0.47]\small
\node[nd,fill=white] (u3) at (4.25,3.1) {\texttt{people}};
\node[nd,fill=white] (u3r) at (4.25,4) {$\pi_{\{p/\urione(\texttt{id}),\ e^3/\texttt{homeEmail}\}}$};
%
%
\node[nd,fill=white] (p1) at (1.75,3.1) {\texttt{people}};
\node[nd,fill=white] (p1r) at (1.75,4) {$\pi_{\{p/\urione(\texttt{id}),\ n/\texttt{fullName},\ e^2/\texttt{workEmail}\}}$};
%
\node[nd] (lj2) at (3,5) {$\LJoin_{[(e^2 = e^3) \lor \isNull(e^2)] \land \neg\isNull(e^3)}$}; 
\node[nd] (lj2s) at (3,6.9) {$\sigma_{\neg\isNull(p)\land  \neg\isNull(n)}$}; 
\node[nd] (lj2r) at (3,6) {$\bar{\rho}_{\{e/\coalesce(e^2,e^3)\}}$}; 
\draw (u3) -- (u3r);
%
%
\draw (p1) -- (p1r);
%
\draw (p1r) -- (lj2);
%
\draw (u3r) -- (lj2);
\draw (lj2) -- (lj2r);
\draw (lj2r) -- (lj2s);
\end{tikzpicture}}\\[0pt]
The running example is wrapped up and discussed in detail in Appendices~\ref{app:example1} and~\ref{app:example1:discussion}.

\subsection{Join Transfer (JT)}\label{sec:join-transfer}

To introduce and explain another optimisation, we need an extension of relation \texttt{people} with a nullable attribute \texttt{spouseId}, which contains the \texttt{id} of the person's spouse if they are married and \texttt{NULL} otherwise. The attribute is mapped by an additional assertion:
\begin{equation*}
      \triple{\urione(\texttt{id})}{\texttt{:hasSpouse}}{\urione(\texttt{spouseId})} \quad\leftarrow\quad
                                   \sigma_{\notnull{\texttt{id}}  \land \notnull{\texttt{spouseId}}}\texttt{people}.
\end{equation*}
Consider now the following query in SPARQL algebra: 
\begin{equation*}
\project(\leftjoin(\texttt{?p :name ?n}, \ \join(\texttt{?p :hasSpouse ?s},\ \texttt{?s :name ?sn}), \ \top), \ \{\,\texttt{?n}, \texttt{?sn} \,\}),
\end{equation*}
whose translation can be unfolded and simplified with optimisations in Secs.~\ref{sec:comp-filter-red} and~\ref{sec:lj:nat} into the following RA query
(we have also pushed down the filter~$\neg\isNull(sn)$ to the right argument of the join and, for brevity, omitted selection and projection at the top):\\[2pt]
\centerline{\begin{tikzpicture}[xscale=2.9,yscale=0.47]\small
\node[nd,fill=white] (p3) at (4,1.6) {\texttt{people}};
\node[nd,fill=white] (p3r) at (4,2.5) {$\pi_{\{s/\urione(\texttt{id}),\ sn/\texttt{fullName}\}}$};
\node[nd,fill=white] (p3s) at (4,3.4) {$\sigma_{\neg\isNull(sn)}$};
%
\node[nd,fill=white] (p2) at (2.5,2.5) {\texttt{people}};
\node[nd,fill=white] (p2r) at (2.5,3.4) {$\pi_{\{p/\urione(\texttt{id}),\ s/\urione(\texttt{spouseId})\}}$};
%
\node[nd,fill=white] (p1) at (1,2.9) {\texttt{people}};
\node[nd,fill=white] (p1r) at (1,3.8) {$\pi_{\{p/\urione(\texttt{id}),\ n/\texttt{fullName}\}}$};
%
\node[nd] (j) at (3.35,4) {$\Join$}; 
%
\node[nd] (lj) at (2.25,4.6) {$\LJoin_{\neg\isNull(s)}$};
%
\draw (p3) -- (p3r);
\draw (p3r) -- (p3s);
\draw (p2) -- (p2r);
%
\draw (p1) -- (p1r);
%
\draw (p2r) -- (j);
\draw (p3s) -- (j);
%
\draw (j) -- (lj);
\draw (p1r) -- (lj);
\end{tikzpicture}}\\ 
see Appendix~\ref{app:join:transfer} for full details. Observe that the inner join cannot be eliminated using the standard self-join elimination techniques because it is not on a primary (or alternate) key. The next proposition (proved in Appendix~\ref{app:sec4}) provides a solution for the issue.
\begin{proposition}\label{prop:transfer}
Let $R_1$, $R_2$ and $R_3$ be relations over $U_1$, $U_2$ and $U_3$, 
respectively, $F$ a filter over $U_1\cup U_2 \cup U_3$ and $w$ an attribute in $U_3 \setminus (U_1 \cup U_2)$. Then
\begin{multline}\label{eq:transfer}
R_1   \LJoin_F  (R_2 \Join \sigma_{\neg\isNull(w)} R_3) \ \ \equiv\\[-2pt]
\rho^{\{U_2\setminus U_1\}}_{\{u/\textit{if}(\neg\isNull(w), u, \Null)\ \mid \ u\in
    U_2\setminus U_1 \}} ((R_1 \Join R_2) \LJoin_F \sigma_{\neg\isNull(w)} R_3),\\[-2pt] \text{ if } \pi_{U_1}(R_1 \Join R_2) \equiv R_1.
\end{multline}
\end{proposition}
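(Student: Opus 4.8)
\emph{Proof plan.} The plan is to reduce the claim to a tuple-by-tuple comparison of multiplicities, the crucial point being that the hypothesis $\pi_{U_1}(R_1\Join R_2)\equiv R_1$ makes $R_1\Join R_2$ a \emph{deterministic enrichment} of $R_1$. Exactly as in the proof of Proposition~\ref{prop:lj-nat}, $\pi_{U_1}(R_1\Join R_2)\subseteq R_1$ already forces every $t_1\in R_1$ to have \emph{at most} one compatible tuple in $R_2$; the equality then forces \emph{exactly} one, say $f(t_1)$, with $R_2(f(t_1))=1$, so that $g(t_1):=t_1\smerge f(t_1)$ defines a multiplicity-preserving bijection between the supports of $R_1$ and $R_1\Join R_2$, with $g(t_1)|_{U_1}=t_1$, $g(t_1)|_{U_2}=f(t_1)$ and $(R_1\Join R_2)(g(t_1))=R_1(t_1)$. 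Consequently, writing $S=R_2\Join\sigma_{\neg\isNull(w)}R_3$, the tuples of $S$ compatible with a given $t_1\in R_1$ are exactly the $f(t_1)\smerge t_3$ for $t_3\in\sigma_{\neg\isNull(w)}R_3$ compatible with $g(t_1)$, each of multiplicity $R_3(t_3)$, and $t_1\smerge(f(t_1)\smerge t_3)=g(t_1)\smerge t_3$.

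Next I would unfold both outer left joins by their definition, splitting the left-hand side as $(R_1\Join_F S)\uplus\bigl(R_1\setminus\pi_{U_1}(R_1\Join_F S)\bigr)$ and the right-hand side, before the outer renaming, as $\bigl((R_1\Join R_2)\Join_F\sigma_{\neg\isNull(w)}R_3\bigr)\uplus\bigl((R_1\Join R_2)\setminus\pi_{U_1\cup U_2}(\cdots)\bigr)$, so that each side is a union of a \emph{matched} block (the filtered inner join) and a \emph{padded} block (the padded set difference). The organising observation is that $w\in U_3\setminus(U_1\cup U_2)$ and the right operand on \emph{both} sides is $\sigma_{\neg\isNull(w)}R_3$: hence every tuple of the matched block has $w$ non-$\Null$ while every tuple of the padded block has $w$ padded to $\Null$, and the two blocks are disjoint on each side. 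After pushing the outer renaming through the outer union by distributivity of $\rho$ and $\mu$ over $\cup$ (as at the end of the proof of Proposition~\ref{prop:lj-nat}), that renaming acts as the identity on the matched block (by \eqref{eq:if:2}, possibly after \eqref{eq:if:1}) and rewrites every $u\in U_2\setminus U_1$ to $\Null$ on the padded block.

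It then remains to compare the blocks. For the matched blocks, the description of $S$ above shows that $R_1\Join_F S$ consists of the tuples $g(t_1)\smerge t_3$ with $t_3\in\sigma_{\neg\isNull(w)}R_3$ compatible with $g(t_1)$ and $F^{g(t_1)\smerge t_3}=\top$, each of multiplicity $R_1(t_1)\cdot R_3(t_3)$; and $(R_1\Join R_2)\Join_F\sigma_{\neg\isNull(w)}R_3$ yields the very same tuples with the very same multiplicities, since $(R_1\Join R_2)(g(t_1))=R_1(t_1)$ and the renaming leaves them unchanged. For the padded blocks, $t_1\in R_1$ survives the set difference on the left (respectively $g(t_1)$ survives it on the right) precisely when no $t_3\in\sigma_{\neg\isNull(w)}R_3$ compatible with $g(t_1)$ satisfies $F$; in that case the left side produces $t_1$ extended by $\Null$ on $(U_2\cup U_3)\setminus U_1$ with multiplicity $R_1(t_1)$, while the right side produces $g(t_1)$ extended by $\Null$ on $U_3\setminus(U_1\cup U_2)$ with multiplicity $R_1(t_1)$, whereupon the renaming, with $w=\Null$, additionally nulls out $U_2\setminus U_1$, giving the identical tuple. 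Summing the two blocks proves the equivalence.

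I expect the main obstacle to be the bag-semantics bookkeeping rather than anything conceptual: checking that on each side the matched and padded blocks genuinely partition the output (this is exactly where $\neg\isNull(w)$ is used), and that multiplicities are carried correctly across the set difference inside $\LJoin_F$ and across the natural joins --- with the extra nuisance that $R_1$ and $R_3$ may share attributes outside $U_2$, which must be threaded through both derivations consistently although they cause no genuine difficulty. Once the hypothesis collapses $R_1\Join R_2$ to a deterministic re-decoration of $R_1$, the two sides are essentially the same left join performed before versus after that decoration, and the $\textit{if}(\neg\isNull(w),\cdot,\Null)$ renaming is precisely what undoes the decoration on the rows the left join fails to extend.
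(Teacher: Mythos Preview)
Your proposal is correct and follows essentially the same route as the paper: both arguments split each side into a matched block and a padded block, use $w\in U_3\setminus(U_1\cup U_2)$ together with $\sigma_{\neg\isNull(w)}$ to separate them, exploit the hypothesis to set up a multiplicity-preserving bijection between $R_1$ and $R_1\Join R_2$, and push the $\textit{if}(\neg\isNull(w),\cdot,\Null)$ renaming through the outer union. The only difference is in packaging: the paper runs a short chain of algebraic equivalences, isolating as the key step the identity $\pi_{U_1}\bigl(S\setminus\pi_{U_1\cup U_2}(S\Join_F R_3')\bigr)\equiv\pi_{U_1}S\setminus\pi_{U_1}(S\Join_F R_3')$ with $S=R_1\Join R_2$ and $R_3'=\sigma_{\neg\isNull(w)}R_3$ (the analogue of~\eqref{eq:LJ-nat:1}), and then recognises the right-hand side left join; you instead unfold both sides in parallel and compare multiplicities tuple by tuple, which is slightly more elementary but amounts to re-deriving that identity in situ.
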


By Proposition~\ref{prop:transfer},  we take $sn$ as the non-nullable attribute $w$ and get the following:\\[4pt]
\centerline{\begin{tikzpicture}[xscale=2.9,yscale=0.47]\small
\node[nd,fill=white] (p3) at (4,2.1) {\texttt{people}};
\node[nd,fill=white] (p3r) at (4,3) {$\pi_{\{s/\urione(\texttt{id}),\ sn/\texttt{fullName}\}}$};
\node[nd,fill=white] (p3s) at (4,3.9) {$\sigma_{\neg\isNull(sn)}$};
\node[nd,fill=white] (p2) at (2.5,2.1) {\texttt{people}};
\node[nd,fill=white] (p2r) at (2.5,3) {$\pi_{\{p/\urione(\texttt{id}),\ s/\urione(\texttt{spouseId})\}}$};
%
\node[nd,fill=white] (p1) at (1,2.1) {\texttt{people}};
\node[nd,fill=white] (p1r) at (1,3) {$\pi_{\{p/\urione(\texttt{id}),\ n/\texttt{fullName}\}}$};
%
\node[nd] (j) at (1.75,4) {$\Join$}; 
%
\node[nd] (lj) at (2.75,4.7) {$\LJoin_{\neg\isNull(s)}$};
\node[nd] (ljr) at (2.75,5.6) {$\rho_{s / \textit{if}(\neg\isNull(sn), s, \Null)}$};  
%
\draw (p3) -- (p3r);
\draw (p3r) -- (p3s);
\draw (p2) -- (p2r);
%
\draw (p1) -- (p1r);
%
\draw (p1r) -- (j);
\draw (p2r) -- (j);
%
\draw (j) -- (lj);
\draw (p3s) -- (lj);
\draw (lj) -- (ljr);
\end{tikzpicture}}\\ 
Now, the inner self-join can be eliminated (as \texttt{id} is the primary key of \texttt{people}) and the~$\rho$ operation removed (as its result is projected away); see Appendix~\ref{app:join:transfer}. 

\subsection{Left Join Decomposition (LJD): Left Join Simplification~\cite{GaRo97} Revisited}\label{sec:lj:simpl}

In Sec.~\ref{sec:lj:red}, we have given an example of a reduction of a left join to an inner join. The following equivalence is also helpful (for an example, see Appendix~\ref{app:example1:discussion2}): for relations $R_1$ and $R_2$ over $U_1$ and $U_2$, respectively, and a filter $F$ over $U_1\cup U_2$,
\begin{align}\label{eq:lj:to:simple}
\pi_{U_1}(R_1 \LJoin_F R_2) \ \ \equiv \ \ R_1, \quad \text{ if }\quad \pi_{U_1} (R_1 \Join R_2) \subseteq R_1.
\end{align}

Galindo-Legaria \& Rosenthal~\cite{GaRo97} observe that $\sigma_G(R_1 \LJoin_F R_2)\equiv R_1 \Join_{F \land G} R_2$ whenever $G$ rejects $\Null$s on $U_2\setminus U_1$. In the context of SPARQL, however, the compatibility condition $\compatible_U$ does not satisfy the $\Null$-rejection requirement, and so, this optimisation is often not applicable. In the rest of this section we refine the basic idea. 

Let $R_1$ and $R_2$ be relations over $U_1$ and $U_2$, respectively, and $F$ and $G$ filters over~\mbox{$U_1 \cup U_2$}. It can easily be verified that, in general, we can \emph{decompose} the left join:
\begin{multline}
\label{eq:lj:general-filter}
 \sigma_{G}(R_1 \LJoin_F R_2) \ \  \equiv \ \  (R_1 \Join_{F \land
                                     G} R_2)  \ \ \ \ \uplus \\[-2pt]    
     \sigma_{\nullify_{U_2 \setminus U_1}(G)}R_1
\setminus \pi_{U_1}(R_1 \Join_{F \land \nullify_{U_2 \setminus U_1}(G)} R_2),
\end{multline}
where $\nullify_{U_2\setminus U_1}(G)$ is the result of replacing every occurrence of an attribute from $U_2\setminus U_1$ in $G$ with $\Null$. Observe that if $G$ is $\Null$-rejecting on $U_2\setminus U_1$, then\linebreak $\nullify_{U_2\setminus U_1}(G)\equiv^{\scriptscriptstyle+}\bot$, and the second component of the union in~\eqref{eq:lj:general-filter} is empty. We, however, are interested in a subtler interaction of the filters when the second component of the difference or, respectively, the first component of the union is empty: 
\begin{align}
\notag{} 
\sigma_{G}(R_1 \LJoin_F R_2) \ \ & \equiv \ \  R_1 \Join_{F \land G} R_2 \ \ \
                              \uplus \ \ \ \sigma_{\nullify_{U_2 \setminus U_1}(G)}R_1, \\[-2pt]
\label{eq:lj:no-minus}                              
& \hspace*{7em} \text{ if } F\land \nullify_{U_2 \setminus U_1}(G) \equiv^{\scriptscriptstyle+} \bot, \\[4pt]
\notag{} 
 \sigma_{G}(R_1 \LJoin_F \sigma_{\neg\isNull(w)} R_2) \ \ & \equiv \ \ 
                                     \sigma_{\isNull(w)\land \nullify_{U_2
                                     \setminus U_1}(G)}(R_1 \LJoin_F  \sigma_{\neg\isNull(w)} R_2),\\[-2pt]
\label{eq:lj:minus-encoding}                                    
& \hspace*{7em} \text{ if } F\land G \equiv^{\scriptscriptstyle+} \bot \text{ and } w\in U_2\setminus U_1.
\end{align}
These cases are of particular relevance for the SPARQL-to-SQL translation of \texttt{OPTIONAL} and \texttt{MINUS}.  We illustrate the technique in Appendix~\ref{app:vertical} on the following example:
\begin{align*}
&\filter(\leftjoin(\leftjoin(\texttt{?p a :Product}, \\[-3pt]
&\hspace*{1em} \filter(\texttt{\{ ?p :hasReview ?r . ?r :hasLang ?l \}}, \texttt{?l} = \texttt{"en"}), \ \top),\\[-3pt]
&\hspace*{1em} \filter(\texttt{\{ ?p :hasReview ?r . ?r :hasLang ?l \}}, \texttt{?l} = \texttt{"zh"}), \ \top),\  \textit{bound}(\texttt{?r})).
\end{align*}
The technique relies on two properties of $\Null$ propagation from the right-hand side of left joins. Let $R_1$ and $R_2$ be relations over $U_1$ and $U_2$, respectively. First, if $v = v'$ is a left join condition and $v$ is a term over $U_2\setminus U_1$, then $v$ is either $\Null$ or~$v'$ in the result:
\begin{align}\label{eq:lj:prop:1}
R_1 \LJoin_{F\land (v = v')} R_2 \ \equiv \ \ \sigma_{\isNull(v) \lor (v = v')} (R_1 \LJoin_{F\land (v = v')} R_2). 
\end{align}
Second, non-nullable terms $v, v'$ over $U_2\setminus U_1$ are simultaneously either $\Null$ or not $\Null$:
\begin{multline}\label{eq:lj:prop:2}
R_1 \LJoin_{F} \sigma_{\neg\isNull(v) \land\neg\isNull(v')} R_2 \ \ \equiv \\[-2pt] \sigma_{[\neg\isNull(v) \land \neg\isNull(v')] \lor [\isNull(v) \land \isNull(v')]} (R_1 \LJoin_F \sigma_{\neg\isNull(v) \land\neg\isNull(v')} R_2). 
\end{multline}
The two equivalences introduce \emph{no new} filters apart from $\isNull$ and their negations. The introduced filters, however, can help simplify the join conditions of the left joins containing the left join under consideration.


\section{Experiments}
\label{sec:experiments}

In order to verify effectiveness of our optimisation techniques,
we carried out a set of experiments based on the BSBM
benchmark~\cite{BiSc09};  the materials for reproducing the
experiments are available
online\footnote{\url{https://github.com/ontop/ontop-examples/tree/master/iswc-2018-optional}}\!.\
The BSBM benchmark is built around an e-commerce use case in which
vendors offer products that can be reviewed by customers. It
comes with a mapping, a data generator and a set of SPARQL 
and equivalent SQL queries.

\noindent\textbf{Hardware and Software.}
The experiments were performed on a \texttt{t2.xlarge} Amazon EC2 instance with four 64-bit vCPUs,
16G memory and 500G SSD hard disk under Ubuntu~16.04LTS. 
We used five database engines: free MySQL~5.7 and
PostgreSQL~9.6 are run normally, and 3 commercial systems (which we shall call X, Y and Z) in
Docker.

\noindent\textbf{Queries.} In total, we consider 11 SPARQL queries. 
Queries Q1--Q4 are based on the original BSBM queries 2, 3, 7 and 8, which contain
\texttt{OPTIONAL}; 
we modified them to reduce selectivity: e.g., Q1, Q3 and Q4 retrieve information about 1000 products rather than a single product in the original BSBM queries; we also removed \texttt{ORDER BY} and \texttt{LIMIT} clauses. Q1--Q4 are well-designed (WD).
In addition, we created 7 weakly well-designed (WWD) SPARQL queries:
Q5--Q7 are similar to Example~\ref{ex:email-pref-simple}, Q8--Q10 
to the query in Sec.~\ref{sec:lj:simpl}, and Q11 is along the lines of Sec.~\ref{sec:join-transfer}. More information is below:\\[2pt]
\centerline{\small\renewcommand{\tabcolsep}{2pt}\renewcommand{\arraystretch}{1}%
  \begin{tabular}{cp{76mm}cc}
    \toprule
    query & \hfil description & \!SPARQL\! & optimisations \\
    \midrule
 \rowcolor{lightgray!20}    Q1 & 2 simple \texttt{OPTIONAL}s
         for the padding effect\newline\scriptsize (derived from BSBM query 2) & WD & LJN, NLJR \\ 
    Q2 & 1 \texttt{OPTIONAL} with a \texttt{!BOUND} filter  (encodes \texttt{MINUS})\newline\scriptsize derived from BSBM query 3 & WD & JT \\ 
 \rowcolor{lightgray!20}    Q3 & 2 outer-level \texttt{OPTIONAL}s, the latter with 2 nested \texttt{OPTIONAL}s\newline\scriptsize derived from BSBM query 7 & WD & LJN, NLJR \\ 
    Q4 & 4 \texttt{OPTIONAL}s: ratings from attributes of the same relation\newline\scriptsize derived from BSBM query 8
                    &  WD   & LJN, NLJR \\ 
 \rowcolor{lightgray!20}    Q5/6/7 & 2/3/4 \texttt{OPTIONAL}s: preference over 2/3/4 ratings of reviews & WWD & LJN, NLJR \\ 
    Q8/9/10 & 2/3/4 \texttt{OPTIONAL}s: preference of reviews over 2/3/4 languages  & WWD & LJN, LJD\\
 \rowcolor{lightgray!20}   Q11 & 2 \texttt{OPTIONAL}s: country-based preference of home pages of\newline reviewed products  & WWD & LJN, NLJR, JT \\ 
    \bottomrule
  \end{tabular}
}

\smallskip

\noindent\textbf{Data.} We used the BSBM generator to produce CSV files for 1M products and 10M
reviews. The CSV files (20GB) were loaded into 
DBs, with the required indexes created.

\noindent\textbf{Evaluation.}
For each SPARQL query, we computed two SQL translations. The
\emph{non-optimised} (N/O) translation is obtained by applying to the
unfolded query only the standard (previously known and widely adopted)
structural and semantic optimisations~\cite{CCKK*17} 
as well as CFR (Sec.~\ref{sec:comp-filter-red}) to
simplify compatibility filters and eliminate unnecessary \texttt{COALESCE}.
To obtain the
\emph{optimised}  (O) translations, we  further applied the other optimisation techniques presented in Sec.~\ref{sec:optimization} (as described in the table above).
We note that the optimised Q1 and Q4 have the
same structure as the SQL queries in the original benchmark suite.
On the other hand, the optimised Q2 is different from the SQL query in BSBM  because the latter uses (\texttt{NOT}) \texttt{IN}, which is not considered in our optimisations.

Each query was executed three times with cold runs to avoid any variation due to caching.
The size of query answers and their running times (in secs) are as follows:
\\[6pt]
\centerline{\renewcommand{\tabcolsep}{3.5pt}\renewcommand{\arraystretch}{0.95}\small\begin{tabular}{c|r|rr|rr|rr|rr|rr} \toprule
      & \multicolumn{1}{c|}{\#} &  \multicolumn{2}{c|}{PostgreSQL} & \multicolumn{2}{c|}{MySQL} & \multicolumn{2}{c|}{X} & \multicolumn{2}{c|}{Y} & \multicolumn{2}{c}{Z}\\
                    query &answers & \multicolumn{1}{c}{N/O} & \multicolumn{1}{c|}{O} &  \multicolumn{1}{c}{N/O} & \multicolumn{1}{c|}{O} &  \multicolumn{1}{c}{N/O} & \multicolumn{1}{c|}{O} &  \multicolumn{1}{c}{N/O} & \multicolumn{1}{c|}{O}&  \multicolumn{1}{c}{N/O} & \multicolumn{1}{c}{O}\\
                    \midrule
 \rowcolor{lightgray!20}                    Q1	&	19,267	&	1.79	&	1.77	&	0.43	&	0.38	&	0.90	&	0.80	&	0.56	&	0.52	&	29.06	&	25.09	\\
                       Q2	&	6,746	&	18.75	&	2.07	&	19.95	&	0.36	&	40.00	&	16.07	&	0.44	&	0.37	&	27.99	&	5.97	\\
 Q2{\scriptsize \textsc{bsbm}} &  & &  3.88 & & 0.37 & & 20.55 & & 0.38 & & 5.91\\
\rowcolor{lightgray!20}                    Q3	&	1,355	&	4.20	&	0.09	&	4.70	&	0.11	&	5.50	&	1.60	&	2.04	&	0.14	&	5.45	&	0.65	\\
                    Q4	&	1,174	&	2.14	&	0.16	&	0.86	&	0.04	&	3.00	&	0.60	&	1.78	&	0.11	&	4.38	&	0.53	\\
  \rowcolor{lightgray!20}                   Q5	&	2,294	&	0.56	&	0.05	&	0.01	&	0.01	&	1.80	&	0.30	&	0.30	&	0.08	&	0.51	&	0.53	\\
                    Q6	&	2,294	&	102.35	&	0.18	&	>10{\scriptsize min}	&	0.04	&	1.90	&	0.40	&	4.50	&	0.14	&	0.82	&	0.54	\\
  \rowcolor{lightgray!20}                   Q7	&	2,294	&	102.00	&	0.17	&	>10{\scriptsize min}	&	0.04	&	2.60	&	0.40	&	14.57	&	0.14	&	1.21	&	0.53	\\
                    Q8	&	1,257	&	0.07	&	0.06	&	0.01	&	0.01	&	8.40	&	1.30	&	0.08	&	0.08	&	295.25	&	0.40	\\
   \rowcolor{lightgray!20}                  Q9	&	1,311	&	101.20	&	0.16	&	>10{\scriptsize min}	&	0.04	&	>10{\scriptsize min}	&	2.70	&	4.30	&	0.11	&	>10{\scriptsize min}	&	0.43	\\
                    Q10	&	1,331	&	103.30	&	0.15	&	>10{\scriptsize min}	&	0.05	&	>10{\scriptsize min}	&	4.20	&	5.20	&	0.14	&	>10{\scriptsize min}	&	0.43	\\
 \rowcolor{lightgray!20} Q11	&	3,388	&	5.26	&	0.87	&	3.80	&	0.21	&	107.06	&	2.68	&	177.95	&	0.22	&	7.82	&	0.13	\\															                    \bottomrule
  \end{tabular}}

\bigskip

\noindent The main outcomes of our experiments can be summarised as follows.
\begin{compactenum}[(a)]
\item The running times confirm that the optimisations are effective
  for all database engines. All
  optimised translations show better
  performance in all DB engines, and most of them can be evaluated in
  less than a second.
\item Interestingly, our optimised translation is even slightly 
  more efficient than the SQL
  with (\texttt{NOT}) \texttt{IN} from the original
  BSBM suite (see Q2{\scriptsize \textsc{bsbm}} in the table).
\item The effects of the optimisations are significant. In particular, for challenging
  queries (some of which time out after 10 mins), it can be up to three orders of magnitude.
\end{compactenum}


\section{Discussion and Conclusions}
\label{sec:conclusions}

The optimisation techniques we presented are intrinsic to SQL queries obtained by translating SPARQL in the context of OBDA with mappings,
and their novelty is due to the interaction of the components in the OBDA setting.
Indeed, the optimisation of \texttt{LEFT} \texttt{JOIN}s 
can be seen as a form of ``reasoning'' on the structure of the query, the
 data source and the mapping. For instance, 
when functional and inclusion dependencies along with attribute nullability are taken into account, 
one may infer that every tuple
from the left argument of a \texttt{LEFT} \texttt{JOIN} is guaranteed to match (\emph{i}) at least one
or (\emph{ii}) at most one tuple on the right. 
This information can allow one to replace \texttt{LEFT} \texttt{JOIN} by a simpler operator
such as an \texttt{INNER} \texttt{JOIN},  which can further be optimised by the
known techniques.

Observe that, in normal SQL queries, most of the
\texttt{NULL}s come from the database rather than from
operators like \texttt{LEFT} \texttt{JOIN}. In contrast, SPARQL triple patterns 
always bind their variables (no \texttt{NULL}s), and only operators like \texttt{OPTIONAL} can ``unbind''
them. 
In our experiments, we noticed that avoiding the padding effect is probably the most
effective outcome of the \texttt{LEFT} \texttt{JOIN} optimisation techniques in the
OBDA setting.

From the Semantic Web
perspective, our optimisations exploit information unavailable in RDF 
triplestores, namely, database integrity
constraints and mappings. From the DB perspective, we believe that such techniques
have not been developed because 
\texttt{LEFT} \texttt{JOIN}s and/or complex conditions like compatibility filters are not introduced accidentally in
expert-written SQL queries. The results of our
evaluation support this hypothesis and show a significant performance improvement, even for commercial DBMSs.

We are working on implementing these techniques in the  OBDA
system Ontop~\cite{CCKK*17}.


\noindent\textbf{Acknowledgements}
We thank the reviewers for their suggestions. This work
was supported by the OBATS project at the Free University of
Bozen-Bolzano and by the Euregio (EGTC) IPN12 project KAOS.

\bibliographystyle{abbrv}


\clearpage
\appendix

\newenvironment{theoremnum}[1]{\smallskip\noindent\textbf{Theorem~#1.}
  \hspace*{0.3em}\em}{\par\smallskip}
\newenvironment{lemmanum}[1]{\smallskip\noindent\textbf{Lemma~#1.}
  \hspace*{0.3em}\em}{\par\smallskip}

\section{Full Translation and Proof of Theorem~\ref{thm:sparql-sql}}\label{app:proof}

\paragraph{Syntax.} We consider \emph{graph patterns}, $P$, defined by the grammar
\begin{multline*}
  P \ ::= \
    B \ \mid \  \filter(P,F)  \ \mid \ \bind(P,v,c) \ \mid  \
    \union(P_1,P_2) \ \mid \join(P_1,P_2) \ \mid\\
    \leftjoin(P_1,P_2,F) \ \mid \ \minus(P_1, P_2) \ \mid \  \project(P,
    L) \ \mid \ \distinct(P),
\end{multline*}  
where $B$ is a BGP, $v\in \myVAR$ does not occur in $P$,
$c\in \sDOM$ is a constant,
$L\subseteq\myVAR$, and $F$, called \emph{filter}, is a formula constructed
using the logical connectives $\land$ and~$\neg$ from atoms of the form
$\textit{bound}(v)$, \mbox{$(v=c)$}, $(v=v')$, for $v,v' \in \myVAR$ and
$c \in \sDOM$, and possibly other built-in predicates.  The set of variables in
$P$ is denoted by $\var(P)$.  We assume (without mentioning it again) that all graph patterns 
of the form $\filter(P,F)$ are safe in the sense that
every variable in $F$ also occurs in $P$.

We do not consider solution modifiers other than $\distinct$ and $\project$; we 
also define a simplified variant of $\bind$, where $c$ is a constant rather than an (arithmetic)
expression (which are beyond the scope of the paper).  Our results, however, can easily be extended 
to the general form of $\bind$.

\paragraph{Semantics.} The semantics of SPARQL operations is defined as follows:
\begin{compactitem} 
\item $\filter(\Omega,F) = \Omega'$, where $\Omega'(s) = \Omega(s)$ if $s\in\Omega$ and $F^s = \top$, and $0$ otherwise;
\item $\bind(\Omega,v,c) = \Omega'$, where $\Omega'( s \smerge \{ v \mapsto c \}) = \Omega(s)$ if $s\in\Omega$, and $0$ otherwise;
\item $\union(\Omega_1, \Omega_2) = \Omega$, where  $\Omega(s) = \Omega_1(s) + \Omega_2(s)$;
\item $\join(\Omega_1,\Omega_2) = \Omega$, where
$\Omega(s) = \hspace*{-0.5em}\sum\limits_{\begin{subarray}{c}s_1\in \Omega_1, s_2 \in \Omega_2 \text{ with}\\s_1\sim s_2 \text{ and } s_1\smerge s_2 = s\end{subarray}} \hspace*{-1.5em}\Omega_1(s_1) \times \Omega_2(s_2)$;
\item $\leftjoin(\Omega_1, \Omega_2, F) = \union(\filter(\join(\Omega_1, \Omega_2), F), \Omega)$, where 
$\Omega(s) = \Omega_1(s)$ if $F^{s\smerge s_2} \ne \top$, for all $s_2\in \Omega_2$ compatible with $s$, and $0$ otherwise;
\item $\minus(\Omega_1,\Omega_2) = \Omega$, where $\Omega(s) = \Omega_1(s)$ 
if $\dom(s) \cap \dom(s_2) = \emptyset$, for all solution mappings~$s_2\in \Omega_2$ compatible with $s$, and $0$ otherwise;
\item $\project(\Omega, L) = \Omega'$, where $\Omega'(s') = \sum\limits_{s \in \Omega \text{ with }s|_L = s'} \hspace*{-1em}\Omega(s)$;
\item $\distinct(\Omega) = \Omega'$, where $\Omega'(s) = 1$ if $s \in \Omega$, and $0$ otherwise.
\end{compactitem}

\paragraph{Translation.} The translation of triple patterns depends on their shape:
\begin{equation*}
\tra(\langle s,p,o\rangle)  =  \begin{cases}%
\pi_\emptyset \sigma_{(\textit{subj} = s) \land (\textit{pred} = p) \land (\textit{obj} = o)}\,\textit{triple}, & \text{if } s,p,o\in \myIRI \cup \myLIT,\\ 
\proj{\{s \mapsto \textit{subj}\}}\, \sigma_{(\textit{pred} = p) \land (\textit{obj} = o)}\,\textit{triple}, & \text{if } s\in\myVAR \text{ and } p,o\in \myIRI \cup \myLIT,\\ 
\proj{\{s\mapsto\textit{subj},  \ o \mapsto \textit{obj}\}} \,\sigma_{\textit{pred} = p}\,\textit{triple}, & \text{if } s,o\in\myVAR, s \ne o, p\in \myIRI \cup \myLIT,\\ 
\proj{\{s\mapsto \textit{subj} \}} \,\sigma_{(\textit{pred} = p)\land (\textit{subj} = \textit{obj})}\,\textit{triple},\hspace*{-0.5em} & \text{if } s,o\in\myVAR, s = o,  p\in \myIRI \cup \myLIT,\\[-2pt]
\dots
\end{cases}
\end{equation*}
the remaining cases are similar. The translation of SPARQL operators is as follows, where the $\textit{tp}_i$ are triple patterns
\begin{align*}
  \tra (\{\textit{tp}_1, \textit{tp}_2, \dots, \textit{tp}_k\}) & = \ \ \tra (\textit{tp}_1) \Join  \tra (\textit{tp}_2) \Join \dots \Join  \tra (\textit{tp}_k),  \\
 \tra(\union(P_1,P_2)) &  \ = \ \tra(P_1) \ \uplus \ \tra(P_2),\\
 \tra(\filter(P_1,F_1)) & \ = \ \sigma_{\tra(F_1)} \tra(P_1), \\
  \tra(\bind(P_1,v,c)) & \ =  \ \nu_{v\mapsto c}\tra(P_1),\\
  \tra(\project(P,L)) & \ = \ \pi_L \tra(P), \\
  \tra(\distinct(P)) & \ = \ \delta \tra(P),\\
\tra(\join(P_1, P_2))  &\ =  \ \rho^{U^1\cup U^2}_{\coalesce_U} \bigl[\rho_{U^1/U} \tra(P_1) \Join_{\compatible_U} \rho_{U^2/U}\tra(P_2)\bigr], \\[2pt]
  \tra(\leftjoin(P_1, P_2, F))  &\ = \ \rho^{U^1 \cup U^2}_{\coalesce_U} \bigl[\rho_{U^1/U} \tra(P_1) \LJoin_{\compatible_U\wedge \tra(F)[\coalesce_U]}  \rho_{U^2/U} \tra(P_2)\bigr], \\[2pt]
  \tra(\minus(P_1, P_2))  &\ = \ \pi_{U_1}\rho_{U/U^1}  \sigma_{\isNull(w)}\\[-2pt] & \hspace*{4em}\bigl[\rho_{U^1/U} \tra(P_1) \LJoin_{\compatible_U\wedge \bigvee\limits_{u \in U}(u^1 = u^2)} \nu_{w\mapsto 1}\rho_{U^2/U} \tra(P_2)\bigr],    
\end{align*}

\noindent\textbf{Theorem~\ref{thm:sparql-sql}. \ }
\textit{For any RDF graph $G$ and any graph pattern $P$,  
$\| P\|_G = \|\tra(P)\|_{\textit{triple}(G)}.$}

\smallskip
\begin{proof} 
The proof is by induction on the structure of $P$. The basis of induction (for basic graph patterns) is straightforward.
The cases for $\union(P_1, P_2)$, $\filter(P_1, F)$, $\bind(P_1, v, c)$, $\project(P_1, L)$ and $\distinct(P_1)$ 
easily follow from the definitions and the induction hypothesis.
It remains to consider the  induction step for 
$P=\join(P_1,P_2)$,  $P=\leftjoin(P_1,P_2, F)$ and $P=\minus(P_1,P_2)$.
Let $U_i = \var(P_i)$, $i = 1,2$, and $U = U_1 \cap U_2$.

\smallskip

If $\|\join(P_1,P_2)\|_{G}(t) = m > 0$, then there is a unique solution mapping  $s$ such that $\extV[U_1\cup U_2](s) = t$ and $\sANS{\join(P_1,P_2)}{G}(s) = m$. By definition, $m$ is the sum of all $m_1 \cdot m_2$ such that $m_i = \sANS{P_i}{G}(s_i)$ for compatible $s_1$ and $s_2$ 
with $s_1 \smerge s_2 = s$. Consider any compatible $s_1$ and $s_2$ with $s_1 \smerge s_2 = s$.
By IH, we have $\|P_i\|_G(\extV[U_i](s_i)) = \|\tra(P_i) \|_{\textit{triple}(G)}(\extV[U_i](s_i))$.  
Since $s_1$ and $s_2$ are compatible, the structure of the filter and renaming operations in  $\tra(\join(P_1, P_2))$ guarantee that $\extV[U_1\cup U_2](s_1\oplus s_2)$ belongs to $\| \tra(\join(P_1, P_2))  \|_{\textit{triple}(G)}$ with multiplicity $\geq m_1 \cdot m_2$. It remains to observe that any  such $\extV[U_1\cup U_2](s_1\oplus s_2)$  coincides with $t$, and so $m \leq \| \tra(\join(P_1, P_2))  \|_{\textit{triple}(G)}(t)$.

Conversely, $\| \tra(\join(P_1,P_2)) \|_{\textit{triple}(G)}(t) = m > 0$, then, for $i = 1,2$, there are $t_i$ with $\|\tra(P_i)\|_{\textit{triple}(G)}(t_i) = m_i > 0$ and unique solution mappings $s_i$  such that $t_i = \extV[U_i](s_i)$, for $i = 1,2$, $s_1$ and $s_2$ are compatible (due to the filter in $\Join$) and $t = \extV[U_1\cup U_2](s_1\smerge s_2)$ (due to the renaming operations in $\tra(\join(P_1,P_2))$).
By IH, $\| P_i\|_{G}(\extV[U_i](s_i)) = m_i$ and so, $\sANS{P_i}{G}(s_i) = m_i$. Thus, 
$\sANS{\join(P_1,P_2)}{G}(s_1 \smerge s_2) \geq m_1 \cdot m_2$ and $\| \join(P_1,P_2) \|_G(t) \geq m$.

\bigskip

If $\|\leftjoin(P_1,P_2,F)\|_{G}(t) = m > 0$, then there is a unique $s$ with $\extV[U_1\cup U_2](s) = t$ and $\sANS{\leftjoin(P_1,P_2,F)}{G}(s) = m$. Then, by definition, $m$ is the sum of (\emph{a})~all $m_1\cdot m_2$ such that there are compatible $s_1$ and $s_2$ with $s_1 \smerge s_2 = s$, $F^{s} = \top$ and $\sANS{P_i}{G}(s_i) = m_i > 0$,  and (\emph{b})~$m' = \sANS{P_1}{G}(s) > 0$ in case there is no $s_2\in\sANS{P_2}{G}$ compatible with $s$ such that $F^{s\smerge s_2} = \top$. Item~(\emph{a}) is as the case of $\join$ (with an additional filter), so we consider only item~(\emph{b}). By IH, $\| \tra(P_1)\|_{\textit{triple}(G)}(\extV[U_1](s)) = m'$ and there is no $s_2$ such that $\extV[U_2](s_2)\in\|\tra(P_2)\|_{\textit{triple}(G)}$, $s$ and $s_2$ are compatible and $F^{s\smerge s_2} = \top$. Due to the shape of the condition, $\theta = \compatible_U\land\tra(F)[\coalesce_U]$, in the $\LJoin$ operation, the tuple $\extV[U_1](s)$ belongs to $\|\rho^{U^1\cup U^2}_{\coalesce_U} (\rho_{U^1/U}\tra(P_1) \setminus \pi_{U^1_1} (\rho_{U^1/U}\tra(P_1) \Join_{\theta} \rho_{U^2/U}\tra(P_2))) \|_{\textit{triple}(G)}$ with multiplicity $\geq m'$, where $U_1^1 = (U_1\setminus U)\cup U^1$; note that $\rho^{U^1\cup U^2}_{\coalesce_U}$ simply renames all $u^1$ into $u$, for $u\in U$, because all $u^2$ are removed by the projection. Therefore, $\extV[U_1\cup U_2](s)$ belongs to $\| \tra(\leftjoin(P_1, P_2,F))  \|_{\textit{triple}(G)}$ with multiplicity $\geq m'$. It remains to observe that the tuple $\extV[U_1\cup U_2](s)$ coincides with $t$, and so, $\| \tra(\leftjoin(P_1,P_2, F)) \|_{\textit{triple}(G)}(t)\geq m = \|\leftjoin(P_1,P_2,F)\|_{G}(t) $.

Conversely, if  $\| \tra(\leftjoin(P_1,P_2, F)) \|_{\textit{triple}(G)}(t) = m > 0$, then $m = m'' + m'$, where
\begin{itemize}\itemsep=0pt
\item[(\emph{a})]~$m'' = \|\rho^{U^1\cup U^2}_{\coalesce(U)}(\rho_{U^1/U}\tra(P_1) \Join_\theta  \rho_{U^2/U}(P_2))\|_{\textit{triple}(G)}(t)$, 
\item[(\emph{b})]~$m' = \|\rho^{U^1\cup U^2}_{\coalesce_U}(\rho_{U^1/U}\tra(P_1) \setminus \pi_{U_1^1} (\rho_{U^1/U}\tra(P_1) \Join_\theta  \rho_{U^2/U}\tra(P_2)))\|_{\textit{triple}(G)}(t)$,
\end{itemize}
$U_1^1$ denotes $(U_1\setminus U)\cup U^1$ and $\theta = \compatible_U\wedge \tra(F)[\coalesce_U]$.
Again, item~(\emph{a}) is identical to $\join$ with $\filter$ (we just point out that the $\tra(F)[\coalesce_U]$ component of the filter in $\Join$ can be pulled outside $\rho_{\coalesce_U}$ as $\sigma_{\tra(F)}$); so, we focus on item~(\emph{b}) only.\linebreak By construction, there is a unique solution mapping $s$\linebreak such that $\extV[U_1](s) = t$, $\|\rho^{U^1\cup U^2}_{\coalesce_U}\rho_{U^1/U}\tra(P_1)\|_{\textit{triple}(G)}(s) = m'$ but\linebreak \mbox{$\extV[U_1\cup U_2](s)\notin \| \rho^{U^1\cup U^2}_{\coalesce_U}\pi_{U_1^1} (\rho_{U^1/U}\tra(P_1) \Join_\theta  \rho_{U^2/U}\tra(P_2))\|_{\textit{triple}(G)}$}. The latter implies that there is no $s_2$ compatible with $s$ such that $F^{s\smerge s_2} = \top$ and $\extV[U_2](s_2)\in \| \rho^{U^1\cup U^2}_{\coalesce_U} \rho_{U^2/U}\tra(P_2)\|_{\textit{triple}(G)}$. It follows that $\|\tra(P_1)\|_{\textit{triple}(G)}(\extV[U_1](s)) \geq m'$ but there is no $s_2$ compatible with $s$ such that $F^{s\smerge s_2} = \top$ and  $\extV[U_2](s_2)\in\|\tra(P_2)\|_{\textit{triple}(G)}$. By IH and the definition of $\leftjoin$, $\sANS{\leftjoin(P_1,P_2,F)}{G}(s) \geq m'$. It follows that\linebreak $\|\leftjoin(P_1,P_2,F)\|_{G}(t)\geq m = \| \tra(\leftjoin(P_1,P_2, F)) \|_{\textit{triple}(G)}(t)$.

\bigskip

If $\|\minus(P_1,P_2)\|_{G}(t) = m > 0$, then there is a unique $s$ with $\extV[U_1\cup U_2](s) = t$ and $\sANS{\minus(P_1,P_2)}{G}(s) = m$. Then, by definition, $\sANS{P_1}{G}(s) = m > 0$ and there is no $s_2\in\sANS{P_2}{G}$ compatible with $s$ and such that $\dom(s)\cap \dom(s_2) \ne\emptyset$. By IH, $\| \tra(P_1)\|_{\textit{triple}(G)}(\extV[U_1](s)) = m$ and there is no $s_2$ compatible with $s$ such that $\extV[U_2](s_2)\in\|\tra(P_2)\|_{\textit{triple}(G)}$ and  $\dom(s)\cap \dom(s_2) \ne\emptyset$. Due to the shape of the filter, $\theta = \compatible_U\land\bigvee_{u\in U}(u^1 = u^2)$, in the $\LJoin$ operation and the extension $\nu_{w\mapsto 1}$ with the filter $\isNull(w)$, the tuple $\extV[U_1](s)$ belongs to\linebreak $\| \pi_{U_1}\rho_{U/U^1}  \sigma_{\isNull(w)}\bigl[\rho_{U^1/U} \tra(P_1) \LJoin_\theta \nu_{w\mapsto 1}\rho_{U^2/U} \tra(P_2)\bigr] \|_{\textit{triple}(G)}$ with multiplicity $m$. Therefore, $\extV[U_1](s)$ belongs to $\| \tra(\minus(P_1, P_2))  \|_{\textit{triple}(G)}$ with multiplicity $m$. It remains to observe that the tuple $\extV[U_1](s)$ coincides with $t$, and so,\linebreak $m = \| \tra(\minus(P_1,P_2)) \|_{\textit{triple}(G)}(t)$.

Conversely, if  $\| \tra(\minus(P_1,P_2)) \|_{\textit{triple}(G)}(t) = m > 0$,
by construction, there is a unique solution mapping $s$ such that $\extV[U_1](s) = t$, $\|\tra(P_1)\|_{\textit{triple}(G)}(s) = m$ but 
{$\extV[U_1\cup U_2 \cup U^1 \cup U^2 \cup \{w\}\setminus U](s)\notin \| \rho_{U^1/U} \tra(P_1) \Join_\theta \nu_{w\mapsto 1}\rho_{U^2/U} \tra(P_2))\|_{\textit{triple}(G)}$}, where  $\theta = \compatible_U\land\bigvee_{u\in U}(u^1 = u^2)$. It follows that there is no $s_2$ such that $\extV[U_2](s_2)\in \| \tra(P_2)\|_{\textit{triple}(G)}$ such that $s$ and $s_2$ are compatible and $\dom(s)\cap \dom(s_2) \ne\emptyset$. 
By IH and the definition of $\minus$, $\sANS{\minus(P_1,P_2)}{G}(s) = m$. 

\bigskip

This completes the proof of Theorem~\ref{thm:sparql-sql}.
\end{proof}


\newpage

\section{Section~\ref{sec:optimization} Proofs}\label{app:sec4}

\noindent\textbf{Proposition~\ref{prop:lj-nat}.\ }
\textit{For relations $R_1$ and $R_2$ over $U_1$ and $U_2$, respectively, and a filter $F$ over $U_1 \cup U_2$, we have
\begin{align}\label{eq:lj-nat}
R_1   \LJoin_F  R_2 \  \equiv \  \rho^{\{U_2\setminus U_1\}}_{\{u/\textit{if}(F, u, \Null) \,\mid\, u\in U_2\setminus U_1\}} (R_1 \LJoin R_2) && \text{ if } \ \ \pi_{U_1}(R_1 \Join R_2) \subseteq R_1.
\end{align}}  

\smallskip

\begin{proof}
Denote $R_1\Join R_2$ by $S$. Then $\pi_{U_1} S
\subseteq R_1$ implies that every tuple $t_1$ in $R_1$ can have at most
one tuple $t_2$ in $R_2$ compatible with it, and $S$ consists of all
such extensions (with their cardinality determined by $R_1$). Therefore, $\pi_{U_1}(S \setminus \sigma_F S)$ is precisely the
tuples in $R_1$ that cannot be extended in such a way that the extension  satisfies~$F$, whence 
\begin{align}
\tag{\ref{eq:LJ-nat:1}}
\pi_{U_1}(S \setminus \sigma_F S) & \ \ \equiv \ \ \pi_{U_1}S \setminus \pi_{U_1}\sigma_F S.
\end{align}
By a similar argument, $R_1 \setminus \pi_{U_1} S$ consists of the tuples in $R_1$ (with the same cardinality) that cannot be extended by a tuple in $R_2$, and $\pi_{U_1} S \setminus \pi_{U_1} \sigma_F S$ of those tuples that can be extended but only when $F$ is not satisfied. By taking the union of the two, we obtain
\begin{align}
\tag{\ref{eq:LJ-nat:2}}
(R_1 \setminus \pi_{U_1} S) \cup (\pi_{U_1} S \setminus \pi_{U_1} \sigma_F S) & \ \ \equiv \ \ R_1 \setminus  \pi_{U_1} \sigma_F S. 
\end{align}
Having these two equivalences at hand, we can now prove the claim:
\begin{align*}
\rho^{\{U_2\setminus U_1\}}_{\{\textit{if}(F, u, \Null)/u \,\mid\, u\in U_2\setminus U_1\}} (R_1 \LJoin R_2) \hspace*{0em}& \\
{\scriptstyle(\text{definition of LJ})} & \equiv \ \rho^{\{U_2\setminus U_1\}}_{\{\textit{if}(F, u, \Null)/u \,\mid\, u\in U_2\setminus U_1\}} \bigl[S \ \uplus \ (R_1 \setminus \pi_{U_1} S)\bigr] \\
{\scriptstyle(\rho \text{ distributes over } \cup \text{ and the definiton of } \textit{if})} & \equiv \  \rho^{\{U_2\setminus U_1\}}_{\{\textit{if}(F, u, \Null)/u \,\mid\, u\in U_2\setminus U_1\}} S \ \ \uplus \ \ (R_1 \setminus \pi_{U_1} S) \\
{\scriptstyle(\text{definitions of } \rho \text{ and }\textit{if})}& \equiv \ \sigma_F S \ \ \uplus \ \ \pi_{U_1} (S \setminus \sigma_F S) \ \ \uplus \ \ (R_1 \setminus \pi_{U_1} S) \\
{\scriptstyle\text{\eqref{eq:LJ-nat:1}}} & \equiv  \ \sigma_F S \ \ \uplus \ \  (\pi_{U_1} S \setminus \pi_{U_1} \sigma_F S) \ \uplus \  (R_1 \setminus \pi_{U_1} S) \\
{\scriptstyle(\mu \text{ distributes  over } \cup)} & \equiv  \ \sigma_F S \ \ \uplus \ \ \bigl[(\pi_{U_1} S \setminus \pi_{U_1} \sigma_F S) \ \cup \  (R_1 \setminus \pi_{U_1} S)\bigr] \\
{\scriptstyle\text{\eqref{eq:LJ-nat:2}}}  & \equiv  \  \sigma_F S \ \ \uplus \ \ (R_1 \setminus  \pi_{U_1} \sigma_F S)\\
{\scriptstyle(\text{definition of LJ})} & \equiv  \ R_1 \LJoin_F R_2.
\end{align*}
This completes the proof of Proposition~\ref{prop:lj-nat}.
\end{proof}

By combining Propositions~\ref{prop:lj-nat} and~\ref{prop:lj-red}, we obtain
\begin{corollary}
For relations $R_1$ and $R_2$ over $U_1$ and $U_2$, respectively, and a filter $F$ over~$U_1 \cup U_2$, we have
\begin{multline}\label{eq:lj-nat}
\sigma_{\neg\isNull(U_1 \cap U_2)} R_1   \LJoin_F  R_2 \  \equiv \  \rho^{\{U_2\setminus U_1\}}_{\{u/\textit{if}(F, u, \Null) \,\mid\, u\in U_2\setminus U_1\}} (R_1 \Join R_2),\\ \text{ if } \ \ \pi_{U_1}(R_1 \Join R_2) = R_1.
\end{multline}  
\end{corollary}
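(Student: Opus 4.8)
The plan is to obtain the corollary by chaining Propositions~\ref{prop:lj-nat} and~\ref{prop:lj-red}, after first reconciling the selection $\sigma_{\neg\isNull(U_1\cap U_2)}$ on the left with the hypothesis. Write $K = U_1 \cap U_2$. The key preliminary observation is that $\pi_{U_1}(R_1 \Join R_2) \equiv R_1$ forces every tuple of $R_1$ to arise as the $U_1$-restriction of a tuple of the natural join $R_1 \Join R_2$; since RA tuple compatibility (unlike SPARQL's) demands matching \emph{non-$\Null$} values on all shared attributes, in particular on those in $K$, every tuple of $R_1$ assigns non-$\Null$ values to the attributes in $K$. Hence $\sigma_{\neg\isNull(K)} R_1 \equiv R_1$, and more generally the selection is absorbed by the join: $\sigma_{\neg\isNull(K)} R_1 \Join R_2 \equiv R_1 \Join R_2$. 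So the left-hand side of the corollary equals $R_1 \LJoin_F R_2$.

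First I would apply Proposition~\ref{prop:lj-nat} to $R_1$ and $R_2$ with filter $F$: its precondition $\pi_{U_1}(R_1\Join R_2) \subseteq R_1$ is immediate from the hypothesis (which in fact gives equality), yielding
\[
R_1 \LJoin_F R_2 \ \equiv \ \rho^{\{U_2\setminus U_1\}}_{\{u/\textit{if}(F, u, \Null)\,\mid\,u\in U_2\setminus U_1\}} (R_1 \LJoin R_2).
\]
Next I would rewrite the natural left join $R_1 \LJoin R_2$ sitting under the renaming by means of Proposition~\ref{prop:lj-red}. For this I must check its precondition $\delta \pi_K R_1 \subseteq \pi_K R_2$: from $\pi_K R_1 \equiv \pi_K \pi_{U_1}(R_1 \Join R_2) \equiv \pi_K(R_1 \Join R_2)$ (using $K \subseteq U_1$) and the fact that every tuple over $K$ occurring in $R_1 \Join R_2$ is the $K$-part of the matching tuple of $R_2$, the inclusion holds after duplicate elimination. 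Proposition~\ref{prop:lj-red} then gives $\sigma_{\neg\isNull(K)}R_1 \LJoin R_2 \equiv R_1 \Join R_2$, and since $\sigma_{\neg\isNull(K)}R_1 \equiv R_1$ by the preliminary observation, $R_1 \LJoin R_2 \equiv R_1 \Join R_2$. Finally, using that $\equiv$ is a congruence with respect to $\rho$ (and the $\pi,\nu$ that its abbreviation expands into), I push this equivalence under the renaming operator to conclude
\[
\sigma_{\neg\isNull(K)} R_1 \LJoin_F R_2 \ \equiv \ \rho^{\{U_2\setminus U_1\}}_{\{u/\textit{if}(F, u, \Null)\,\mid\,u\in U_2\setminus U_1\}} (R_1 \Join R_2).
\]

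The main obstacle lies not in the chaining but in keeping the \emph{bag} semantics honest throughout: every $\equiv$ and $\subseteq$ above is between bags, so the arguments must track multiplicities, not just membership. This is exactly why Proposition~\ref{prop:lj-red}'s precondition carries the $\delta$, and why the ``absorbed selection'' claims $\sigma_{\neg\isNull(K)}R_1 \equiv R_1$ and $\sigma_{\neg\isNull(K)}R_1 \Join R_2 \equiv R_1 \Join R_2$ hinge on the RA (non-$\Null$-matching) notion of compatibility rather than the more permissive SPARQL one. Once these bag-level checks are discharged — and they all follow uniformly from the single hypothesis $\pi_{U_1}(R_1\Join R_2)\equiv R_1$ — Propositions~\ref{prop:lj-nat} and~\ref{prop:lj-red} compose mechanically, so no further computation is required.
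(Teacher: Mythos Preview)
Your proposal is correct and follows exactly the route the paper indicates: the corollary is obtained ``by combining Propositions~\ref{prop:lj-nat} and~\ref{prop:lj-red}'', and you have simply spelled out the chaining (including the auxiliary observation that the hypothesis $\pi_{U_1}(R_1\Join R_2)=R_1$ forces $\sigma_{\neg\isNull(K)}R_1\equiv R_1$ and discharges the $\delta\pi_K R_1\subseteq\pi_K R_2$ precondition). There is nothing to add.
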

Note that the condition $\pi_{U_1}(R_1 \Join R_2) = R_1$ is
satisfied, in particular, when $R_1$ and~$R_2$ are both
(extended) projections of the same relation (with a primary key) or when $R_1$ has a foreign key
referencing (a primary or alternate key of) $R_2$.

\medskip

\noindent\textbf{Proposition~\ref{prop:transfer}.\ }
\textit{Let $R_1$, $R_2$ and $R_3$ be relations over $U_1$, $U_2$ and $U_3$, 
respectively, $F$ a filter over $U_1\cup U_2 \cup U_3$ and $w$ an attribute in $U_3 \setminus (U_1 \cup U_2)$. Then
\begin{multline}\label{eq:transfer}
R_1   \LJoin_F  (R_2 \Join \sigma_{\neg\isNull(w)} R_3) \ \ \equiv\\
\rho^{\{U_2\setminus U_1\}}_{\{u/\textit{if}(\neg\isNull(w), u, \Null)\ \mid \ u\in
    U_2\setminus U_1 \}} ((R_1 \Join R_2) \LJoin_F \sigma_{\neg\isNull(w)} R_3),\\ \text{ if } \pi_{U_1}(R_1 \Join R_2) = R_1.
\end{multline}}
\begin{proof}
Denote $R_1\Join R_2$ by $S$ and $\sigma_{\neg\isNull(w)} R_3$ by $R_3'$. First, we can easily establish the following analogue of~\eqref{eq:LJ-nat:1}:
\begin{align}\label{eq:LJ-nat:1:p}
\pi_{U_1}(S \setminus \pi_{U_1\cup U_2} (S\Join_F R_3')) & \ \ \equiv \ \ \pi_{U_1}S \setminus \pi_{U_1}(S \Join_F R_3'),
\end{align}
where $\pi_{U_1\cup U_2}(S\Join_F R_3')$ plays the same role as $\sigma_F S$ in~\eqref{eq:LJ-nat:1}.
Then the proof of the proposition is immediate from the following sequence of equivalences:
\begin{align*}
R_1 \LJoin_F (R_2 \Join R_3')\hspace*{-3em} \\
{\scriptstyle(\text{def. of LJ})} & \equiv \ \ \sigma_F (S \Join R_3') \ \ \uplus \ \ (R_1 \setminus \pi_{U_1}\sigma_F (S \Join R_3'))\\
{\scriptstyle(\text{assumption})} & \equiv \ \ \sigma_F (S  \Join R_3') \ \ \uplus \ \ (\pi_{U_1} S \setminus \pi_{U_1}\sigma_F (S \Join R_3'))\\
{\scriptstyle(\text{def. of join})} & \equiv \ \ (S \Join_F R_3')  \ \ \uplus \ \ (\pi_{U_1} S \setminus \pi_{U_1}(S \Join_F R_3'))\\
{\footnotesize\eqref{eq:LJ-nat:1:p}} & \equiv \ \ (S \Join_F R_3') \ \ \uplus \ \ \pi_{U_1} (S \setminus \pi_{U_1\cup U_2} (S\Join_F R_3'))\\
{\scriptstyle(\text{def. of } \mu)} & \equiv \ \ (S \Join_F R_3') \ \ \uplus \ \ \mu_{U_2\setminus U_1}\pi_{U_1} (S \setminus \pi_{U_1\cup U_2} (S\Join_F R_3'))\\
{\scriptstyle(\text{def. of \textit{if}})} & \equiv \ \ \rho^{\{U_2\setminus U_1\}}_{\{u/\textit{if}(\neg\isNull(w), u, \Null)\ \mid \ u\in
    U_2\setminus U_1 \}} [(S \Join_F R_3') \uplus (S \setminus \pi_{U_1\cup U_2} (S\Join_F R_3'))]\\
{\scriptstyle(\text{def. of LJ})} &  \equiv \ \ \rho^{\{U_2\setminus U_1\}}_{\{u/\textit{if}(\neg\isNull(w), u, \Null)\ \mid \ u\in
    U_2\setminus U_1 \}} (S \LJoin_F R_3').
\end{align*}
\end{proof}


\newpage 

\section{Complete Examples in Section~\ref{sec:optimization}}\label{app:examples}

\subsection{Example~\ref{ex:email-pref-simple}}\label{app:example1}

We begin by revisiting Example~\ref{ex:email-pref-simple}, which can now be given in algebraic form (for brevity, we ignore projecting away \texttt{?p}, which does not affect the optimisations):
\begin{multline*}
\leftjoin(\leftjoin(\texttt{?p :name ?n}, \ \ \texttt{?p :workEmail ?e}, \ \ \top), \texttt{?p :personalEmail ?e}, \ \ \top),
\end{multline*}
where $\top$ denotes the tautological filter (true). Suppose we have  the following mapping:
\begin{align*}
      \triple{\urione(\texttt{id})}{\texttt{:name}}{\texttt{fullName}} & \ \ \leftarrow\ \ 
                                   \sigma_{\notnull{\texttt{id}}  \land \notnull{\texttt{fullName}}}\texttt{people}, \\[-2pt]
      \triple{\urione(\texttt{id})}{\texttt{:workEmail}}{\texttt{workEmail}} & \ \ \leftarrow \ \
                                   \sigma_{\notnull{\texttt{id}}  \land\notnull{\texttt{workEmail}}}\,\texttt{people},\\[-2pt]
      \triple{\urione(\texttt{id})}{\texttt{:personalEmail}}{\texttt{homeEmail}} & \ \ \leftarrow\ \ 
                                   \sigma_{\notnull{\texttt{id}}  \land\notnull{\texttt{homeEmail}}}\,\texttt{people}, 
\end{align*}
where $\urione$ is a function that constructs the IRI for a person from their ID (an IRI template, in R2RML parlance). 

The translation given in Section~\ref{sec:translation}
and unfolding produce the following RA query, where we abbreviate, for example, $\rho^{\{p^1, p^2\}}_{\{p^4/\coalesce(p^1,p^2)\}}$ by $\bar{\rho}_{\{p^4/\coalesce(p^1,p^2)\}}$ (in other words, the $\bar{\rho}$ operation always projects away the arguments of its $\coalesce$ functions):\\[6pt]
\centerline{\begin{tikzpicture}[xscale=2.9,yscale=0.52]\small
\node[nd,fill=white] (u3) at (4,2) {\footnotesize\texttt{people}};
\node[nd,fill=white] (u3s) at (4,2.9) {$\sigma_{\neg\isNull(\texttt{id})  \land\neg\isNull(\texttt{homeEmail})}$};
\node[nd,fill=white] (u3r) at (4,3.9) {$\pi_{\{p^3/\urione(\texttt{id}),\ e^3/\texttt{homeEmail}\}}$};
\node[nd,fill=white] (p2) at (2.5,0) {\footnotesize\texttt{people}};
\node[nd,fill=white] (p2s) at (2.5,0.9) {$\sigma_{\neg\isNull(\texttt{id})  \land \neg\isNull(\texttt{workEmail})}$};
\node[nd,fill=white] (p2r) at (2.5,1.9) {$\pi_{\{p^2/\urione(\texttt{id}),\ e^2/\texttt{workEmail}\}}$};
\node[nd,fill=white] (p1) at (1,0) {\footnotesize\texttt{people}};
\node[nd,fill=white] (p1s) at (1,0.9) {$\sigma_{\neg\isNull(\texttt{id})  \land\neg\isNull(\texttt{fullName})}$};
\node[nd,fill=white] (p1r) at (1,1.9) {$\pi_{\{p^1/\urione(\texttt{id}),\ n/\texttt{fullName}\}}$};
\node[nd] (lj1) at (1.75,3) {$\LJoin_{(p^1 = p^2)  \lor \isNull(p^1) \lor \isNull(p^2)}$}; 
\node[nd] (lj1p) at (1.75,4) {$\bar{\rho}_{\{p^4/\coalesce(p^1,p^2)\}}$};
\node[nd] (lj2) at (3,5) {$\LJoin_{ [(p^4 = p^3) \lor \isNull(p^4) \lor \isNull(p^3)]\land [(e^2 = e^3) \lor \isNull(e^2) \lor \isNull(e^3)]}$}; 
\node[nd] (lj2r) at (3,6) {$\bar{\rho}_{\{p/\coalesce(p^4,p^3), \ e/\coalesce(e^2,e^3)\}}$}; 
\draw (u3) -- (u3s);
\draw (u3s) -- (u3r);
\draw (p2) -- (p2s);
\draw (p2s) -- (p2r);
\draw (p1) -- (p1s);
\draw (p1s) -- (p1r);
\draw (p1r) -- (lj1);
\draw (p2r) -- (lj1);
\draw (lj1) -- (lj1p);
\draw (lj1p) -- (lj2);
\draw (u3r) -- (lj2);
\draw (lj2) -- (lj2r);
\end{tikzpicture}}\\[4pt]
In our diagrams, the white nodes are the contribution of the mapping and the translation of the basic graph patterns: for example, the basic graph pattern \texttt{?p :name ?n} produces $\smash{\pi_{\{p^1/\urione(\texttt{id}),\ n/\texttt{fullName}\}}\sigma_{\neg\isNull(\texttt{id})  \land\neg\isNull(\texttt{fullName})}\texttt{people}}$ (we use attributes without superscripts if there is only one occurrence; otherwise, the superscript identifies the relevant subquery).
The grey nodes correspond to the translation of the SPARQL operations: for instance, the innermost left join is on $\smash{\compatible_{\{p\}}}$ with $p$ renamed apart to $\smash{p^1}$ and~$\smash{p^2}$; the outermost left join is on $\smash{\compatible_{\{p, e\}}}$, where $p$ is renamed apart to $\smash{p^4}$ and $\smash{p^3}$ and $e$ to $\smash{e^2}$ and $\smash{e^3}$; the two $\bar{\rho}$ are the respective renaming operations with $\coalesce$.

By pulling the filters up for the first time using the standard database equivalences and~\eqref{eq:lj:filter-left:d}--\eqref{eq:lj:filter-right}, we obtain (the additions are shown in blue)\\[6pt]
\centerline{\begin{tikzpicture}[xscale=2.9,yscale=0.52]\small
\node[nd,fill=white] (u3) at (3.75,2.9) {\footnotesize\texttt{people}};
\node[nd,fill=white] (u3r) at (3.75,3.9) {$\pi_{\{p^3/\urione(\texttt{id}),\ e^3/\texttt{homeEmail}\}}$};
\node[nd,fill=white] (p2) at (2.5,0.9) {\footnotesize\texttt{people}};
\node[nd,fill=white] (p2r) at (2.5,1.9) {$\pi_{\{p^2/\urione(\texttt{id}),\ e^2/\texttt{workEmail}\}}$};
\node[nd,fill=white] (p1) at (1,0.9) {\footnotesize\texttt{people}};
\node[nd,fill=white] (p1r) at (1,1.9) {$\pi_{\{p^1/\urione(\texttt{id}),\ n/\texttt{fullName}\}}$};
\node[nd] (lj1) at (1.75,3) {$\LJoin_{[(p^1 = p^2)  \lor \isNull(p^1) \lor \isNull(p^2)]\land\textcolor{blue}{\neg\isNull(p^1)  \land\neg\isNull(n) \land \neg\isNull(p^2)  \land \neg\isNull(e^2)}}$}; 
\node[nd] (lj1p) at (1.75,5) {$\bar{\rho}_{\{p^4/\coalesce(p^1,p^2)\}}$};
\node[nd] (lj1s) at (1.75,4) {\textcolor{blue}{$\sigma_{\neg\isNull(p^1)  \land\neg\isNull(n)}$}};
\node[nd] (lj2) at (2.75,6) {$\LJoin_{ [(p^4 = p^3) \lor \isNull(p^4) \lor \isNull(p^3)]\land [(e^2 = e^3) \lor \isNull(e^2) \lor \isNull(e^3)]\land\textcolor{blue}{\neg\isNull(p^3)  \land\neg\isNull(e^3)}}$}; 
\node[nd] (lj2r) at (2.75,7) {$\bar{\rho}_{\{p/\coalesce(p^4,p^3), \ e/\coalesce(e^2,e^3)\}}$}; 
%
\draw (u3) -- (u3r);
%
\draw (p2) -- (p2r);
%
\draw (p1) -- (p1r);
%
\draw (p1r) -- (lj1);
\draw (p2r) -- (lj1);
\draw (lj1) -- (lj1s);
\draw (lj1s) -- (lj1p);
\draw (lj1p) -- (lj2);
\draw (u3r) -- (lj2);
\draw (lj2) -- (lj2r);
\end{tikzpicture}}\\[2pt]
We then remove disjuncts from the join condition of the left joins by using~\eqref{eq:rem-disjunct}, with the following results (the eliminated disjuncts are in gray):
\begin{align*}
& [(p^1 = p^2)  \textcolor{gray}{{}\lor \isNull(p^1) \lor \isNull(p^2)}]\land {} \\[-2pt] & \hspace*{10em}\neg\isNull(p^1)  \land\neg\isNull(n) \land \neg\isNull(p^2)  \land \neg\isNull(e^2),\\
& [(p^4 = p^3) \lor \isNull(p^4) \textcolor{gray}{{}\lor \isNull(p^3)}]\land [(e^2 = e^3) \lor \isNull(e^2) \textcolor{gray}{{}\lor \isNull(e^3)}]\land{} \\[-2pt]&\hspace*{10em}\neg\isNull(p^3)  \land\neg\isNull(e^3),
\end{align*}
which are then simplified by~\eqref{eq:rem-isnull}, thus obtaining\\[6pt]
\centerline{\begin{tikzpicture}[xscale=2.9,yscale=0.55]\small
\node[nd,fill=white] (u3) at (4,2.9) {\footnotesize\texttt{people}};
\node[nd,fill=white] (u3r) at (4,3.9) {$\pi_{\{p^3/\urione(\texttt{id}),\ e^3/\texttt{homeEmail}\}}$};
\node[nd,fill=white] (p2) at (2.5,0.9) {\footnotesize\texttt{people}};
\node[nd,fill=white] (p2r) at (2.5,1.9) {$\pi_{\{p^2/\urione(\texttt{id}),\ e^2/\texttt{workEmail}\}}$};
\node[nd,fill=white] (p1) at (1,0.9) {\footnotesize\texttt{people}};
\node[nd,fill=white] (p1r) at (1,1.9) {$\pi_{\{p^1/\urione(\texttt{id}),\ n/\texttt{fullName}\}}$};
\node[nd] (lj1) at (1.75,3) {$\LJoin_{(p^1 = p^2)\land\neg\isNull(n) \land \neg\isNull(e^2)}$}; 
\node[nd] (lj1s) at (1.75,4) {$\sigma_{\neg\isNull(p^1)  \land\neg\isNull(n)}$};
\node[nd] (lj1p) at (1.75,5) {$\bar{\rho}_{\{p^4/\coalesce(p^1,p^2)\}}$};
\node[nd] (lj2) at (3,6) {$\LJoin_{ [(p^4 = p^3) \lor \isNull(p^4)]\land [(e^2 = e^3) \lor \isNull(e^2)]\land \neg\isNull(p^3)  \land\neg\isNull(e^3)}$}; 
\node[nd] (lj2r) at (3,7) {$\bar{\rho}_{\{p/\coalesce(p^4,p^3), \ e/\coalesce(e^2,e^3)\}}$}; 
%
\draw (u3) -- (u3r);
%
\draw (p2) -- (p2r);
%
\draw (p1) -- (p1r);
%
\draw (p1r) -- (lj1);
\draw (p2r) -- (lj1);
\draw (lj1) -- (lj1s);
\draw (lj1s) -- (lj1p);
\draw (lj1p) -- (lj2);
\draw (u3r) -- (lj2);
\draw (lj2) -- (lj2r);
\end{tikzpicture}}\\[6pt]
Then we apply~\eqref{eq:coalesce:elimination} to eliminate $\coalesce(p^1, p^2)$ with the following result (note that, by~\eqref{eq:lj:filter-left:d} and~\eqref{eq:lj:filter-left}, we also remove $\neg\isNull(n)$ from the join condition):\\[6pt]
\centerline{\begin{tikzpicture}[xscale=2.9,yscale=0.55]\small
\node[nd,fill=white] (u3) at (4,2.9) {\footnotesize\texttt{people}};
\node[nd,fill=white] (u3r) at (4,3.9) {$\pi_{\{p^3/\urione(\texttt{id}),\ e^3/\texttt{homeEmail}\}}$};
\node[nd,fill=white] (p2) at (2.5,0.9) {\footnotesize\texttt{people}};
\node[nd,fill=white] (p2r) at (2.5,1.9) {$\pi_{\{p^2/\urione(\texttt{id}),\ e^2/\texttt{workEmail}\}}$};
\node[nd,fill=white] (p1) at (1,0.9) {\footnotesize\texttt{people}};
\node[nd,fill=white] (p1r) at (1,1.9) {$\pi_{\{p^4/\urione(\texttt{id}),\ n/\texttt{fullName}\}}$};
\node[nd] (lj1) at (1.75,3) {$\LJoin_{(p^4 = p^2)\land \neg\isNull(e^2)}$}; 
\node[nd] (lj1p) at (1.75,4) {$\pi_{\{p^4, n, e^2 \}}$};
\node[nd] (lj1s) at (1.75,5) {$\sigma_{\neg\isNull(p^4) \land \neg\isNull(n)}$};
\node[nd] (lj2) at (3,6) {$\LJoin_{ [(p^4 = p^3) \lor \isNull(p^4)]\land [(e^2 = e^3) \lor \isNull(e^2)]\land \neg\isNull(p^3)  \land\neg\isNull(e^3)}$}; 
\node[nd] (lj2r) at (3,7) {$\bar{\rho}_{\{p/\coalesce(p^4,p^3), \ e/\coalesce(e^2,e^3)\}}$}; 
%
\draw (u3) -- (u3r);
%
\draw (p2) -- (p2r);
%
\draw (p1) -- (p1r);
%
\draw (p1r) -- (lj1);
\draw (p2r) -- (lj1);
\draw (lj1) -- (lj1p);
\draw (lj1p) -- (lj1s);
\draw (lj1s) -- (lj2);
\draw (u3r) -- (lj2);
\draw (lj2) -- (lj2r);
\end{tikzpicture}}\\[6pt]
We can now repeat the procedure of pulling up the filter with~\eqref{eq:lj:filter-left:d}--\eqref{eq:lj:filter-right} to get:\\[6pt]
\centerline{\begin{tikzpicture}[xscale=2.9,yscale=0.55]\small
\node[nd,fill=white] (u3) at (4,2.9) {\footnotesize\texttt{people}};
\node[nd,fill=white] (u3r) at (4,3.9) {$\pi_{\{p^3/\urione(\texttt{id}),\ e^3/\texttt{homeEmail}\}}$};
\node[nd,fill=white] (p2) at (2.5,0.9) {\footnotesize\texttt{people}};
\node[nd,fill=white] (p2r) at (2.5,1.9) {$\pi_{\{p^2/\urione(\texttt{id}),\ e^2/\texttt{workEmail}\}}$};
\node[nd,fill=white] (p1) at (1,0.9) {\footnotesize\texttt{people}};
\node[nd,fill=white] (p1r) at (1,1.9) {$\pi_{\{p^4/\urione(\texttt{id}),\ n/\texttt{fullName}\}}$};
\node[nd] (lj1) at (1.75,3) {$\LJoin_{(p^4 = p^2)\land \neg\isNull(e^2)}$}; 
\node[nd] (lj1p) at (1.75,4) {$\pi_{\{p^4, n, e^2 \}}$};
\node[nd] (lj2) at (3,5) {$\LJoin_{ [(p^4 = p^3) \lor \isNull(p^4)]\land [(e^2 = e^3) \lor \isNull(e^2)]\land \neg\isNull(p^3)  \land\neg\isNull(e^3) \land\textcolor{blue}{\neg\isNull(p^4) \land \neg\isNull(n)}}$}; 
\node[nd] (lj2r) at (3,7) {$\bar{\rho}_{\{p/\coalesce(p^4,p^3), \ e/\coalesce(e^2,e^3)\}}$}; 
\node[nd] (lj2s) at (3,6) {$\textcolor{blue}{\sigma_{\neg\isNull(p^4) \land \neg\isNull(n)}}$}; 
\draw (u3) -- (u3r);
%
\draw (p2) -- (p2r);
%
\draw (p1) -- (p1r);
%
\draw (p1r) -- (lj1);
\draw (p2r) -- (lj1);
\draw (lj1) -- (lj1p);
\draw (lj1p) -- (lj2);
\draw (u3r) -- (lj2);
\draw (lj2) -- (lj2s);
\draw (lj2s) -- (lj2r);
\end{tikzpicture}}\\[6pt]
Then we simplify the joining condition by~\eqref{eq:rem-disjunct} and~\eqref{eq:rem-isnull} with the following result:\\[6pt]
\centerline{\begin{tikzpicture}[xscale=2.9,yscale=0.6]\small
\node[nd,fill=white] (u3) at (4,2.9) {\footnotesize\texttt{people}};
\node[nd,fill=white] (u3r) at (4,3.9) {$\pi_{\{p^3/\urione(\texttt{id}),\ e^3/\texttt{homeEmail}\}}$};
\node[nd,fill=white] (p2) at (2.5,0.9) {\footnotesize\texttt{people}};
\node[nd,fill=white] (p2r) at (2.5,1.9) {$\pi_{\{p^2/\urione(\texttt{id}),\ e^2/\texttt{workEmail}\}}$};
\node[nd,fill=white] (p1) at (1,0.9) {\footnotesize\texttt{people}};
\node[nd,fill=white] (p1r) at (1,1.9) {$\pi_{\{p^4/\urione(\texttt{id}),\ n/\texttt{fullName}\}}$};
\node[nd] (lj1) at (1.75,3) {$\LJoin_{(p^4 = p^2)\land \neg\isNull(e^2)}$}; 
\node[nd] (lj1p) at (1.75,4) {$\pi_{\{p^4, n, e^2 \}}$};
\node[nd] (lj2) at (3,5) {$\LJoin_{(p^4 = p^3) \land [(e^2 = e^3) \lor \isNull(e^2)] \land\neg\isNull(e^3) \land \neg\isNull(n)}$}; 
\node[nd] (lj2r) at (3,7) {$\bar{\rho}_{\{p/\coalesce(p^4,p^3), \ e/\coalesce(e^2,e^3)\}}$}; 
\node[nd] (lj2s) at (3,6) {$\sigma_{\neg\isNull(p^4) \land \neg\isNull(n)}$}; 
\draw (u3) -- (u3r);
%
\draw (p2) -- (p2r);
%
\draw (p1) -- (p1r);
%
\draw (p1r) -- (lj1);
\draw (p2r) -- (lj1);
\draw (lj1) -- (lj1p);
\draw (lj1p) -- (lj2);
\draw (u3r) -- (lj2);
\draw (lj2) -- (lj2s);
\draw (lj2s) -- (lj2r);
\end{tikzpicture}}\\[6pt]
Now we apply~\eqref{eq:coalesce:elimination} to eliminate $\coalesce(p^4, p^3)$ (and, by~\eqref{eq:lj:filter-left:d} and~\eqref{eq:lj:filter-left}, remove $\neg\isNull(n)$ from the join condition) and obtain:\\[6pt]
\centerline{\begin{tikzpicture}[xscale=2.9,yscale=0.6]\small
\draw[rounded corners=3mm,dashed,fill=black!3] (0.45,0.5) rectangle +(2.65,4);
\node[nd,fill=white] (u3) at (4,2.75) {\texttt{people}};
\node[nd,fill=white] (u3r) at (4,3.75) {$\pi_{\{p^3/\urione(\texttt{id}),\ e^3/\texttt{homeEmail}\}}$};
%
\node[nd,fill=white] (p2) at (2.5,1) {\texttt{people}};
\node[nd,fill=white] (p2r) at (2.5,2) {$\pi_{\{p^2/\urione(\texttt{id}),\ e^2/\texttt{workEmail}\}}$};
%
\node[nd,fill=white] (p1) at (1,1) {\texttt{people}};
\node[nd,fill=white] (p1r) at (1,2) {$\pi_{\{p/\urione(\texttt{id}),\ n/\texttt{fullName}\}}$};
%
\node[nd] (lj1) at (1.75,3) {$\LJoin_{(p = p^2) \land \neg\isNull(e^2)}$}; 
\node[nd] (lj1p) at (1.75,4) {$\pi_{\{p, n, e^2\}}$};
\node[nd] (lj2) at (2.75,5) {$\LJoin_{ (p = p^3)\land [(e^2 = e^3) \lor \isNull(e^2)] \land \neg\isNull(e^3)}$}; 
\node[nd] (lj2p) at (2.75,6) {$\pi_{\{p, n, e^2, e^3\}}$}; 
\node[nd] (lj2r) at (2.75,7) {$\bar{\rho}_{\{e/\coalesce(e^2,e^3)\}}$}; 
\node[nd] (lj2s) at (2.75,8) {$\sigma_{\neg\isNull(p)\land  \neg\isNull(n)}$}; 
\draw (u3) -- (u3r);
%
\draw (p2) -- (p2r);
%
\draw (p1) -- (p1r);
%
\draw (p1r) -- (lj1);
\draw (p2r) -- (lj1);
\draw (lj1) -- (lj1p);
\draw (lj1p) -- (lj2);
\draw (u3r) -- (lj2);
\draw (lj2) -- (lj2p);
\draw (lj2p) -- (lj2r);
\draw (lj2r) -- (lj2s);
\end{tikzpicture}}\\[6pt]
This completes application of the Compatibility Filters Reduction optimisation described in Section~\ref{sec:comp-filter-red}.

\medskip

By removing equalities with the help of~\eqref{eq:lj:natural} and~\eqref{eq:vacuous:nu}, we simplify the innermost left join (Left Join Naturalisation, Section~\ref{sec:lj:nat}) to obtain\\[6pt]
\centerline{\begin{tikzpicture}[xscale=2.9,yscale=0.6]\small
\draw[rounded corners=3mm,dashed,fill=black!3] (0.45,1.5) rectangle +(2.65,3);
\node[nd,fill=white] (u3) at (4,3) {\texttt{people}};
\node[nd,fill=white] (u3r) at (4,4) {$\pi_{\{p/\urione(\texttt{id}),\ e^3/\texttt{homeEmail}\}}$};
\node[nd,fill=white] (p2) at (2.5,2) {\texttt{people}};
\node[nd,fill=white] (p2r) at (2.5,3) {$\pi_{\{p/\urione(\texttt{id}),\ e^2/\texttt{workEmail}\}}$};
%
\node[nd,fill=white] (p1) at (1,2) {\texttt{people}};
\node[nd,fill=white] (p1r) at (1,3) {$\pi_{\{p/\urione(\texttt{id}),\ n/\texttt{fullName}\}}$};
%
\node[nd] (lj1) at (1.75,4) {$\LJoin_{\neg\isNull(e^2)}$}; 
\node[nd] (lj2) at (3,5) {$\LJoin_{[(e^2 = e^3) \lor \isNull(e^2)] \land \neg\isNull(e^3)}$}; 
\node[nd] (lj2s) at (3,7) {$\sigma_{\neg\isNull(p)\land  \neg\isNull(n)}$}; 
\node[nd] (lj2r) at (3,6) {$\bar{\rho}_{\{e/\coalesce(e^2,e^3)\}}$}; 
\draw (u3) -- (u3r);
%
\draw (p2) -- (p2r);
%
\draw (p1) -- (p1r);
%
\draw (p1r) -- (lj1);
\draw (p2r) -- (lj1);
\draw (lj1) -- (lj2);
\draw (u3r) -- (lj2);
\draw (lj2) -- (lj2r);
\draw (lj2r) -- (lj2s);
\end{tikzpicture}}\\[6pt]
Proposition~\ref{prop:lj-nat} is, in particular, applicable if the attributes shared by $R_1$ and~$R_2$ uniquely determine tuples of $R_2$. In our running example, $\texttt{id}$ is a primary key in $\texttt{people}$,  and so we can eliminate $\neg\isNull(e^2)$ from the innermost left join, which becomes a natural left join, and then simplify the term $\textit{if}(\neg\isNull(e^2), e^2, \Null)$ in the renaming to~$e^2$  by using equivalences~\eqref{eq:if:1} and~\eqref{eq:if:2} on complex terms. Thus, we effectively remove the renaming operator we have introduced by the application of Proposition~\ref{prop:lj-nat}:\\[6pt]
\centerline{\begin{tikzpicture}[xscale=2.9,yscale=0.55]\small
%
%
\node[nd,fill=white] (u3) at (4,3) {\texttt{people}};
\node[nd,fill=white] (u3r) at (4,4) {$\pi_{\{p/\urione(\texttt{id}),\ e^3/\texttt{homeEmail}\}}$};
\node[nd,fill=white] (p2) at (2.5,2) {\texttt{people}};
\node[nd,fill=white] (p2r) at (2.5,3) {$\pi_{\{p/\urione(\texttt{id}),\ e^2/\texttt{workEmail}\}}$};
%
\node[nd,fill=white] (p1) at (1,2) {\texttt{people}};
\node[nd,fill=white] (p1r) at (1,3) {$\pi_{\{p/\urione(\texttt{id}),\ n/\texttt{fullName}\}}$};
%
\node[nd] (lj1) at (1.75,4) {$\LJoin$}; 
\node[nd] (lj2) at (3,5) {$\LJoin_{[(e^2 = e^3) \lor \isNull(e^2)] \land \neg\isNull(e^3)}$}; 
\node[nd] (lj2s) at (3,7) {$\sigma_{\neg\isNull(p)\land  \neg\isNull(n)}$}; 
\node[nd] (lj2r) at (3,6) {$\bar{\rho}_{\{e/\coalesce(e^2,e^3)\}}$}; 
\draw (u3) -- (u3r);
%
\draw (p2) -- (p2r);
%
\draw (p1) -- (p1r);
%
\draw (p1r) -- (lj1);
\draw (p2r) -- (lj1);
\draw (lj1) -- (lj2);
\draw (u3r) -- (lj2);
\draw (lj2) -- (lj2r);
\draw (lj2r) -- (lj2s);
\end{tikzpicture}}

Next, by applying Proposition~\ref{prop:lj-red} (Natural Left Join Reduction, Section~\ref{sec:lj:red}), we can replace the natural left join by an inner join and then eliminate it (because it is on the primary key \texttt{id}). So, we arrive at\\[6pt]
\centerline{\begin{tikzpicture}[xscale=2.9,yscale=0.55]\small
%
%
\node[nd,fill=white] (u3) at (4,3) {\texttt{people}};
\node[nd,fill=white] (u3r) at (4,4) {$\pi_{\{p/\urione(\texttt{id}),\ e^3/\texttt{homeEmail}\}}$};
%
\node[nd,fill=white] (p1) at (2,3) {\texttt{people}};
\node[nd,fill=white] (p1r) at (2,4) {$\pi_{\{p/\urione(\texttt{id}),\ n/\texttt{fullName}, \ e^2/\texttt{workEmail}\}}$};
%
\node[nd] (lj2) at (3,5) {$\LJoin_{[(e^2 = e^3) \lor \isNull(e^2)] \land \neg\isNull(e^3)}$}; 
\node[nd] (lj2s) at (3,7) {$\sigma_{\neg\isNull(p)\land  \neg\isNull(n)}$}; 
\node[nd] (lj2r) at (3,6) {$\bar{\rho}_{\{e/\coalesce(e^2,e^3)\}}$}; 
\draw (u3) -- (u3r);
%
%
\draw (p1) -- (p1r);
%
\draw (p1r) -- (lj2);
%
\draw (u3r) -- (lj2);
\draw (lj2) -- (lj2r);
\draw (lj2r) -- (lj2s);
\end{tikzpicture}}

To complete the running example, observe that now, by Proposition~\ref{prop:lj-nat},  the remaining left join can be replaced with a natural left join at the expense of introducing a renaming for $e^3$ with \mbox{$\textit{if}([(e^2 = e^3) \lor \isNull(e^2)] \land \neg\isNull(e^3), e^3, \Null)$}, which, by~\eqref{eq:if:1}, is equivalent to \mbox{$\textit{if}((e^2 = e^3) \lor \isNull(e^2), e^3, \Null)$}.
This renaming operation is then combined with $\bar{\rho}_{\{e / \coalesce(e^2, e^3)\}}$ to produce 
\mbox{$\textit{if}(\neg \isNull(e^2), e^2, \textit{if}((e^2 = e^3) \lor \isNull(e^2), e^3, \Null))$}. Next, we use equivalences
\begin{align}
\textit{if}(F_1 \lor F_2,  v_1, v_2)) & \ \equiv \  \textit{if}(F_1, v_1, \textit{if}(F_2, v_1, v_2)),\\
\textit{if}(F_1, \textit{if}(F_2, v_1, v_2), v_3) & \ \equiv \  \textit{if}(F_1 \land F_2, v_1, \textit{if}(F_1, v_2, v_3)),\\
\textit{if}(F_0, v_1, \textit{if}(F, v_2, v_3)) & \ \equiv \ \textit{if}(F_0, v_1, \textit{if}(F \land \neg F_0, v_2, v_3)), && \text{ if } F_0 \text{ is 2-valued},
\end{align}
to obtain $\coalesce(e^2, e^3) = \textit{if}(\neg\isNull(e^2), e^2, e^3)$; by a \emph{2-valued filter} we understand any filter that does not produce~$\varepsilon$, for example, $\neg\isNull(e^3)$. So, we obtain\\[6pt]
\centerline{\begin{tikzpicture}[xscale=2.9,yscale=0.55]\small
%
%
\node[nd,fill=white] (u3) at (4,3) {\texttt{people}};
\node[nd,fill=white] (u3r) at (4,4) {$\pi_{\{p/\urione(\texttt{id}),\ e^3/\texttt{homeEmail}\}}$};
%
\node[nd,fill=white] (p1) at (2,3) {\texttt{people}};
\node[nd,fill=white] (p1r) at (2,4) {$\pi_{\{p/\urione(\texttt{id}),\ n/\texttt{fullName}, \ e^2/\texttt{workEmail}\}}$};
%
\node[nd] (lj2) at (3,5) {$\LJoin$}; 
\node[nd] (lj2s) at (3,7) {$\sigma_{\neg\isNull(p)\land  \neg\isNull(n)}$}; 
\node[nd] (lj2r) at (3,6) {$\bar{\rho}_{\{e/\coalesce(e^2,e^3)\}}$}; 
\draw (u3) -- (u3r);
%
%
\draw (p1) -- (p1r);
%
\draw (p1r) -- (lj2);
%
\draw (u3r) -- (lj2);
\draw (lj2) -- (lj2r);
\draw (lj2r) -- (lj2s);
\end{tikzpicture}}

\bigskip

Finally, by applying Proposition~\ref{prop:lj-red} again, we can replace the natural left join by an inner join and then eliminate it:\\[0pt]
\centerline{\begin{tikzpicture}[xscale=2.9,yscale=0.55]\small
%
%
\node[nd,fill=white] (p1) at (3,3) {\texttt{people}};
\node[nd,fill=white] (p1r) at (3,4) {$\pi_{\{p/\urione(\texttt{id}),\ n/\texttt{fullName},\ e^2/\texttt{workEmail}, \ e^3/\texttt{homeEmail}\}}$};
%
\node[nd] (lj2s) at (3,6) {$\sigma_{\neg\isNull(p)\land  \neg\isNull(n)}$}; 
\node[nd] (lj2r) at (3,5) {$\bar{\rho}_{\{e/\coalesce(e^2,e^3)\}}$}; 
%
%
%
\draw (p1) -- (p1r);
%
%
\draw (p1r) -- (lj2r);
\draw (lj2r) -- (lj2s);
\end{tikzpicture}}\\ 
which can be simplified to $\pi_{\{p/\urione(\texttt{id}),\ n/\texttt{fullName},\ e/\coalesce(\texttt{workEmail},\ \texttt{homeEmail})\}} \texttt{people}$ because 
\texttt{id} and \texttt{fullName} are non-nullable in \texttt{people}, obtaining the following SQL query:
\begin{lstlisting}[language=SQLT]
SELECT fullName AS n, COALESCE(workEmail, homeEmail) AS e 
FROM people
\end{lstlisting}
(In this SQL, as well as in the subsequent variations of the example, we ignore variable~\texttt{?p} because it requires IRI construction, and we do not specify the IRI template.)

\subsection{Variation 1 on Example~\ref{ex:email-pref-simple}}\label{app:example1:discussion}

Consider now an extension of Example~\ref{ex:email-pref-simple} with another mapping for property \texttt{personalEmail}:
\begin{align*}
      \triple{\urione(\texttt{id})}{\texttt{:personalEmail}}{\texttt{homeEmail2}} & \quad\leftarrow\quad
                                   \sigma_{\notnull{\texttt{id}}  \land\notnull{\texttt{homeEmail2}}}\,\texttt{people2}, 
\end{align*}                                   
which uses another relation~\texttt{people2} with attributes \texttt{id} and \texttt{homeEmail2} (in a possibly different datasource). The additional mapping creates a union in the right-hand side of the outermost left join operation, which blocks an application of Proposition~\ref{prop:lj-nat} to this left join (a person can now have two personal e-mail addresses, and so, we cannot make the left join natural). The resulting SQL query is
\begin{lstlisting}[language=SQLT]
SELECT t1.fullName AS n, COALESCE(t1.workEmail, t2.e3) AS e 
FROM people t1 LEFT JOIN 
   (SELECT id, homeEmail AS e3 FROM people 
    UNION 
    SELECT id, homeEmail2 AS e3 FROM people2) t2
ON (t1.id = t2.id) AND 
   ((t1.workEmail = t2.e2) OR IS NULL(t1.workEmail)) AND 
   IS NOT NULL(t2.e3)
\end{lstlisting}

\newpage

\subsection{Variation 2 on Example~\ref{ex:email-pref-simple}}\label{app:example1:discussion2}

As another variant, consider again the setting of Example~\ref{ex:email-pref-simple}, where \texttt{workEmail} is a non-nullable attribute in \texttt{people} and there is a single mapping assertion given in Section~\ref{app:example1:discussion} for \texttt{:personalEmail}. Then the translated and unfolded query is as follows:\\[6pt]
\centerline{\begin{tikzpicture}[xscale=2.9,yscale=0.6]\small
\node[nd,fill=white] (u3) at (4,2) {\footnotesize\texttt{people2}};
\node[nd,fill=white] (u3s) at (4,2.9) {$\sigma_{\neg\isNull(\texttt{id})  \land\neg\isNull(\texttt{homeEmail2})}$};
\node[nd,fill=white] (u3r) at (4,3.9) {$\pi_{\{p^3/\urione(\texttt{id}),\ e^3/\texttt{homeEmail2}\}}$};
\node[nd,fill=white] (p2) at (2.5,0) {\footnotesize\texttt{people}};
\node[nd,fill=white] (p2s) at (2.5,0.9) {$\sigma_{\neg\isNull(\texttt{id})  \land \neg\isNull(\texttt{workEmail})}$};
\node[nd,fill=white] (p2r) at (2.5,1.9) {$\pi_{\{p^2/\urione(\texttt{id}),\ e^2/\texttt{workEmail}\}}$};
\node[nd,fill=white] (p1) at (1,0) {\footnotesize\texttt{people}};
\node[nd,fill=white] (p1s) at (1,0.9) {$\sigma_{\neg\isNull(\texttt{id})  \land\neg\isNull(\texttt{fullName})}$};
\node[nd,fill=white] (p1r) at (1,1.9) {$\pi_{\{p^1/\urione(\texttt{id}),\ n/\texttt{fullName}\}}$};
\node[nd] (lj1) at (1.75,3) {$\LJoin_{(p^1 = p^2)  \lor \isNull(p^1) \lor \isNull(p^2)}$}; 
\node[nd] (lj1p) at (1.75,4) {$\bar{\rho}_{\{p^4/\coalesce(p^1,p^2)\}}$};
\node[nd] (lj2) at (3,5) {$\LJoin_{ [(p^4 = p^3) \lor \isNull(p^4) \lor \isNull(p^3)]\land [(e^2 = e^3) \lor \isNull(e^2) \lor \isNull(e^3)]}$}; 
\node[nd] (lj2r) at (3,6) {$\bar{\rho}_{\{p/\coalesce(p^4,p^3), \ e/\coalesce(e^2,e^3)\}}$}; 
\draw (u3) -- (u3s);
\draw (u3s) -- (u3r);
\draw (p2) -- (p2s);
\draw (p2s) -- (p2r);
\draw (p1) -- (p1s);
\draw (p1s) -- (p1r);
\draw (p1r) -- (lj1);
\draw (p2r) -- (lj1);
\draw (lj1) -- (lj1p);
\draw (lj1p) -- (lj2);
\draw (u3r) -- (lj2);
\draw (lj2) -- (lj2r);
\end{tikzpicture}}\\[6pt]
By following the same steps as in Section~\ref{app:example1}, we arrive at:\\[6pt]
\centerline{\begin{tikzpicture}[xscale=2.9,yscale=0.6]\small
%
%
\node[nd,fill=white] (u3) at (4,3) {\texttt{people2}};
\node[nd,fill=white] (u3r) at (4,4) {$\pi_{\{p/\urione(\texttt{id}),\ e^3/\texttt{homeEmail2}\}}$};
%
\node[nd,fill=white] (p1) at (2,3) {\texttt{people}};
\node[nd,fill=white] (p1r) at (2,4) {$\pi_{\{p/\urione(\texttt{id}),\ n/\texttt{fullName}, \ e^2/\texttt{workEmail}\}}$};
%
\node[nd] (lj2) at (3,5) {$\LJoin_{[(e^2 = e^3) \lor \isNull(e^2)] \land \neg\isNull(e^3)}$}; 
\node[nd] (lj2s) at (3,7) {$\sigma_{\neg\isNull(p)\land  \neg\isNull(n)}$}; 
\node[nd] (lj2r) at (3,6) {$\bar{\rho}_{\{e/\coalesce(e^2,e^3)\}}$}; 
\draw (u3) -- (u3r);
%
%
\draw (p1) -- (p1r);
%
\draw (p1r) -- (lj2);
%
\draw (u3r) -- (lj2);
\draw (lj2) -- (lj2r);
\draw (lj2r) -- (lj2s);
\end{tikzpicture}}\\[6pt]
Now, since \texttt{workEmail} is  a non-nullable attribute in \texttt{people}, we can insert selection $\sigma_{\neg\isNull( \texttt{workEmail})}$ above \texttt{people} and then pull the filter up to the renaming operation $\bar{\rho}$, which, by~\eqref{eq:coalesce:elimination} simplifies to a projection:\\[6pt]
\centerline{\begin{tikzpicture}[xscale=2.9,yscale=0.6]\small
%
%
\node[nd,fill=white] (u3) at (4,3) {\texttt{people2}};
\node[nd,fill=white] (u3r) at (4,4) {$\pi_{\{p/\urione(\texttt{id}),\ e^3/\texttt{homeEmail2}\}}$};
%
\node[nd,fill=white] (p1) at (2,3) {\texttt{people}};
\node[nd,fill=white] (p1r) at (2,4) {$\pi_{\{p/\urione(\texttt{id}),\ n/\texttt{fullName}, \ e/\texttt{workEmail}\}}$};
%
\node[nd] (lj2) at (3,5) {$\LJoin_{[(e = e^3) \lor \isNull(e)] \land \neg\isNull(e^3)}$}; 
\node[nd] (lj2s) at (3,7) {$\sigma_{\neg\isNull(p)\land  \neg\isNull(n)\land\neg\isNull(e)}$}; 
\node[nd] (lj2r) at (3,6) {$\pi_{\{p, e, n\}}$}; 
\draw (u3) -- (u3r);
%
%
\draw (p1) -- (p1r);
%
\draw (p1r) -- (lj2);
%
\draw (u3r) -- (lj2);
\draw (lj2) -- (lj2r);
\draw (lj2r) -- (lj2s);
\end{tikzpicture}}\\[0pt]
We now apply~\eqref{eq:lj:to:simple} to eliminate the left join together with its right-hand side argument:\\[6pt]
\centerline{\begin{tikzpicture}[xscale=2.9,yscale=0.6]\small
%
%
%
\node[nd,fill=white] (p1) at (2,3) {\texttt{people}};
\node[nd,fill=white] (p1r) at (2,4) {$\pi_{\{p/\urione(\texttt{id}),\ n/\texttt{fullName}, \ e/\texttt{workEmail}\}}$};
%
\node[nd] (lj2s) at (2,5) {$\sigma_{\neg\isNull(p)\land  \neg\isNull(n)\land\neg\isNull(e)}$}; 
%
%
\draw (p1) -- (p1r);
%
\draw (p1r) -- (lj2s);
%
\end{tikzpicture}}\\[6pt]
Thus, we obtain the following SQL query:
\begin{lstlisting}[language=SQLT]
SELECT fullName AS n, workEmail AS e FROM people
\end{lstlisting}

\newpage

\subsection{Example of Join Transfer}\label{app:join:transfer}

To illustrate an application of Proposition~\ref{prop:transfer}, we need an extension of table \texttt{people} with a nullable attribute \texttt{spouseId}, which contains the \texttt{id} of the person's spouse if they are married and \texttt{NULL} otherwise. The attribute is mapped by the following additional assertion:
\begin{equation*}
      \triple{\urione(\texttt{id})}{\texttt{:hasSpouse}}{\urione(\texttt{spouseId})} \quad\leftarrow\quad
                                   \sigma_{\notnull{\texttt{id}}  \land \notnull{\texttt{spouseId}}}\texttt{people}.
\end{equation*}
Consider now the following query in SPARQL algebra: 
\begin{equation*}
\project(\leftjoin(\texttt{?p :name ?n}, \ \join(\texttt{?p :hasSpouse ?s},\ \texttt{?s :name ?sn}), \ \top), \ \{\,\texttt{?n}, \texttt{?sn} \,\}),
\end{equation*}
which is translated and unfolded into the following RA query:\\[6pt]
\centerline{\begin{tikzpicture}[xscale=2.9,yscale=0.5]\small
\node[nd,fill=white] (p3) at (4,0) {\texttt{people}};
\node[nd,fill=white] (p3s) at (4,1) {$\sigma_{\neg\isNull(\texttt{id}) \land \neg\isNull(\texttt{fullName})}$};
\node[nd,fill=white] (p3r) at (4,2) {$\pi_{\{s^3/\urione(\texttt{id}),\ sn/\texttt{fullName}\}}$};
\node[nd,fill=white] (p2) at (2.5,0) {\texttt{people}};
\node[nd,fill=white] (p2s) at (2.5,1) {$\sigma_{\neg\isNull(\texttt{id}) \land \neg \isNull(\texttt{spouseId})}$};
\node[nd,fill=white] (p2r) at (2.5,2) {$\pi_{\{p^2/\urione(\texttt{id}),\ s^2/\urione(\texttt{spouseId})\}}$};
\node[nd,fill=white] (p1) at (1,2) {\texttt{people}};
\node[nd,fill=white] (p1s) at (1,3) {$\sigma_{\neg\isNull(\texttt{id}) \land \neg\isNull(\texttt{fullName})}$};
\node[nd,fill=white] (p1r) at (1,4) {$\pi_{\{p^1/\urione(\texttt{id}),\ n/\texttt{fullName}\}}$};
\node[nd] (j) at (3.25,3.25) {$\Join_{(s^2 = s^3) \lor \isNull(s^2) \lor \isNull(s^3)}$}; 
\node[nd] (jr) at (3.25,4.25) {$\bar{\rho}_{\{s/\coalesce(s^2, s^3)\}}$};
\node[nd] (lj) at (2,5.25) {$\LJoin_{(p^1 = p^2) \lor \isNull(p^1) \lor \isNull(p^2)}$};
\node[nd] (ljr) at (2,6.25) {$\bar{\rho}_{\{p/\coalesce(p^1, p^2)\}}$};  
\node[nd] (mp) at (2,7.25) {$\pi_{\{n, sn\}}$};  
\draw (p3) -- (p3s);
\draw (p3s) -- (p3r);
\draw (p2) -- (p2s);
\draw (p2s) -- (p2r);
\draw (p1) -- (p1s);
\draw (p1s) -- (p1r);
\draw (p2r) -- (j);
\draw (p3r) -- (j);
\draw (j) -- (jr);
\draw (jr) -- (lj);
\draw (p1r) -- (lj);
\draw (lj) -- (ljr);
\draw (ljr) -- (mp);
\end{tikzpicture}}\\ 
By pulling up the filters, removing $\coalesce$ and eliminating $\neg\isNull(s^3)$ and~$\neg\isNull(p^2)$ from the filters of left joins (as described in Section~\ref{sec:comp-filter-red}), we obtain\\[6pt]
\centerline{\begin{tikzpicture}[xscale=2.9,yscale=0.5]\small
\node[nd,fill=white] (p3) at (4,1) {\texttt{people}};
\node[nd,fill=white] (p3r) at (4,2) {$\pi_{\{s^3/\urione(\texttt{id}),\ sn/\texttt{fullName}\}}$};
\node[nd,fill=white] (p2) at (2.5,1) {\texttt{people}};
\node[nd,fill=white] (p2r) at (2.5,2) {$\pi_{\{p^2/\urione(\texttt{id}),\ s/\urione(\texttt{spouseId})\}}$};
\node[nd,fill=white] (p1) at (1,3) {\texttt{people}};
\node[nd,fill=white] (p1r) at (1,4) {$\pi_{\{p/\urione(\texttt{id}),\ n/\texttt{fullName}\}}$};
\node[nd] (j) at (3.25,3.25) {$\Join_{s = s^3}$};  
\node[nd] (jr) at (3.25,4.25) {$\pi_{\{p^2, s, sn\}}$};
\node[nd] (lj) at (2,5.25) {$\LJoin_{(p = p^2) \land \neg\isNull(sn) \land  \neg\isNull(s)}$}; 
\node[nd] (ljr) at (2,6.25) {$\pi_{\{p, n, sn\}}$};  
\node[nd] (ljs) at (2,7.25) {$\sigma_{\neg\isNull(p) \land \neg \isNull(n)}$};  
\node[nd] (mp) at (2,8.25) {$\pi_{\{n, sn\}}$};  
\draw (p3) -- (p3r);
%
\draw (p2) -- (p2r);
%
\draw (p1) -- (p1r);
%
\draw (p2r) -- (j);
\draw (p3r) -- (j);
\draw (j) -- (jr);
\draw (jr) -- (lj);
\draw (p1r) -- (lj);
\draw (lj) -- (ljr);
\draw (ljr) -- (ljs);
\draw (ljs) -- (mp);
\end{tikzpicture}}\\[2pt]
We then turn the inner join into a natural inner join by renaming $s^3$ into $s$ using the inner-join counterparts of~\eqref{eq:lj:natural} and~~\eqref{eq:vacuous:nu}:\\[4pt]
%
\centerline{\begin{tikzpicture}[xscale=2.9,yscale=0.5]\small
\node[nd,fill=white] (p3) at (4,2) {\texttt{people}};
\node[nd,fill=white] (p3r) at (4,3) {$\pi_{\{s/\urione(\texttt{id}),\ sn/\texttt{fullName}\}}$};
\node[nd,fill=white] (p2) at (2.5,2) {\texttt{people}};
\node[nd,fill=white] (p2r) at (2.5,3) {$\pi_{\{p^2/\urione(\texttt{id}),\ s/\urione(\texttt{spouseId})\}}$};
\node[nd,fill=white] (p1) at (1,3) {\texttt{people}};
\node[nd,fill=white] (p1r) at (1,4) {$\pi_{\{p/\urione(\texttt{id}),\ n/\texttt{fullName}\}}$};
\node[nd] (j) at (3.25,4) {$\Join$}; 
%
\node[nd] (lj) at (2.25,5) {$\LJoin_{(p = p^2) \land \neg\isNull(sn) \land  \neg\isNull(s)}$}; 
\node[nd] (ljr) at (2.25,6) {$\pi_{\{p, n, sn\}}$};  
\node[nd] (ljs) at (2.25,7) {$\sigma_{\neg\isNull(p) \land \neg \isNull(n)}$};  
\node[nd] (mp) at (2.25,8) {$\pi_{\{n, sn\}}$};  
\draw (p3) -- (p3r);
%
\draw (p2) -- (p2r);
%
\draw (p1) -- (p1r);
%
\draw (p2r) -- (j);
\draw (p3r) -- (j);
%
\draw (j) -- (lj);
\draw (p1r) -- (lj);
\draw (lj) -- (ljr);
\draw (ljr) -- (ljs);
\draw (ljs) -- (mp);
\end{tikzpicture}}\\ 
Then we similarly turn the left join into a natural left join and push down the filter on~$\neg\isNull(sn)$ using~\eqref{eq:lj:filter-right} as well as its trivial inner-join counterpart:\\[6pt]
\centerline{\begin{tikzpicture}[xscale=2.9,yscale=0.47]\small
\node[nd,fill=white] (p3) at (4,1.25) {\texttt{people}};
\node[nd,fill=white] (p3r) at (4,2.25) {$\pi_{\{s/\urione(\texttt{id}),\ sn/\texttt{fullName}\}}$};
\node[nd,fill=white] (p3s) at (4,3.25) {$\sigma_{\neg\isNull(sn)}$};
%
\node[nd,fill=white] (p2) at (2.5,2.25) {\texttt{people}};
\node[nd,fill=white] (p2r) at (2.5,3.25) {$\pi_{\{p/\urione(\texttt{id}),\ s/\urione(\texttt{spouseId})\}}$};
%
\node[nd,fill=white] (p1) at (1,3) {\texttt{people}};
\node[nd,fill=white] (p1r) at (1,4) {$\pi_{\{p/\urione(\texttt{id}),\ n/\texttt{fullName}\}}$};
%
\node[nd] (j) at (3.35,4) {$\Join$}; 
%
\node[nd] (lj) at (2.25,5) {$\LJoin_{\neg\isNull(s)}$};
\node[nd] (ljs) at (2.25,6) {$\sigma_{\neg\isNull(p) \land \neg \isNull(n)}$};  
\node[nd] (mp) at (2.25,7) {$\pi_{\{n, sn\}}$};  
\draw (p3) -- (p3r);
\draw (p3r) -- (p3s);
\draw (p2) -- (p2r);
%
\draw (p1) -- (p1r);
\draw (p1r) -- (p1s);
\draw (p2r) -- (j);
\draw (p3s) -- (j);
%
\draw (j) -- (lj);
\draw (p1r) -- (lj);
\draw (lj) -- (ljs);
\draw (ljs) -- (mp);
\end{tikzpicture}}\\[6pt]
Observe that the inner join cannot be eliminated using the standard self-join elimination techniques because it is not on a primary (or alternate) key. 
However, by Proposition~\ref{prop:transfer},  we take $sn$ as the non-nullable attribute $w$ and obtain the following:\\[6pt]
\centerline{\begin{tikzpicture}[xscale=2.9,yscale=0.5]\small
\node[nd,fill=white] (p3) at (4,2) {\texttt{people}};
\node[nd,fill=white] (p3r) at (4,3) {$\pi_{\{s/\urione(\texttt{id}),\ sn/\texttt{fullName}\}}$};
\node[nd,fill=white] (p3s) at (4,4) {$\sigma_{\neg\isNull(sn)}$};
\node[nd,fill=white] (p2) at (2.5,2) {\texttt{people}};
\node[nd,fill=white] (p2r) at (2.5,3) {$\pi_{\{p/\urione(\texttt{id}),\ s/\urione(\texttt{spouseId})\}}$};
%
\node[nd,fill=white] (p1) at (1,2) {\texttt{people}};
\node[nd,fill=white] (p1r) at (1,3) {$\pi_{\{p/\urione(\texttt{id}),\ n/\texttt{fullName}\}}$};
%
\node[nd] (j) at (1.75,4) {$\Join$}; 
%
\node[nd] (lj) at (2.75,5.25) {$\LJoin_{\neg\isNull(s)}$};
\node[nd] (ljr) at (2.75,6.5) {$\rho^{\{s\}}_{\{ s / \textit{if}(\neg\isNull(sn), s, \Null) \}}$};  
\node[nd] (ljs) at (2.75,7.75) {$\sigma_{\neg\isNull(p) \land \neg \isNull(n)}$};  
\node[nd] (mp) at (2.75,8.75) {$\pi_{\{n, sn\}}$};  
\draw (p3) -- (p3r);
\draw (p3r) -- (p3s);
\draw (p2) -- (p2r);
%
\draw (p1) -- (p1r);
\draw (p1r) -- (p1s);
\draw (p1r) -- (j);
\draw (p2r) -- (j);
%
\draw (j) -- (lj);
\draw (p3s) -- (lj);
\draw (lj) -- (ljr);
\draw (ljr) -- (ljs);
\draw (ljs) -- (mp);
\end{tikzpicture}}\\[6pt]
Now, the inner self-join can be eliminated (because \texttt{id} is the primary key of \texttt{people}); we can also remove the $\rho$ operation (because its result is projected away) and, by~\eqref{eq:lj:filter-left:d} and \eqref{eq:lj:filter-right}, push down the filters $\neg\isNull(s)$ and $\neg \isNull(p) \land \neg\isNull(n)$ to the right- and left-hand side arguments of the left join, respectively, to obtain\\[6pt]
\centerline{\begin{tikzpicture}[xscale=2.9,yscale=0.5]\small
\node[nd,fill=white] (p3) at (3,1) {\texttt{people}};
\node[nd,fill=white] (p3r) at (3,2) {$\pi_{\{s/\urione(\texttt{id}),\ sn/\texttt{fullName}\}}$};
\node[nd,fill=white] (p3s) at (3,3) {$\sigma_{\neg \isNull(sn)\land  \neg\isNull(s)}$};
%
%
\node[nd,fill=white] (p1) at (1,1) {\texttt{people}};
\node[nd,fill=white] (p1r) at (1,2) {$\pi_{\{p/\urione(\texttt{id}),\ n/\texttt{fullName}, \ s/\urione(\texttt{spouseId})\}}$};
\node[nd,fill=white] (p1s) at (1,3) {$\sigma_{\neg\isNull(p) \land \neg \isNull(n)}$};
%
%
\node[nd] (lj) at (2,4.25) {$\LJoin$};
\node[nd] (mp) at (2,5.25) {$\pi_{\{n, sn\}}$};  
\draw (p3) -- (p3r);
\draw (p3r) -- (p3s);
%
%
\draw (p1) -- (p1r);
\draw (p1r) -- (p1s);
%
%
\draw (p1s) -- (lj);
\draw (p3s) -- (lj);
\draw (lj) -- (mp);
\end{tikzpicture}}\\[6pt]
Now, the filters can all be removed because all the respective attributes are non-nullable, which gives us the following SQL:
\begin{lstlisting}[language=SQLT]
SELECT p1.fullName AS n, p2.fullName AS sn
FROM people p1 LEFT JOIN people p2 ON p1.spouseId = p2.id
\end{lstlisting}

\newpage

\subsection{Example in Section~\ref{sec:lj:simpl}}\label{app:vertical}

We consider the following SPARQL query:
\begin{align*}
&\filter(\leftjoin(\leftjoin(\texttt{?p a :Product}, \\
&\hspace*{2em} \filter(\texttt{\{ ?p :hasReview ?r . ?r :hasLang ?l \}}, \texttt{?l} = \texttt{"en"}), \ \top),\\ 
&\hspace*{2em} \filter(\texttt{\{ ?p :hasReview ?r . ?r :hasLang ?l \}}, \texttt{?l} = \texttt{"zh"}), \ \top), \ \textit{bound}(\texttt{?r})).
\end{align*}
Note that both filters, \texttt{?l} = \texttt{"en"} and \texttt{?l} = \texttt{"zh"}, could in fact be moved to the third argument of $\leftjoin$, thus replacing $\top$.
In the database, we assume that  attribute \texttt{pid} of relation \texttt{review} is a foreign key referencing the primary key \texttt{pid} in relation \texttt{product}.
The following mapping connects the database to the ontology: 
\begin{align*}
      \triple{\urione(\texttt{pid})}{\texttt{a}}{\urione(\texttt{:Product})} &\quad\leftarrow\quad
                                   \sigma_{\notnull{\texttt{pid}}}\texttt{product},\\
      \triple{\urione(\texttt{pid})}{\texttt{:hasReview}}{\uritwo(\texttt{rid})} &\quad\leftarrow\quad
                                   \sigma_{\notnull{\texttt{pid}}\land\notnull{\texttt{rid}}}\texttt{review},\\
      \triple{\uritwo(\texttt{rid})}{\texttt{:hasLang}}{\texttt{lang}} &\quad\leftarrow\quad
                                   \sigma_{\notnull{\texttt{rid}}\land\notnull{\texttt{lang}}}\texttt{review}.
\end{align*}
We first translate the SPARQL query into SQL and apply the transformations in Section~\ref{sec:comp-filter-red} to pull up and simplify the filters, then eliminate equalities from the filters as in Section~\ref{sec:lj:nat} and eliminate inner self-joins (on the primary key \texttt{rid} of relation \texttt{review}), which results in the following (by~\eqref{eq:rem-isnull}, we can remove $\neg\isNull(l^3)$ from the condition of the outermost left join):\\[6pt]
\centerline{\begin{tikzpicture}[xscale=2.9,yscale=0.5]\small
\node[nd,fill=white] (u3) at (4,2) {\texttt{review}};
\node[nd,fill=white] (u3r) at (4,3) {$\pi_{\{p/\urione(\texttt{pid}),\ r^3/\uritwo(\texttt{rid}),\ l^3/\texttt{lang}\}}$};
\node[nd,fill=white] (p2) at (2.5,1) {\texttt{review}};
\node[nd,fill=white] (p2r) at (2.5,2) {$\pi_{\{p/\urione(\texttt{pid}),\ r^2/\uritwo(\texttt{rid}),\ l^2/\texttt{lang}\}}$};
\node[nd,fill=white] (p1) at (1,1) {\texttt{product}};
\node[nd,fill=white] (p1r) at (1,2) {$\pi_{\{p/\urione(\texttt{pid})\}}$};
\node[nd] (lj1) at (1.75,3) {$\LJoin_{\neg\isNull(r^2) \land(l^2 = \texttt{"en"})}$}; 
\node[nd] (lj2) at (3,4.25) {$\LJoin_{[(r^2 = r^3) \lor \isNull(r^2)] \land \neg\isNull(r^3) \land [(l^2 = l^3) \lor\isNull(l^2)] \land(l^3 = \texttt{"zh"})}$}; 
\node[nd] (lj2r) at (3,5.25) {$\bar{\rho}_{\{r/\coalesce(r^2,r^3), l/\coalesce(l^2,l^3)\}}$}; 
\node[nd] (lj2f) at (3,6.25) {$\sigma_{\neg\isNull(p)\land \neg\isNull(r)}$}; 
\draw (u3) -- (u3r);
%
\draw (p2) -- (p2r);
%
\draw (p1) -- (p1r);
%
\draw (p1r) -- (lj1);
\draw (p2r) -- (lj1);
\draw (lj1) -- (lj2);
%
\draw (u3r) -- (lj2);
\draw (lj2) -- (lj2r);
\draw (lj2r) -- (lj2f);
\end{tikzpicture}}\\[2pt]
Observe that Proposition~\ref{prop:lj-nat} is not applicable to the innermost left join because each product can have many (in particular, many English) reviews. 

By~\eqref{eq:lj:prop:1}, $\sigma_{\isNull(l^2) \lor (l^2 = \texttt{"en"})}$ could be added above the innermost left join and subsequently lifted to the filter of the outermost left join by using~\eqref{eq:lj:filter-left}. The resulting join condition will then contain
\begin{equation*}
 [(l^2 = l^3) \lor\isNull(l^2)] \ \ \land\ \ (l^3 = \texttt{"zh"})\textcolor{blue}{{} \ \ \land \ \ [\isNull(l^2) \lor (l^2 = \texttt{"en"})]},
\end{equation*}
where the freshly lifted-up fragment is indicated in blue.
By using distributivity of $\land$ over $\lor$ and absorption ($F \lor (F \land F') \equiv^{\scriptscriptstyle+} F$), we obtain 
\begin{align*}
& [(l^2 = l^3) \lor\isNull(l^2)]  \land(l^3 = \texttt{"zh"})\land [\isNull(l^2) \lor (l^2 = \texttt{"en"})]\\
 & \ \ \equiv^{\scriptscriptstyle+} \ \bigl([(l^2 = l^3) \land \isNull(l^2)] \lor [\isNull(l^2) \land \isNull(l^2)] \lor {} \\ 
 & \hspace*{6em}[(l^2 = l^3) \land(l^2 = \texttt{"en"})] \lor [\isNull(l^2) \land  (l^2 = \texttt{"en"})]\bigr) \ \ \land \ \ (l^3 = \texttt{"zh"})\\
&  \ \ \equiv^{\scriptscriptstyle+} \ [\isNull(l^2)  \land (l^3 = \texttt{"zh"})]  \ \  \lor\ \  [(l^2 = l^3) \land(l^2 = \texttt{"en"})  \land (l^3 = \texttt{"zh"})], 
\end{align*}
which is clearly equivalent to $\isNull(l^2)  \land (l^3 =
\texttt{"zh"})$. 
Next, we can use~\eqref{eq:lj:prop:2} to pull the weakening of the filter $\neg\isNull(l^2) \land \neg\isNull(r^2)$ through the innermost left join and attach the resulting disjunction, $[\neg\isNull(l^2) \land \neg\isNull(r^2)] \lor [\isNull(l^2) \land \isNull(r^2)]$, to the outermost left join. The resulting join condition will then contain 
\begin{multline*}
[(r^2 = r^3) \lor \isNull(r^2)] \ \ \land \ \ \neg\isNull(r^3) \ \ \land \ \ \isNull(l^2) \ \  \land (l^3 =
\texttt{"zh"}) \ \ \land{}\\ \ \ \textcolor{blue}{\bigl([\neg\isNull(l^2) \land \neg\isNull(r^2)] \lor [\isNull(l^2) \land \isNull(r^2)]\bigr)},
\end{multline*} 
where the freshly lifted-up fragment is indicated in blue.
This allows us to simplify the condition of the outermost left join to:
\begin{equation*}
\isNull(r^2) \land \isNull(l^2) \land\neg\isNull(r^3) \land (l^3 = \texttt{"zh"}).
\end{equation*}
On the other hand, we can push $\neg\isNull(r)$ through the renaming with $\coalesce$ using the following equivalence 
\begin{equation}
\sigma_{\neg\isNull(v)} \rho^{\{v^1,v^2\}}_{v/\coalesce(v^1,v^2)} R \ \ \equiv \ \ \rho^{\{v^1,v^2\}}_{v/\coalesce(v^1,v^2)} \sigma_{\neg\isNull(v^1) \lor \neg\isNull(v^2)} R.
\end{equation}
We thus obtain\\[6pt] 
\centerline{\begin{tikzpicture}[xscale=2.9,yscale=0.5]\small
\node[nd,fill=white] (u3) at (4,2) {\texttt{review}};
\node[nd,fill=white] (u3r) at (4,3) {$\pi_{\{p/\urione(\texttt{pid}),\ r^3/\uritwo(\texttt{rid}),\ l^3/\texttt{lang}\}}$};
\node[nd,fill=white] (p2) at (2.5,1) {\texttt{review}};
\node[nd,fill=white] (p2r) at (2.5,2) {$\pi_{\{p/\urione(\texttt{pid}),\ r^2/\uritwo(\texttt{rid}),\ l^2/\texttt{lang}\}}$};
\node[nd,fill=white] (p1) at (1,1) {\texttt{product}};
\node[nd,fill=white] (p1r) at (1,2) {$\pi_{\{p/\urione(\texttt{pid})\}}$};
\node[nd] (lj1) at (1.75,3) {$\LJoin_{\neg\isNull(r^2) \land(l^2 = \texttt{"en"})}$}; 
\node[nd] (lj2) at (3,4.25) {$\LJoin_{\isNull(r^2) \land \isNull(l^2) \land\neg\isNull(r^3) \land (l^3 = \texttt{"zh"})}$}; 
\node[nd] (lj2ss) at (3, 5.25) {$\sigma_{\neg\isNull(r^2) \lor\neg\isNull(r^3)}$};
\node[nd] (lj2r) at (3,6.25) {$\bar{\rho}_{\{r/\coalesce(r^2,r^3), l/\coalesce(l^2,l^3)\}}$}; 
\node[nd] (lj2f) at (3,7.25) {$\sigma_{\neg\isNull(p)}$}; 
\draw (u3) -- (u3r);
%
\draw (p2) -- (p2r);
%
\draw (p1) -- (p1r);
%
\draw (p1r) -- (lj1);
\draw (p2r) -- (lj1);
\draw (lj1) -- (lj2);
%
\draw (u3r) -- (lj2);
\draw (lj2) -- (lj2ss);
\draw (lj2ss) -- (lj2r);
\draw (lj2r) -- (lj2f);
\end{tikzpicture}}\\[2pt]
Now, since $\nullify_{\{r^3,l^3\}}(\neg\isNull(r^2)
\lor\neg\isNull(r^3))$ is inconsistent with $\isNull(r^2)$ in the
join condition of the outermost left join, by using~\eqref{eq:lj:no-minus}, we
replace the outermost left join by an outer union, thus resulting in\\[6pt] 
\centerline{\begin{tikzpicture}[xscale=2.9,yscale=0.5]\small
\draw[dashed, fill=gray!7,rounded corners=3mm] (3.32,1) -- (5.5,1) -- (5.5,6.5) -- (4.25,6.5) -- (4, 3.5) -- (3.32,3.5) -- cycle;
\node[circle,draw,thick] at (5.25,5.75) {\bf 2};
\node[nd,fill=white] (sp2) at (4.75,1.5) {\texttt{review}};
\node[nd,fill=white] (sp2r) at (4.75,2.5) {$\pi_{\{p/\urione(\texttt{pid}),\ r^2/\uritwo(\texttt{rid}),\ l^2/\texttt{lang}\}}$};
%
\node[nd,fill=white] (sp1) at (3.65,1.5) {\texttt{product}};
\node[nd,fill=white] (sp1r) at (3.65,2.5) {$\pi_{\{p/\urione(\texttt{pid})\}}$};
\node[nd] (slj1) at (4.75,4) {$\LJoin_{\neg\isNull(r^2) \land (l^2 = \texttt{"en"})}$}; 
\node[nd] (slj1s) at (4.75,5.75) {$\sigma_{\neg\isNull(r^2)}$}; 
\draw (sp2) -- (sp2r);
%
\draw (sp1) -- (sp1r);
%
\draw (sp1r) -- (slj1);
\draw (sp2r) -- (slj1);
\draw (slj1) -- (slj1s);
\node[nd,fill=white] (u3) at (3,3.5) {\texttt{review}};
\node[nd,fill=white] (u3r) at (3,4.5) {$\pi_{\{p/\urione(\texttt{pid}),\ r^3/\uritwo(\texttt{rid}),\ l^3/\texttt{lang}\}}$};
%
\node[nd,fill=white] (p2) at (2.57,1.5) {\texttt{review}};
\node[nd,fill=white] (p2r) at (2.57,2.5) {$\pi_{\{p/\urione(\texttt{pid}),\ r^2/\uritwo(\texttt{rid}),\ l^2/\texttt{lang}\}}$};
%
\node[nd,fill=white] (p1) at (1.5,1.5) {\texttt{product}};
\node[nd,fill=white] (p1r) at (1.5,2.5) {$\pi_{\{p/\urione(\texttt{id})\}}$};
\node[nd] (lj1) at (1.75,4) {$\LJoin_{\isNull(r^2)\land(l^2 = \texttt{"en"})}$}; 
\node[nd] (lj2) at (2.25,5.75) {$\Join_{\isNull(r^2) \land \isNull(l^2) \land\neg\isNull(r^3) \land (l^3 = \texttt{"zh"})}$}; 
\draw (u3) -- (u3r);
%
\draw (p2) -- (p2r);
%
\draw (p1) -- (p1r);
%
\draw (p1r) -- (lj1);
\draw (p2r) -- (lj1);
\draw (lj1) -- (lj2);
%
\draw (u3r) -- (lj2);
\node[nd] (u) at (3.5,7) {$\uplus$}; 
\node[nd] (ur) at (3.5,8) {$\bar{\rho}_{\{r/\coalesce(r^2,r^3), l/\coalesce(l^2,l^3)\}}$}; 
\node[nd] (uss) at (3.5,9) {$\sigma_{\neg\isNull(p)}$}; 
\draw (lj2) -- (u);
\draw (slj1s) -- (u);
\draw (u) -- (ur);
\draw (ur) -- (uss);
\end{tikzpicture}}\\[6pt]
Let us first focus on the second component of the outer union, which is depicted in the shaded area above. Since $\nullify_{\{r^2,l^2\}}(\neg \isNull(r^2))$ is false, by applying~\eqref{eq:lj:no-minus} to the left join in the area shaded in the diagram above, we replace the left join by another outer union, the second component of which, however, is trivially empty because $\nullify_{\{r^2,l^2\}}(\neg \isNull(r^2))\equiv^{\scriptscriptstyle+} \bot$. Therefore, the left join is effectively replaced by an inner join, whose filter can be pulled up, resulting in\\[6pt]
\centerline{\begin{tikzpicture}[xscale=2.9,yscale=0.5]\small
%
%
%
\draw[dashed, fill=gray!7,rounded corners=3mm] (3.32,1) -- (5.5,1) -- (5.5,6.5) -- (4.25,6.5) -- (4, 3.5) -- (3.32,3.5) -- cycle;
\node[circle,draw,thick] at (5.25,4.5) {\bf 2};
\node[nd,fill=white] (sp2) at (4.75,1.5) {\texttt{review}};
\node[nd,fill=white] (sp2r) at (4.75,2.5) {$\pi_{\{p/\urione(\texttt{pid}),\ r^2/\uritwo(\texttt{rid}),\ l^2/\texttt{lang}\}}$};
%
\node[nd,fill=white] (sp1) at (3.65,1.5) {\texttt{product}};
\node[nd,fill=white] (sp1r) at (3.65,2.5) {$\pi_{\{p/\urione(\texttt{pid})\}}$};
\node[nd] (slj1) at (4.75,4.5) {$\Join$}; 
\node[nd] (slj1s) at (4.75,5.75) {$\sigma_{\neg\isNull(r^2) \land (l^2 = \texttt{"en"})}$}; 
\draw (sp2) -- (sp2r);
%
\draw (sp1) -- (sp1r);
%
\draw (sp1r) -- (slj1);
\draw (sp2r) -- (slj1);
\draw (slj1) -- (slj1s);
\node[nd,fill=white] (u3) at (3,3.5) {\texttt{review}};
\node[nd,fill=white] (u3r) at (3,4.5) {$\pi_{\{p/\urione(\texttt{pid}),\ r^3/\uritwo(\texttt{rid}),\ l^3/\texttt{lang}\}}$};
%
\node[nd,fill=white] (p2) at (2.57,1.5) {\texttt{review}};
\node[nd,fill=white] (p2r) at (2.57,2.5) {$\pi_{\{p/\urione(\texttt{pid}),\ r^2/\uritwo(\texttt{rid}),\ l^2/\texttt{lang}\}}$};
%
\node[nd,fill=white] (p1) at (1.5,1.5) {\texttt{product}};
\node[nd,fill=white] (p1r) at (1.5,2.5) {$\pi_{\{p/\urione(\texttt{id})\}}$};
\node[nd] (lj1) at (1.75,4) {$\LJoin_{\isNull(r^2)\land(l^2 = \texttt{"en"})}$}; 
\node[nd] (lj2) at (2.25,5.75) {$\Join_{\isNull(r^2) \land \isNull(l^2) \land\neg\isNull(r^3) \land (l^3 = \texttt{"zh"})}$}; 
\draw (u3) -- (u3r);
%
\draw (p2) -- (p2r);
%
\draw (p1) -- (p1r);
%
\draw (p1r) -- (lj1);
\draw (p2r) -- (lj1);
\draw (lj1) -- (lj2);
%
\draw (u3r) -- (lj2);
\node[nd] (u) at (3.5,7) {$\uplus$}; 
\node[nd] (ur) at (3.5,8) {$\bar{\rho}_{\{r/\coalesce(r^2,r^3), l/\coalesce(l^2,l^3)\}}$}; 
\node[nd] (uss) at (3.5,9) {$\sigma_{\neg\isNull(p)}$}; 
\draw (lj2) -- (u);
\draw (slj1s) -- (u);
\draw (u) -- (ur);
\draw (ur) -- (uss);
\end{tikzpicture}}\\[6pt]
Next, the inner join can be eliminated because it is over a foreign key (\texttt{pid} in \texttt{review} references \texttt{pid} in \texttt{product}) and no attribute occurs only in the left-hand side argument of the join:\\[-6pt]
\centerline{\begin{tikzpicture}[xscale=2.9,yscale=0.5]\small
\draw[dashed, fill=gray!7,rounded corners=3mm] (3.8,1) -- (3.8,5) -- (3.5,6.5) -- (1.15,6.5) -- (1.15,1) -- cycle; 
\node[circle,draw,thick] at (3.6,2) {\bf 1};
%
%
\node[nd,fill=white] (sp2) at (4.75,1.5) {\texttt{review}};
\node[nd,fill=white] (sp2r) at (4.75,2.5) {$\pi_{\{p/\urione(\texttt{pid}),\ r^2/\uritwo(\texttt{rid}),\ l^2/\texttt{lang}\}}$};
%
%
\node[nd] (slj1s) at (4.75,3.5) {$\sigma_{\neg\isNull(r^2) \land (l^2 = \texttt{"en"})}$}; 
\draw (sp2) -- (sp2r);
%
%
\draw (sp2r) -- (slj1s);
\node[nd,fill=white] (u3) at (3,3.5) {\texttt{review}};
\node[nd,fill=white] (u3r) at (3,4.5) {$\pi_{\{p/\urione(\texttt{pid}),\ r^3/\uritwo(\texttt{rid}),\ l^3/\texttt{lang}\}}$};
%
\node[nd,fill=white] (p2) at (2.57,1.5) {\texttt{review}};
\node[nd,fill=white] (p2r) at (2.57,2.5) {$\pi_{\{p/\urione(\texttt{pid}),\ r^2/\uritwo(\texttt{rid}),\ l^2/\texttt{lang}\}}$};
%
\node[nd,fill=white] (p1) at (1.5,1.5) {\texttt{product}};
\node[nd,fill=white] (p1r) at (1.5,2.5) {$\pi_{\{p/\urione(\texttt{id})\}}$};
\node[nd] (lj1) at (1.75,4) {$\LJoin_{\isNull(r^2)\land(l^2 = \texttt{"en"})}$}; 
\node[nd] (lj2) at (2.25,5.75) {$\Join_{\isNull(r^2) \land \isNull(l^2) \land\neg\isNull(r^3) \land (l^3 = \texttt{"zh"})}$}; 
\draw (u3) -- (u3r);
%
\draw (p2) -- (p2r);
%
\draw (p1) -- (p1r);
%
\draw (p1r) -- (lj1);
\draw (p2r) -- (lj1);
\draw (lj1) -- (lj2);
%
\draw (u3r) -- (lj2);
\node[nd] (u) at (3.5,7) {$\uplus$}; 
\node[nd] (ur) at (3.5,8) {$\bar{\rho}_{\{r/\coalesce(r^2,r^3), l/\coalesce(l^2,l^3)\}}$}; 
\node[nd] (uss) at (3.5,9) {$\sigma_{\neg\isNull(p)}$}; 
\draw (lj2) -- (u);
\draw (slj1s) -- (u);
\draw (u) -- (ur);
\draw (ur) -- (uss);
\end{tikzpicture}}\\[6pt]
We now return to the first component of the outer union, which is in the shaded area in the diagram above. First, observe that the join condition of the inner join can be pulled up and the order of the left join and join changed using the well-known equivalence $(R_1 \LJoin_F R_2) \Join R_3 \equiv (R_1 \Join R_3) \LJoin_F R_2$  provided that $U_2 \cap U_3\subseteq U_1$, where $U_i$ are the attributes of $R_i$ (see, e.g., (6) in~\cite{GaRo97}). Thus, we obtain\\[6pt]
\centerline{\begin{tikzpicture}[xscale=2.9,yscale=0.5]\small
\node[nd,fill=white] (u3) at (2.5,1) {\texttt{review}};
\node[nd,fill=white] (u3r) at (2.5,2) {$\pi_{\{p/\urione(\texttt{id}),\ r^3/\uritwo(\texttt{rid}),\ l^3/\texttt{lang}\}}$};
%
\node[nd,fill=white] (p2) at (4.25,2) {\texttt{review}};
\node[nd,fill=white] (p2r) at (4.25,3) {$\pi_{\{p/\urione(\texttt{id}),\ r^2/\uritwo(\texttt{rid}),\ l^2/\texttt{lang}\}}$};
\node[nd,fill=white] (p1) at (1.25,1) {\texttt{product}};
\node[nd,fill=white] (p1r) at (1.25,2) {$\pi_{\{p/\urione(\texttt{id})\}}$};
%
\node[nd] (lj2) at (2,3) {$\Join$}; 
\node[nd] (lj1) at (3,4) {$\LJoin_{\neg\isNull(r^2) \land(l^2 = \texttt{"en"})}$}; 
\node[nd] (s) at (3,5) {$\sigma_{\isNull(r^2) \land \isNull(l^2) \land\neg\isNull(r^3) \land (l^3 = \texttt{"zh"})}$}; 
\draw (u3) -- (u3r);
%
\draw (p2) -- (p2r);
%
\draw (p1) -- (p1r);
%
\draw (p1r) -- (lj2);
\draw (u3r) -- (lj2);
\draw (lj2) -- (lj1);
\draw (p2r) -- (lj1);
\draw (lj1) -- (s);
\end{tikzpicture}}\\[6pt]
Now, the inner join can again be eliminated because it is over a foreign key (\texttt{pid}) and no attribute occurs only on the left:\\[6pt]
\centerline{\begin{tikzpicture}[xscale=2.9,yscale=0.5]\small
\node[nd,fill=white] (u3) at (1.75,2) {\texttt{review}};
\node[nd,fill=white] (u3r) at (1.75,3) {$\pi_{\{p/\urione(\texttt{id}),\ r^3/\uritwo(\texttt{rid}),\ l^3/\texttt{lang}\}}$};
%
\node[nd,fill=white] (p2) at (4.25,2) {\texttt{review}};
\node[nd,fill=white] (p2r) at (4.25,3) {$\pi_{\{p/\urione(\texttt{id}),\ r^2/\uritwo(\texttt{rid}),\ l^2/\texttt{lang}\}}$};
%
%
\node[nd] (lj1) at (3,4) {$\LJoin_{\neg\isNull(r^2) \land(l^2 = \texttt{"en"})}$}; 
\node[nd] (s) at (3,5) {$\sigma_{\isNull(r^2) \land \isNull(l^2) \land\neg\isNull(r^3) \land (l^3 = \texttt{"zh"})}$}; 
\draw (u3) -- (u3r);
%
\draw (p2) -- (p2r);
%
%
\draw (u3r) -- (lj1);
\draw (p2r) -- (lj1);
\draw (lj1) -- (s);
\end{tikzpicture}}\\[6pt]
Next, by~\eqref{eq:lj:filter-left:d}, we can push the $\neg\isNull(r^3) \land (l^3 = \texttt{"zh"})$ of the filter to the first component of the left join. By~\eqref{eq:lj:minus-encoding}, since the remaining part of the filter, that is, $\isNull(r^2) \land \isNull(l^2)$, is inconsistent with the left join condition $\neg\isNull(r^2) \land(l^2 = \texttt{"en"})$,  we can simplify the filter above the left join to $\isNull(r^2)$ because $r^2$ is not nullable in the right-hand side argument of the left join (and can be chosen as the $w$). Finally, by~\eqref{eq:lj:filter-right}, we push the left join condition and obtain a natural left join:\\[6pt]
\centerline{\begin{tikzpicture}[xscale=2.9,yscale=0.5]\small
\draw[dashed, fill=gray!7,rounded corners=3mm] (3.95,1) -- (3.95,6) -- (3.5,7.5) -- (1.15,7.5) -- (1.15,1) -- cycle; 
\node[circle,draw,thick] at (3.8,2) {\bf 1};
%
%
\node[nd,fill=white] (sp2) at (4.75,1.5) {\texttt{review}};
\node[nd,fill=white] (sp2r) at (4.75,2.5) {$\pi_{\{p/\urione(\texttt{pid}),\ r^2/\uritwo(\texttt{rid}),\ l^2/\texttt{lang}\}}$};
%
%
\node[nd] (slj1s) at (4.75,3.5) {$\sigma_{\neg\isNull(r^2) \land (l^2 = \texttt{"en"})}$}; 
\draw (sp2) -- (sp2r);
%
%
\draw (sp2r) -- (slj1s);
\node[nd,fill=white] (u3) at (1.9,1.5) {\texttt{review}};
\node[nd,fill=white] (u3r) at (1.9,2.5) {$\pi_{\{p/\urione(\texttt{pid}),\ r^3/\uritwo(\texttt{rid}),\ l^3/\texttt{lang}\}}$};
\node[nd] (u3s) at (1.9,3.5)  {$\sigma_{\neg\isNull(r^3) \land (l^3 = \texttt{"zh"})}$};
\node[nd,fill=white] (p2) at (3.2,2.5) {\texttt{review}};
\node[nd,fill=white] (p2r) at (3.2,3.5) {$\pi_{\{p/\urione(\texttt{pid}),\ r^2/\uritwo(\texttt{rid}),\ l^2/\texttt{lang}\}}$};
\node[nd] (p2s) at (3.2,4.5) {$\sigma_{\neg\isNull(r^2) \land (l^2 = \texttt{"en"})}$};
%
%
\node[nd] (lj1) at (2.5,5.75) {$\LJoin$}; 
\node[nd] (lj2) at (2.5,6.75) {$\sigma_{\isNull(r^2)}$}; 
\draw (u3) -- (u3r);
\draw (u3r) -- (u3s);
\draw (p2) -- (p2r);
\draw (p2r) -- (p2s);
%
%
\draw (p2s) -- (lj1);
\draw (lj1) -- (lj2);
%
\draw (u3s) -- (lj1);
\node[nd] (u) at (3.5,8) {$\uplus$}; 
\node[nd] (ur) at (3.5,9) {$\bar{\rho}_{\{r/\coalesce(r^2,r^3), l/\coalesce(l^2,l^3)\}}$}; 
\node[nd] (uss) at (3.5,10) {$\sigma_{\neg\isNull(p)}$}; 
\draw (lj2) -- (u);
\draw (slj1s) -- (u);
\draw (u) -- (ur);
\draw (ur) -- (uss);
\end{tikzpicture}}

\bigskip

We can now push the two $\coalesce$ through $\uplus$, simplify them, push down the projections and remove unnecessary padding $\mu$ in the second argument of the union. Finally, we remove $\neg\isNull()$ for all non-nullable attributes and obtain\\[6pt] 
\centerline{\begin{tikzpicture}[xscale=2.9,yscale=0.5]\small
%
%
%
\node[nd,fill=white] (sp2) at (4.75,1.5) {\texttt{review}};
\node[nd,fill=white] (sp2r) at (4.75,2.5) {$\pi_{\{p/\urione(\texttt{pid}),\ r/\uritwo(\texttt{rid}),\ l/\texttt{lang}\}}$};
%
%
\node[nd] (slj1s) at (4.75,3.5) {$\sigma_{l = \texttt{"en"}}$}; 
\draw (sp2) -- (sp2r);
%
%
\draw (sp2r) -- (slj1s);
\node[nd,fill=white] (u3) at (1.9,1.5) {\texttt{review}};
\node[nd,fill=white] (u3r) at (1.9,2.5) {$\pi_{\{p/\urione(\texttt{pid}),\ r/\uritwo(\texttt{rid}),\ l/\texttt{lang}\}}$};
\node[nd] (u3s) at (1.9,3.5)  {$\sigma_{l = \texttt{"zh"}}$};
\node[nd,fill=white] (p2) at (3.2,2.5) {\texttt{review}};
\node[nd,fill=white] (p2r) at (3.2,3.5) {$\pi_{\{p/\urione(\texttt{pid}),\ r^2/\uritwo(\texttt{rid}),\ l^2/\texttt{lang}\}}$};
\node[nd] (p2s) at (3.2,4.5) {$\sigma_{l^2 = \texttt{"en"}}$};
%
%
\node[nd] (lj1) at (2.5,5.5) {$\LJoin$}; 
\node[nd] (lj2) at (2.5,6.5) {$\sigma_{\isNull(r^2)}$}; 
\node[nd] (lj2p) at (2.5,7.5) {$\pi_{\{p, r, l\}}$}; 
\draw (u3) -- (u3r);
\draw (u3r) -- (u3s);
\draw (p2) -- (p2r);
\draw (p2r) -- (p2s);
%
%
\draw (p2s) -- (lj1);
\draw (lj1) -- (lj2);
\draw (lj2) -- (lj2p);
%
\draw (u3s) -- (lj1);
\node[nd] (u) at (3.5,8.5) {$\cup$}; 
%
\draw (lj2p) -- (u);
\draw (slj1s) -- (u);
\end{tikzpicture}}\\[3pt]
which corresponds to the following SQL:
\begin{lstlisting}[language=SQLT]
SELECT CONCAT("IRI1", pid) AS p, CONCAT("IRI2", rid) AS r, lang AS l
FROM review WHERE l = "en"
UNION ALL
SELECT CONCAT("IRI1", r1.pid) AS p, 
       CONCAT("IRI2", r1.rid) AS r, r1.lang AS l
FROM review r1 LEFT JOIN review r2 
                         ON (r1.pid = r2.pid) AND r2.lang = "en"
WHERE r1.lang = "zh" AND r2.lang IS NULL
\end{lstlisting}
where the \texttt{CONCAT} functions construct IRIs. The second component of the union can also be expressed in SQL using \texttt{NOT IN}:
\begin{lstlisting}[language=SQLT]
SELECT CONCAT("IRI1", pid) AS p, CONCAT("IRI2", rid) AS r, lang AS l
FROM review  
WHERE lang = "zh" AND
      pid NOT IN (SELECT pid FROM review WHERE lang = "en")
\end{lstlisting}

\end{document}